\documentclass[twocolumn,journal]{IEEEtran}
\pdfoutput=1

\usepackage{fancyhdr}
\usepackage{hyperref}
\usepackage{amsmath, amsthm, amssymb, graphicx}
\usepackage{xspace}
\usepackage{braket}
\usepackage[font={small}]{caption}
\usepackage{color}
\usepackage{setspace}
\usepackage{enumerate}
\usepackage{graphics}
\usepackage{cite}
\usepackage{dsfont}
\usepackage{latexsym}
\usepackage{amsmath}
\usepackage{amsfonts}
\usepackage{amssymb}
\usepackage{times}
\usepackage{sgame}
\usepackage{color}
\usepackage{verbatim}

\usepackage{enumitem}
\newlist{inparaenum}{enumerate}{2}
\setlist[inparaenum]{nosep}
\setlist[inparaenum,1]{label=\bfseries\arabic*.}
\setlist[inparaenum,2]{label=\arabic{inparaenumi}\emph{\alph*})}


\newcommand{\be}{\begin{equation}}
\newcommand{\ee}{\end{equation}}
\newcommand{\mbb}[1]{\mathbb{#1}}

\newcommand{\mcal}[1]{\mathcal{#1}}

\newcommand{\hs}{\hspace{-.5mm}}

\newcommand{\aee}{{\cal A}}
 
\newcommand{\gee}{{\cal G}}

\newcommand{\nee}{{\cal N}}

\newcommand{\argmax}[1]{\underset{#1}
{\operatorname{arg}\,\operatorname{max}}\;}
\newcommand{\argmin}[1]{\underset{#1}
{\operatorname{arg}\,\operatorname{min}}\;}

\newtheorem{theorem}{Theorem}[section]

\newtheorem{proposition}{Proposition}[section]
\newtheorem{lemma}{Lemma}[section]

\newtheorem{definition}{Definition}

\normalsize\title{\LARGE \bf
	The Impact of Complex and Informed Adversarial Behavior in Graphical Coordination Games
	\thanks{This research was supported by UCOP grant LFR-18-548175, ONR grant \#N00014-17-1-2060 and NSF grant \#ECCS-1638214. The material in this paper substantially extends the conference paper \cite{Canty2018} by providing complete proofs and novel results on dynamic policies. The current paper extends the results by fully characterizing dynamic adversarial influence.}}

\author{
	Keith Paarporn$^{\star}$, Brian Canty$^{\star}$, Philip N. Brown, Mahnoosh Alizadeh, and Jason R. Marden
	\thanks{K. Paarporn, M. Alizadeh, and J.R. Marden are with the Department of Electrical and Computer Engineering, University of California, Santa Barbara. B. Canty is with CACI International. P. N. Brown is with the Department of Computer Science at the University of Colorado, Colorado Springs. Contact: \texttt{kpaarporn@ucsb.edu, \{alizadeh,jrmarden\}@ece.ucsb.edu, brian.canty@caci.com, philip.brown@uccs.edu}.
	
	*These authors contributed equally to this work.}
}

\begin{document}
\maketitle

\begin{abstract}

How does system-level information impact the ability of an adversary to degrade performance in a networked control system? How does the complexity of an adversary's strategy affect its ability to degrade performance? This paper focuses on these questions in the context of graphical coordination games where an adversary can influence a given fraction of the agents in the system, and the agents follow log-linear learning, a well-known distributed learning algorithm.  Focusing on a class of homogeneous ring graphs of various connectivity, we begin by demonstrating that minimally connected ring graphs are the most susceptible to adversarial influence. We then proceed to characterize how both (i) the sophistication of the attack strategies (static vs dynamic) and (ii) the informational awareness about the network structure can be leveraged by an adversary to degrade system performance. Focusing on the set of adversarial policies that induce stochastically stable states, our findings demonstrate that the relative importance between sophistication and information changes depending on the the influencing power of the adversary. In particular, sophistication far outweighs informational awareness with regards to degrading system-level damage when the adversary's influence power is relatively weak.  However, the opposite is true when an adversary's influence power is more substantial.

\end{abstract}

\section{Introduction}

A networked system can be viewed as a collection of subsystems, each required to make local and independent decisions in response to available information.  The information available to each subsystem could pertain to local environmental conditions or the behavior of a selected group of neighboring agents in the system; hence, the information available to one subsystem could be vastly different than the information available to other subsystems. Regardless of the specific problem domain and informational characteristics, the underlying goal is to derive agent control policies that ensure the emergent collective behavior is desirable with respect to a system-level performance metric. 

A central focus of such systems is the design of networked control algorithms that provide strong guarantees on the quality of emergent outcomes.  A networked control algorithm can be viewed as a decision-making rule that specifies how subsystems respond to local conditions.  There are several noteworthy results in this domain ranging from  consensus and flocking \cite{Martinez2007,Olfati-Saber2007}, sensor allocation \cite{Zhu2009,Ramaswamy2016}, coordination of unmanned vehicles \cite{Jadbabaie2003}, and many others.  A common theme in all of these works is the following:  If all agents follow the prescribed decision-making rules, then the emergent behavior is both stable and desirable.  In contrast to this work, here we seek to address whether such decision-making rules are robust to adversarial interventions.

While the decentralization associated with distributed architectures is undoubtedly appealing for a host of reasons, it is important to highlight that this also introduces vulnerabilities. In particular, the decision-making process of individual subsystems can potentially be influenced by adversarial actors in the system through corrupting or augmenting the information to the subsystems.  Accordingly, in this paper we ask whether an adversary can exploit these interconnections to negatively influence the quality of the emergent collective behavior. Formal analysis of this interplay has emerged in recent years, often in the context of robust consensus, distributed optimization, and cyber-physical system security \cite{Pasqualetti_2013,Fawzi2014,Sundaram_2018}.

The focus of this paper is the susceptibility of a distributed algorithm known as \emph{log-linear learning} in networked control systems \cite{Wolpert2001,McKelvey1995,Blume1995}.  Log-linear learning has received significant attention recently in the area of distributed control, as it can often be employed to ensure that the resulting behavior is near optimal.  A representative set of examples range from control of wind farms \cite{Marden2013}, sensor networks \cite{Zhu2009,Ramaswamy2016,Lim2013,Rahili_2014,Sun_2018},  task assignment \cite{Kanakia_2016}, among others \cite{Shah2010}.  However, the susceptibility of this approach to adversarial interventions is generally unknown.



The goal of this paper is to shed light on the susceptibility of log-linear learning to adversarial interventions. To that end, we focus on a well-studied class of systems known as \emph{graphical coordination games} \cite{Kearns2001,Montanari2010}.  Graphical coordination games model strategic scenarios where agents are tasked with adopting conventions and derive benefits from coordinating with the choices of their neighbors, e.g., adoption of technology or conventions \cite{Young_2001,Montanari2010}. Regardless of the specifics of the graphical coordination game, log-linear learning is known to asymptotically achieve optimal system-level behavior. In this work, we focus on characterizing the degree to which the performance guarantees of log-linear learning algorithms can be undermined by adversarial manipulations. 

In particular, our goal is to evaluate how different adversarial features  can inflict harm on the system. How much more of a threat is an adversary that knows the underlying network structure versus one that does not? An adversary that can dynamically alter its strategy versus one that can not? We begin by stating our model to ensure that our contributions are clear.

\subsection{Model: Graphical Coordination Games}

We consider the framework of graphical coordination games where there is a collection of agents $ \mcal{N} = \{1,2,\ldots,n\}$ enmeshed in an underlying undirected network $G = (\mcal{N},\mcal{E})$ where $\mcal{E} \subseteq \mcal{N} \times \mcal{N}$ defines the inter-agent interconnections. There are two different conventions, denoted by $x$ and $y$, and each agent $i \in N$ must decide between a set of conventions  $\aee_i \subseteq \{x,y\}$. Note that if $\aee_i =  \{y\}$, this means that agent $i$ is required to select convention $y$.  The benefit agent $i$ associates with a choice $x$ or $y$ depends on how many of its network neighbors $\mcal{N}_i = \{j\in\mcal{N} : (i,j) \in \mcal{E}\}$ have selected the same convention.  More formally, given a joint action profile $a = (a_1,\ldots,a_n) \in \aee :=  \aee_1 \times \dots \aee_n$, the total benefit agent $i$ experiences is given by
\begin{equation}\label{eq:i_benefit}
    U_i(a) := \sum_{j \in \nee_i} V(a_i, a_j).
\end{equation}
where $V: \{x,y\}^2 \rightarrow \mbb{R}$ defines the per agent benefit of coordinating with a neighboring agent on a given convention.  Throughout, we consider $V$ of the following form
\begin{center}
\begin{game}{2}{2}
	& $x$ & $y$  \\
	$x$   &$1+\alpha, 1+\alpha $ & $0,0$\\
	$y$  &$0,0$ & $1,1$ \\
\end{game}
\end{center}
where $\alpha > 0$.  The \emph{system welfare} associated with the action profile  $a \in \aee$ is given by
\begin{equation} \label{eq:W}
    W(a) := \sum_{i \in \mcal{N}} U_i(a).
\end{equation}

The goal of a system operator is to assign decision-making rules for the agents such that their emergent collective behavior optimizes the system welfare, i.e., the emergent action profile is of the form
\begin{equation}
    a^{\rm opt} \in \underset{a \in \aee} {\arg \max} \ W(a).
\end{equation} 
One such algorithm that achieves this objective is \emph{log-linear learning} \cite{Marden2014,Lim2013,Marden2009,McKelvey1995,Pradelski2012,Alos-Ferrer2010}.  
Log-linear learning is a stochastic distributed algorithm that governs the evolution of agents' decisions over time. More formally, log-linear learning produces a sequence of joint action profiles $\{a(t)\}_{t= 0}^\infty$, which we also call states, determined by the following process: 

\begin{definition}[Log-Linear Learning]
Let $a(0)\in\mcal{A}$ be any action profile.  At each time $t \geq 1$, one agent $i \in \mcal{N}$ is selected uniformly at random and allowed to alter its action choice.  All other agents are required to repeat their previous action, i.e., $a_{-i}(t) = a_{-i}(t-1)$ where $a_{-i} = \{a_1, \dots, a_{i-1},a_{i+1}, \dots, a_n\}$ captures the action choice of all agents $\neq i$.  The updating agent $i$ selects any action $a_i \in \aee_i$ at time $t$ with probability 
	\begin{equation}
	    \frac{e^{\beta U_i(a_i, a_{-i}(t-1))}}{\sum_{\tilde{a}_i \in \aee_i} e^{\beta U_i(\tilde{a}_i, a_{-i}(t-1))}}
	\end{equation}
	where $\beta > 0$ is a given algorithm parameter.  Once agent $i$ selects her action, the process is repeated. 
\end{definition}

Log-linear learning induces in an ergodic process over the joint action profiles $\aee$ in any graphical coordination game of the above form. The stochastically stable states, which we express by $\text{LLL}(G, \aee, \alpha) \subseteq \aee$ is defined as the support of the limiting distribution as $\beta \rightarrow \infty$. In the context of graphical coordination games, log-linear learning ensures that 
\begin{equation}
    \text{LLL}(G, \aee, \alpha) = \underset{a \in \aee} {\arg \max} \ W(a).
\end{equation}
Note that log-linear learning guarantees that the emergent behavior optimizes the system-level objective irrespective of the graph $G$, the convention choices available to the agents $\aee$, and the value of $\alpha$.  Note that in the special case when $\aee_i = \{ x, y\}$ for all $i \in N$, then $\text{LLL}(G, \aee, \alpha) = \{ \vec{x} = \{x, \dots, x \}\}$ is the all $x$ convention.  For alternative choices of $\aee$, the action profiles that optimize system welfare is not as straightforward.  

\subsection{Models of Adversarial Interventions}

In this paper we consider an adversary seeking to influence the decision-making process of log-linear learning by strategically integrating $S=\{1,\dots, |S|\}$ adversarial nodes into the system. Each of the adversarial nodes $s \in S$ will be integrated into the network through a connection to a unique single agent $i \in \nee$ that the adversarial node is tasked with influencing though a choice $a_s = \{x\}$ or $a_s = \{y\}$.  Let $S_x, S_y \subseteq \nee$, $|S_x|+|S_y| \leq |S|$, denote the set of agents that are being influenced by an adversary promoting $\{x\}$ and $\{y\}$ respectively.  Given $S_x$ and $S_y$, the influenced utility of an agent $i \in \nee$ is of the form 
\begin{equation}\label{eq:perceived_u}
\tilde{U}_i(a;S_x,S_y)\!:=\!\left\{\hs\hs\hs\begin{array}{lcl}
U_i(a) + V(a_i, x) & & \hs\text{if} \ i \in S_x \\
U_i(a) + V(a_i, y) & & \hs\text{if} \ i \in S_y \\
U_i(a)  & & \hs\text{else}
\end{array}\right.
\end{equation}
In words, an agent $i\in S_y$ (resp. $i\in S_x$) experiences the usual benefits from its neighbors in $\nee_i$, plus an additional utility of $1$ if $a_i=y$ (resp. $1+\alpha$ if $a_i=x$). While the adversarial nodes $S$ do not directly contribute to the system-level objective as defined in \eqref{eq:W}, they modify the network agents' utility functions, which invariably influence the resulting asymptotic behavior associated with log-linear learning. We now denote by $\text{LLL}(G, \aee, \alpha, \pi)$ the (possibly modified) stochastically stable states, where $\pi$ defines the process, or \emph{policy}, through which $S_x$ and $S_y$ are chosen. Technically speaking, the sets $S_x(t),S_y(t)$ are drawn from the distribution $\pi(t)$. The performance degradation associated with the adversarial policy $\pi$ is measured by
\begin{equation}\label{eq:eff}
    \eta(G,\aee,\alpha,\pi): = \min_{a \in \text{LLL}(G, \aee, \alpha, \pi)} \left\{\frac{ W(a)}{W(a^{\rm opt})} \right \} \geq 0.
\end{equation}
We will focus on graphical coordination games where the agents have full choice of conventions, i.e., $\aee_i = \{x,y\}$ for all $i \in N$.  For that setting, we will omit highlighting the dependence of $\aee$ in the definition of $\eta(\cdot)$ and $\text{LLL}(\cdot)$, i.e. we will instead write $\eta(G,\alpha,\pi)$ and $\text{LLL}(G,\alpha,\pi)$.

\subsection{Summary of Contributions}

The focus of this manuscript is on characterizing the susceptibility of log-learning learning to adversarial interventions in networked coordination games.  In particular, our goal is to identify the salient features of the worst-case adversarial policies.  Specifically, we focus on identifying the importance of the following two attributes: 

    \noindent\textbf{-} \textbf{Informational Awareness:} Does the adversary know the network structure? 
    
    \noindent\textbf{-} \textbf{Strategic Sophistication:} Can the adversarial nodes dynamically alter their location and convention choice over time?

The above attributes define four classes of adversarial policies, which we represent by $\{\Pi_{\text{I,D}},\Pi_{\text{I}},\Pi_{\text{D}},\Pi\}$, where the subscript $I$ denotes informationally aware and the subscript $D$ denotes dynamic adversarial policies. The absence of a subscript distinction means the negation.  For example, $\Pi_{\text{I,D}}$ denotes the set of adversarial policies that are dynamic and can utilize information about the network structure.  On the other hand, $\Pi$ denotes the set of adversarial policies that are static and agnostic to network structure.  By dynamic, we mean that the adversary can alter its behavior based on the current network state. That is, we consider stationary policies\footnote{In this paper, we restrict attention to adversarial policies that induce stochastically stable states -- in particular, static policies and dynamic policies that are stationary. It will be of interest in future work to investigate other types of dynamic policies that may not guarantee a SSS is induced.} $\{S_x(a(t)), S_y(a(t))\}_{t = 1, 2, \dots}$. A static policy does not allow this flexibility: $S_x(a(t)) = S_x$ and $S_y(a(t)) = S_y$ $\forall t$.

Our first set of main results identify the most vulnerable graph structures. Focusing on a class of  homogeneous ring graphs where an adversary can influence at most $\gamma \cdot n$ agents, where $\gamma \in [0,1]$, we demonstrate that the most susceptible, i.e.,  graphs that lead to the lowest efficiency as defined in \eqref{eq:eff}, are minimally connected ring graphs. We demonstrate this over the set of adversarial policies $\Pi$ and $\Pi_{\text{D}}$ (Theorems \ref{theorem:SU_opt} and \ref{theorem:DU_opt}).  This matches intuition as the graph with the fewest internal edges are in fact the most susceptible to adversarial interference.

Our second set of results focus exclusively on these ring graphs and seek to identify how information regarding the network structure can be exploited by the adversary.  In doing so, we characterize the tight worst-case performance guarantees as in \eqref{eq:eff} over policies belonging to $\Pi_{\text{I}}$ and $\Pi_{\text{I,D}}$ (Theorems \ref{theorem:SI_opt} and \ref{theorem:MI_opt}).  Figure~\ref{fig:plots} highlights an instance of worst-case performance guarantees for all four types of policies $\{\Pi_{\text{I,D}},\Pi_{\text{I}},\Pi_{\text{D}},\Pi\}$ when $\alpha = 0.5$.  As expected, the adversary leverages information and sophistication to most effectively degrade performance guarantees.  However, the regimes where each of these attributes is most valuable is not so predictable.  When an adversary has limited strength, i.e., $\gamma < 0.5$, sophistication is far more valuable than informational awareness to the adversary.  That is, the best adversarial policy in $\Pi_D$ significantly outperforms the best adversarial policy in $\Pi_I$.  When an adversary has more substantial strength, i.e., $\gamma > 0.5$, the opposite is true. We formalize these conclusions in Theorem \ref{theorem:comparison}. Theorem \ref{theorem:sat} highlights the performance differences between static and dynamic policies. In particular, dynamic policies can achieve the same performance as static ones using fewer adversarial nodes, but such performance saturates above a threshold budget. 

We provide proofs in Sections \ref{sec:static_analysis} (static adversaries), \ref{sec:dynamic_analysis} (dynamic adversaries), and the Appendix (Theorems \ref{theorem:comparison} and \ref{theorem:sat}).


\begin{figure}[t]
	\centering
	\includegraphics[scale=.34]{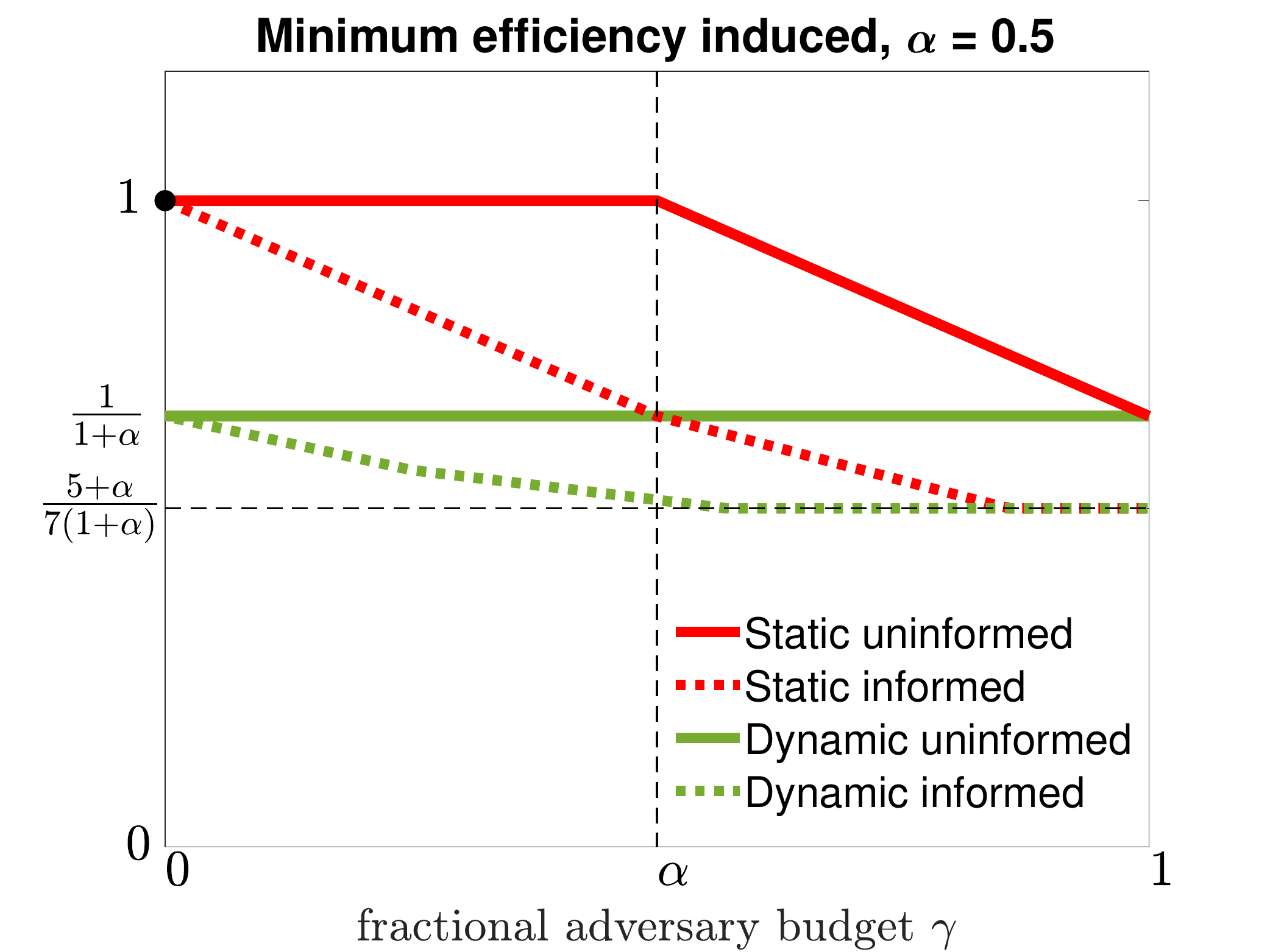}
	\caption{ \small  This figure highlights the interplay between an adversary's informational awareness (informed vs uninformed), strategic sophistication (static vs dynamic), budget, and the minimum efficiency it can induce on the system. The green and red lines characterize minimum efficiencies induced from four different adversarial models on ring networks of sufficiently large size, as a function of fractional budget $\gamma \in [0,1]$ (the fraction of agents the adversary can influence). At $\gamma = 0$, neither adversarial model can induce any damage on efficiency (black circle).  For low budgets (i.e. $\gamma < 0.5$), strategic sophistication is more valuable than having system-level information about the network. The converse holds true for higher budgets (i.e. $\gamma > 0.5$): system-level information is more valuable than the ability to implement dynamic policies.    }
	\label{fig:plots}
\end{figure}

\subsection{Related Work}

Previous work has studied to what extent networked distributed algorithms, designed to operate in the absence of adversarial interference, are susceptible to such influence \cite{Lamport_1982,Pasqualetti_2012,LeBlanc_2013,Sundaram_2018,Jaleel_2019,Su_2016}. For example, distributed multi-agent optimization algorithms are shown to easily be compromised by adversarial behaviors \cite{Su_2016,Sundaram_2018}. Indeed, there are fundamental limitations to these algorithms and their variants. Such algorithms cannot  perform optimally in the absence of adversaries as well as be resilient to adversarial attacks at the same time \cite{Sundaram_2018,Su_2016,Paarporn_2019CDC}.

How emergent behavior associated with game-theoretic learning algorithms, such as log-linear learning, could be influenced by adversarial nodes was initially studied in \cite{Borowski2015,Brown2018TCNS}. The focus centered on how easily adversarial nodes could steer agents towards an inefficient Nash equilibrium. In this paper, we instead focus on an adversary seeking to minimize system-level performance.

\section{Analysis of Susceptible Graphs}

This section focuses on identifying which graph structures are most susceptible to adversarial influence.  To that end, we will focus on a class of graphs that we term $k$-connected ring graphs, for $k \in \{1,\ldots,\lfloor n/2 \rfloor\}$. A graph $G=(\nee,\mcal{E})$ is a $k$-connected ring graph if $\nee_i = \{i-k, \dots, i-1, i+1, \dots, i+k\}$ for each agent $i \in \mathcal{N}$, where addition and subtraction are both modulo $n$.  Note that when $k=1$ we have the usual ring graph and when $k = \lfloor n/2 \rfloor$ we have the complete graph. Let us denote $\mcal{G}_n^k$ as the set of all $k$-connected ring graphs of size $n$. The following Theorems outline the degradation in performance attainable through admissible adversarial policies belonging to $\Pi$ and $\Pi_{\text{D}}$ -- static and dynamic uninformed adversaries, respectively.

\begin{theorem}\label{theorem:SU_opt}
    Consider the class of network coordination games where (i)\footnote{Values of $\alpha \geq 1$ are not considered here. If $\alpha \geq 1$, then a single $x$ link is valued as much or higher than two $y$ links. If this is the case, no $y$ agents can be induced in the stochastically stable state under any adversarial policy on ring graphs, and no damage can be inflicted.} $\alpha \in [0,1)$, and (ii) an admissible adversarial policy can influence at most a fraction $\gamma \in [0,1]$ of agents in the network. Recall $\Pi(G,\gamma)$ is the set of admissible static adversarial policies that are agnostic about the network structure. Then,
	\begin{equation}\label{eq:SU_OPT}
	    \lim_{n\rightarrow \infty}\inf_{\substack{\pi \in \Pi(G,\gamma) \\ G \in \mathcal{G}_n^k}} \!\! \eta(G,\alpha,\pi) =
		\begin{cases}
			1, \!\!&\text{if } \gamma < k\alpha \\
			\frac{(1 - (k-1)\alpha) - \alpha \gamma}{(1+\alpha)(1-k\alpha)}, &\text{if } \gamma \geq k\alpha
		\end{cases}.
	\end{equation}
\end{theorem}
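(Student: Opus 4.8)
The plan is to characterize the stochastically stable states (SSS) under worst-case static uninformed policies, then compute the resulting welfare ratio. Since log-linear learning selects the welfare-maximizing configuration of the *influenced* game (by the potential-game property of these coordination games, where $\tilde U_i$ admits a potential function), the key is to understand which action profiles minimize $W$ while maximizing the influenced potential. The adversary here is uninformed, so it cannot target specific nodes based on structure; since all $k$-connected ring graphs are vertex-transitive, an uninformed adversary placing $\gamma n$ influencers promoting $y$ effectively chooses a set of agents that, in the worst case over graphs, can be taken to be a contiguous block. The limit $n\to\infty$ is what lets me ignore boundary/integrality effects and treat $\gamma$ as a continuous fraction.

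**First I would** set up the potential function $\Phi(a) = \sum_{(i,j)\in\mathcal{E}} V(a_i,a_j) + \sum_{i\in S_y} V(a_i,y) + \sum_{i\in S_x}V(a_i,x)$ and note that $\text{LLL}$ returns its maximizers. **Next I would** compare the all-$x$ profile $\vec x$ against profiles containing a block of $y$-agents. For a contiguous region of $m$ consecutive agents switched to $y$ in a $k$-connected ring, I would tally the potential change: switching a block to $y$ loses the internal $x$-coordination value $(1+\alpha)$ on edges but gains $y$-coordination value $1$ on internal edges, loses the $\alpha$-weighted external adversarial bonus structure, and gains $+1$ per $y$-influenced agent. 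The crucial computation is the marginal potential of extending the $y$-block by one agent: because the graph is $k$-connected, each interior boundary involves $k$ cross-edges, giving a threshold condition that balances the $\alpha$-bonus of $x$-edges against the adversarial gain. This is where the $\gamma < k\alpha$ versus $\gamma \ge k\alpha$ dichotomy emerges — below threshold the adversary cannot tip even a single agent to $y$ in the SSS (efficiency $1$), above threshold a $y$-block of determinable size appears.

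**The main obstacle I expect** is proving that the worst-case profile is indeed a single contiguous block rather than several scattered blocks, and proving the matching lower bound (that no static uninformed policy does better, i.e. that the stated value is genuinely the infimum). For the upper bound on damage I exhibit the block policy and compute its induced welfare; for the lower bound I must argue that \emph{any} placement of $\gamma n$ influencers, against the adversary-favorable graph in $\mathcal{G}_n^k$, cannot force welfare below the claimed ratio. I would argue the contiguity of the worst block via a surface-area/boundary argument: scattered $y$-blocks incur more $x$--$y$ boundary edges (each costing potential relative to coordination), so merging blocks is always weakly better for the adversary, and concavity/linearity of the per-block welfare in the limit makes one block optimal. **Finally I would** assemble the welfare accounting: with $W(a^{\rm opt}) = W(\vec x) = n\cdot 2k(1+\alpha)$ and the $y$-block contributing diminished coordination value, simplify the ratio to the stated $\frac{(1-(k-1)\alpha)-\alpha\gamma}{(1+\alpha)(1-k\alpha)}$, with the algebraic bookkeeping of boundary edges between the $x$-region and $y$-region being the one routine-but-delicate step.
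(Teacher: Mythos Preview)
Your proposal rests on a misreading of what an \emph{uninformed} static policy is in this paper. You treat the adversary as if it can choose the placement of its $\gamma n$ influencers (and then invoke vertex-transitivity to say any placement is as good as a contiguous block). But the paper's model is the opposite: an uninformed adversary specifies only the \emph{numbers} $|S_x|,|S_y|$, and its guaranteed performance is the \emph{minimal} damage over all placements of those numbers. In other words, for a fixed uninformed $\pi$, the relevant efficiency is a \emph{maximum} over allocations, not a minimum. The infimum in the theorem is then over the adversary's choice of numbers. Your minimax is therefore in the wrong order, and the analysis you sketch would actually compute something closer to the \emph{informed} adversary's performance.

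This reversal changes everything downstream. Your surface-area claim (``merging blocks is weakly better for the adversary'') is in fact backwards: with only $y$-influencers and $\gamma>k\alpha$, spreading them uniformly at density $\gamma$ converts the \emph{entire} ring to $y$ (efficiency $\tfrac{1}{1+\alpha}$), which is strictly more damage than concentrating them on a single block of $\gamma n$ agents. The paper's proof exploits exactly this: since nature places the adversaries to \emph{mitigate} damage, the damage-minimizing allocation packs adversaries densely on a contiguous fraction $f$ (so only that fraction converts) and spreads the remaining $\gamma-f$ at sub-threshold density $k\alpha$ over the rest (so nothing else converts). Solving $f + k\alpha(1-f)=\gamma$ gives $f=\tfrac{\gamma-k\alpha}{1-k\alpha}$, and the resulting welfare ratio is the stated formula. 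For $\gamma<k\alpha$, nature can spread all adversaries below threshold everywhere, so $\vec{x}$ remains the unique SSS and efficiency is $1$. Your single-block computation would instead yield $1-\tfrac{\gamma\alpha}{1+\alpha}$, which does not match the theorem.
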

\begin{theorem}\label{theorem:DU_opt}
    Consider $\Pi_{\text{D}}(G,\gamma)$, the set of admissible dynamic adversarial policies that are agnostic about the network structure on any graph $G \in \mathcal{G}_n^k$. Then for $\alpha \in [0,1)$ and  $\gamma \in [0,1]$,
	\begin{equation}\label{eq:DU_OPT}
		\inf_{\pi \in \Pi_{\text{D}}(G,\gamma)} \eta(G,\alpha,\pi) \geq
		\begin{cases}
			\frac{1}{1+\alpha}, &\text{if } \alpha < \frac{1}{k}, \gamma \neq 0  \\
			1, &\text{if } \alpha \geq \frac{1}{k} \text{ or } \gamma=0
		\end{cases}.
	\end{equation}
	Furthermore, the limit of efficiency as the size of $G$ grows ($n\rightarrow \infty$) equals the lower bound.
\end{theorem}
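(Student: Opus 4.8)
The plan is to analyze the stochastically stable states through the resistance--tree characterization of stochastically stable states. For a fixed stationary policy $\pi$, log--linear learning together with the (state--dependent) influence sets induces a regular perturbed Markov chain on $\mathcal{A}$; as $\beta\to\infty$ a single--agent flip of $i$ from $a_i$ to the opposite convention $\bar a_i$ in state $a$ carries resistance $\max\{0,\ \tilde U_i(a_i;a_{-i})-\tilde U_i(\bar a_i;a_{-i})\}$, and the stochastically stable states are exactly the roots of minimum--resistance spanning trees. Since $x$--influence can only raise welfare, I first argue that a welfare--minimizing adversary may be taken to promote $y$ only, so that each influenced agent's $x$-versus-$y$ utility gap is shifted by at most $-1$. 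The single computation that organizes everything is the resistance of flipping a ``boundary'' agent with $k$ neighbors of each convention: under one unit of $y$--influence it equals $\max\{0,k\alpha-1\}$ toward $y$ and $\max\{0,1-k\alpha\}$ toward $x$. This vanishes on exactly one side according to whether $\alpha<1/k$ or $\alpha\geq 1/k$, which is the source of the dichotomy.

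With this in hand I would split on the threshold. When $\alpha\geq 1/k$ (or $\gamma=0$) every step that grows a $y$--region costs resistance $k\alpha-1\geq 0$ even under maximal influence, strictly positive once $\alpha>1/k$, and nucleating a $y$--region from $\vec{x}$ costs strictly more; hence leaving the all--$x$ state requires total resistance $\Theta(n)$, whereas any incursion of $y$ is erased at zero resistance as soon as an agent regains an $x$--majority. A radius/co--radius (or direct spanning--tree) comparison then shows $\vec{x}$ is the unique minimizer of stochastic potential, so $\mathrm{LLL}(G,\alpha,\pi)=\{\vec{x}\}$ and $\eta=1$, giving the second branch of the bound (the equality case $\alpha=1/k$ is settled by a finer comparison of nucleation constants).

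When $\alpha<1/k$ and $\gamma>0$, I would exhibit a structure--agnostic dynamic policy for which $\vec{y}$ is stochastically stable, which forces $\eta\to W(\vec{y})/W(\vec{x})=1/(1+\alpha)$ since $W(\vec{x})=2kn(1+\alpha)$ and $W(\vec{y})=2kn$. The policy places its $y$--influence so as to ride the interface of the current $y$--region read off from the action profile: because $\alpha<1/k$, an influenced boundary flip toward $y$ has resistance $\max\{0,k\alpha-1\}=0$, so once a $y$--block of width $k$ is seeded it spreads at zero resistance and the resistance from $\vec{x}$ into $\vec{y}$ is only $O(1)$, namely the $O(k)$ nucleation flips. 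Conversely the same policy defends $\vec{y}$: each boundary flip toward $x$ now costs $1-k\alpha>0$, so eroding $\vec{y}$ back to $\vec{x}$ costs $\Theta(n)$. As only $O(1)$ agents on the relevant interface need influence at any instant, a budget of $\gamma n\geq 1$ suffices for all large $n$, and the entry--$O(1)$ versus exit--$\Theta(n)$ asymmetry makes the stochastic potential of $\vec{y}$ strictly smallest, establishing that the limit equals the bound.

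The remaining and hardest direction is the matching lower bound $\eta\geq 1/(1+\alpha)$: that no admissible uninformed policy can stabilize a state of welfare below $W(\vec{y})$, and I expect this to be the main obstacle. Here the structure--agnostic restriction is essential, since the trap an informed adversary could use---simultaneously pinning a fixed $x$--block with $x$--influence while containing it with $y$--influence on its exact boundary---is unavailable without knowledge of the adjacencies. The plan is to make this precise by symmetrizing: a policy that cannot resolve the vertices of the vertex--transitive circulant ring may be averaged over the cyclic automorphism group without decreasing the induced damage, and the resulting symmetric perturbed chain has a cyclically invariant stochastically stable set. I would then show that among invariant sets only the monochromatic states $\{\vec{x}\}$ and $\{\vec{y}\}$ can minimize stochastic potential: every mixed configuration admits a boundary agent whose welfare--increasing best response a symmetric $y$--promoting policy cannot selectively block at zero resistance, so its stochastic potential strictly exceeds that of $\vec{y}$. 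Consequently the worst stochastically stable state is $\vec{y}$, of efficiency $1/(1+\alpha)$; the delicate point throughout is controlling the full resistance landscape of a non--reversible chain rather than a single escape path.
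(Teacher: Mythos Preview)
Your achievability argument for the case $\alpha<1/k$ has a genuine gap. The interface-riding policy you propose---placing $y$-influence on the boundary agents of the current $y$-block---requires knowing which agents are adjacent to which, i.e., requires the edge set $\mathcal{E}$. This is precisely the information an \emph{uninformed} adversary lacks: a policy in $\Pi_{\text{D}}$ may depend on the action profile $a(t)$ but not on the graph, so from $a(t)$ alone it cannot tell which $x$-agents neighbor a $y$-agent. The policy you describe therefore lies in $\Pi_{\text{I,D}}$, not $\Pi_{\text{D}}$, and cannot be used to certify tightness here.

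The paper's substitute is worth internalizing: instead of targeting the interface, the adversary \emph{randomizes} its placement so that every agent lies in $S_y(t)$ with positive probability infinitely often (together with a small deterministically influenced block). Randomization is structure-agnostic, yet in the resistance calculus it is just as good as targeting, because the resistance of a transition is computed over all realizations that occur with positive probability. Thus a boundary agent that is influenced with any positive probability contributes its influenced flip cost $\max\{0,k\alpha-1\}=0$ to the path resistance. This recovers exactly the $O(1)$ nucleation cost from $\vec{x}$ to $\vec{y}$ that you compute, without ever locating the interface. Under this randomized policy the only recurrent classes are $\vec{x}$ and $\vec{y}$, and a direct comparison of $\rho_{\vec{x}\vec{y}}$ versus $\rho_{\vec{y}\vec{x}}$ finishes the tightness claim.

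Your lower-bound plan via symmetrization over the cyclic automorphism group is a different route from the paper, which instead argues by exhaustion: it classifies uninformed policies by which of the three properties (no $x$-adversaries; every agent influenced infinitely often; a fixed deterministic block) they violate, and shows in each case that the stochastically stable set is contained in $\{\vec{x},\vec{y}\}$ or else contains a mixed recurrent class whose worst profile still has efficiency at least $1/(1+\alpha)$. Your symmetrization idea is reasonable on a vertex-transitive graph, but the assertion that ``among invariant sets only the monochromatic states can minimize stochastic potential'' needs a real argument: you must rule out that a cyclically invariant \emph{set} of heterogeneous profiles carries low stochastic potential, and a single boundary-flip observation does not by itself control the full tree resistance of such a set.
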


There are several interesting things to note from Theorems \ref{theorem:SU_opt} and \ref{theorem:DU_opt}.  If $\alpha \geq 1/k$, neither classes of adversarial policies $\Pi_{\text{D}}(G,\gamma)$ nor $\Pi(G,\gamma)$ can inflict any damage on the system regardless of the budget $\gamma$. Second, the achievable efficiency of a dynamic uninformed adversary, i.e., restriction to $\Pi_{\text{D}}(\gamma)$, is constant for $\gamma \in (0,1]$. Third, the induced efficiency from a static uninformed adversary, i.e., restriction to $\Pi(G,\gamma)$, is  decreasing in $\gamma$. Lastly, by tightness we know that for any $k \geq 1$ and $\gamma \in (0,1]$ we have 
\begin{align*}
		\inf_{G^1 \in \gee^1, \pi \in \Pi} \eta(G^1,\alpha,\pi) \leq
        \inf_{G^k \in \gee^k, \pi \in \Pi} \eta(G^k,\alpha,\pi),     
\end{align*}
and an identical relation holds for policies in $\Pi_{\text{D}}$. Here, we omit highlighting the dependence on $\Pi(\cdot)$ for brevity. Hence, ring graphs ($k=1$) are the graphs that are most susceptible to adversarial interference.  

\section{The Impact of Information on Ring Graphs}


The previous section demonstrated that ring graphs are the most susceptible to adversarial influence.  In this section we explicitly characterize the impact informational awareness has on the potential degradation by admissible adversarial policies. We focus this analysis exclusively on ring graphs.   

\subsection{Static Informed Adversarial Policies}

This section focuses on the potential degradation of the adversarial policies in the set $\Pi_{\text{I}}$.  By knowing the graph's structure, the adversary can explicitly target specific agents $S_x, S_y \subseteq \nee$ in the network. An adversarial policy $\pi \in \Pi_{\text{I}}$ defines the process by which these agents are selected.  The resulting policies is static is the sense that for all times $t \geq 1$, $S_x(t), S_y(t) = S_x, S_y$.  The following Theorem characterizes the potential degradation caused by such adversarial policies.  

\begin{theorem}\label{theorem:SI_opt}
    Consider the class of network coordination games where (i) $\alpha \in [0,1)$ and (ii) $G \in \mcal{G}_n^1$. Given a fractional adversarial budget consider $\Pi_{\text{I}}(G,\gamma)$, the set of admissible adversarial policies that are static, but can depend on the network structure. Then for $\gamma \in (0,1]$,
    %
	\begin{equation}\label{eq:SI_opt}
		\begin{aligned}
		    &\inf_{\pi \in \Pi_{\text{I}}(G,\gamma)} \eta(G,\alpha,\pi) \geq \\
			&\underset{\ell_{x_1},\ell_{x_2},\ell_{y_1},\ell_{y_2}\in\mathbb{Z}_{\geq 0}}{\inf}  \frac{1}{1+\alpha}\!\left(\!1\!+\!\frac{(2+\alpha)(\frac{s_1}{s_2}\!-\!1)+\alpha(\ell_{x_1}\!-\!\frac{s_1}{s_2}\ell_{x_2})}{\ell_{x_1}\!+\!\ell_{y_1}\!-\!\frac{s_1}{s_2}(\ell_{x_2}\!+\!\ell_{y_2})} \right), \\
			&\text{subject to: for } j = 1,2, \\
			&\ell_{x_j} \geq 2, \quad \ell_{y_j} \geq \left\lceil \frac{2+\alpha}{1-\alpha} \right\rceil  \\
			&s_{j}\!=\!\gamma (\ell_{x_j}\!+\!\ell_{y_j})\!-\!\left\lceil\alpha(\ell_{y_j}\!+\!1)\right\rceil\!-\!2 -\!\left\lceil\!\frac{\left[2-\alpha(\ell_{x_j}-1)\right]_+}{1+\alpha}\!\right\rceil \\
			&s_1 = 0 \text{ with } \ell_{x_2},\ell_{y_2} = 0,  \quad \text{or } s_1 > 0 \text{ and } s_2 < 0 
		\end{aligned}. \tag{SI-OPT}
	\end{equation}
	For $\gamma = 0$, the efficiency for any graph is 1. Here, we denote $[z]_+ = \max\{z,0\}$ for any $z\in\mbb{R}$. Furthermore, the limit of efficiency as the size of $G$ grows ($n\rightarrow \infty$) equals the lower bound \eqref{eq:SI_opt}.
\end{theorem}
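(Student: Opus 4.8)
The plan is to exploit the potential-game structure of log-linear learning: reduce the computation of $\text{LLL}(G,\alpha,\pi)$ to a maximization of a modified potential, and then recast the adversary's choice of the fixed sets $(S_x,S_y)$ as a combinatorial optimization over the arc structure of the ring. First I would recall that the unperturbed coordination game admits the exact potential $\Phi_0(a)=\sum_{\{i,j\}\in\mcal{E}}V(a_i,a_j)$, and that a static policy placing fixed $S_x,S_y$ makes the modified game an exact potential game with potential $\tilde{\Phi}(a)=\Phi_0(a)+\sum_{i\in S_x}V(a_i,x)+\sum_{i\in S_y}V(a_i,y)$, since each adversarial term only depends on the influenced agent's own action. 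Because log-linear learning concentrates on potential maximizers as $\beta\to\infty$, $\text{LLL}(G,\alpha,\pi)=\arg\max_a\tilde{\Phi}(a)$. On a ring every profile decomposes into maximal monochromatic arcs; if it has $N_y$ agents playing $y$ and $b$ miscoordinated boundary edges, a direct edge count gives $W(a)=2(1+\alpha)n-2\alpha N_y-(2+\alpha)b$, hence $\eta=1-\tfrac{\alpha N_y}{(1+\alpha)n}-\tfrac{(2+\alpha)b}{2(1+\alpha)n}$ relative to the all-$x$ optimum $W(a^{\rm opt})=2(1+\alpha)n$. Minimizing $\eta$ thus amounts to maximizing $2\alpha N_y+(2+\alpha)b$, i.e.\ creating as many $y$-agents and arc boundaries as the budget allows.

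Next I would determine the cheapest way to \emph{enforce} a prescribed arc pattern as a global maximizer of $\tilde{\Phi}$. The binding comparisons are whole-arc flips. Flipping a $y$-arc of length $\ell_y$ supported by $p$ nodes of $S_y$ to $x$ changes $\tilde{\Phi}$ by $(\ell_y-1)\alpha+2(1+\alpha)-p$, so optimality forces $p\geq\alpha(\ell_y+1)+2$; flipping an $x$-arc of length $\ell_x$ supported by $q$ nodes of $S_x$ to $y$ changes $\tilde{\Phi}$ by $2-(\ell_x-1)\alpha-q(1+\alpha)$, forcing $q\geq[2-\alpha(\ell_x-1)]_+/(1+\alpha)$. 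Because the model permits at most one adversarial node per agent, these support nodes must occupy distinct agents inside their own arcs, giving $p\leq\ell_y$ and $q\leq\ell_x$; solving these feasibility inequalities yields exactly the stated $\ell_y\geq\lceil(2+\alpha)/(1-\alpha)\rceil$ and $\ell_x\geq2$, while rounding $p,q$ up to integers gives the per-block count $c=\lceil\alpha(\ell_y+1)\rceil+2+\lceil[2-\alpha(\ell_x-1)]_+/(1+\alpha)\rceil$.

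I would then cast the adversary's problem as choosing a cyclic sequence of blocks, each one $x$-arc followed by one $y$-arc of some type $j$, with $m_j$ blocks of type $j$. Additivity gives $N_y=\sum_j m_j\ell_{y_j}$, $b=2\sum_j m_j$, and a budget constraint $\sum_j m_j c_j\leq\gamma\sum_j m_j(\ell_{x_j}+\ell_{y_j})$, equivalently $\sum_j m_j s_j\geq0$ with $s_j:=\gamma(\ell_{x_j}+\ell_{y_j})-c_j$. Minimizing $\eta$ becomes maximizing the linear-fractional objective $\big(\sum_j m_j(\alpha\ell_{y_j}+2+\alpha)\big)/\big(\sum_j m_j(\ell_{x_j}+\ell_{y_j})\big)$ over the cone $\{m\geq0:\sum_j m_j s_j\geq0\}$. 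Since a linear-fractional function is simultaneously quasi-convex and quasi-concave, after the normalization $\sum_j m_j(\ell_{x_j}+\ell_{y_j})=1$ its maximum is attained at a vertex of the resulting polytope, which activates at most two of the $m_j$: either the budget binds with two types of opposite slack ($s_1>0>s_2$) or a single type meets it exactly ($s_1=0$, $\ell_{x_2}=\ell_{y_2}=0$). Substituting $m_1/m_2=-s_2/s_1$ into the objective and simplifying reproduces precisely the fractional expression in \eqref{eq:SI_opt}, with the displayed case split.

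The main obstacle I anticipate is the \emph{global}-optimality step: the whole-arc flips give only necessary conditions, and I must show they are sufficient, i.e.\ that the minimally supported pattern beats \emph{every} alternative profile, including partial-arc flips, simultaneous multi-arc flips, and reassignments of support nodes within an arc. I would handle this by placing the support nodes at the worst-case interior positions and writing the potential change of an arbitrary deviation as a nonnegative combination of the single-arc flip increments, so the single-arc conditions dominate. Finally, for tightness I would exhibit an explicit construction: tile a length-$n$ ring with the two optimal block types in ratio $m_1:m_2\approx(-s_2):s_1$, absorbing the $O(1)$ rounding and a single leftover block into an $o(n)$ correction, and verify that the induced worst-case maximizer has efficiency converging to the value of \eqref{eq:SI_opt} as $n\to\infty$; combined with the lower bound this gives the claimed limit.
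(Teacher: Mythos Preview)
Your proposal is correct and follows essentially the same route as the paper: the potential-game reduction, the arc/segment decomposition, the per-segment adversary counts from whole-segment flips (matching the paper's conditions \eqref{eq:y_nec}--\eqref{eq:x_nec}), and the reduction to at most two block types (your linear-fractional vertex argument is the paper's mediant-sum argument in different clothing). The only place the paper is more concrete is the sufficiency step you flag as an obstacle---it resolves it by an explicit placement of the $y$-supports at spacing $\lceil 1/\alpha\rceil$ plus endpoints, then checks the three deviation classes directly rather than via a nonnegative-combination decomposition.
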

%


There are several interesting things to note from Theorem~\ref{theorem:SI_opt}, which characterizes the greatest damage  that an adversary can inflict upon the system when relying on static policies that can depend on the graph structure.  This theorem informs the structure of the worst-case attack, which involves the adversary attempting to stabilize alternating $x,y$ sequences of four distinct lengths.  While the structure of this adversarial attack is not necessarily fundamental, the interesting part of the theorem centers on tightness. That is, the adversary can never inflict more damage than the bounds given in Theorem~\ref{theorem:SI_opt}, and the best adversarial strategy approaches this bound as the size of the ring graph in consideration gets larger.



\subsection{Dynamic Informed Adversarial Policies}

This section focuses on the potential degradation of the adversarial policies in the set $\Pi_{DI}$.  Here, the adversary can target specific agents $S_x(a(t)), S_y(a(t)) \subseteq \nee$  using knowledge of the graph structure $G$ and the sequence of action profiles $\{a(t)\}_{t\in\mathbb{Z}_{\geq 0}}$.   An adversarial policy $\pi \in \Pi_{DI}$ defines the process by which these agents are selected.  The following Theorem characterizes the maximum potential degradation caused by such adversarial policies.

\begin{theorem}\label{theorem:MI_opt}
    Consider $\Pi_{DI}(G,\gamma)$, the set of admissible adversarial policies that are dynamic and can depend on network structure, and $\alpha \in [0,1)$. Then the fundamental lower bound for $\inf_{\pi\in\Pi_{\text{DI}}(G,\gamma)} \eta(G,\alpha,\pi)$ is given by the RHS of \eqref{eq:SI_opt}, where the $s_j$ variables are instead
    \begin{equation}
        s_j =
        \begin{cases} 
            \gamma (\ell_{x_j} + \ell_{y_j})-4 \quad \text{if } \alpha<\frac12 \text{ and } \ell_{x_j} \leq 1 + \left\lfloor \frac{1-\alpha}{\alpha} \right\rfloor  \\
            \gamma (\ell_{x_j} + \ell_{y_j})-2\quad \text{else}
        \end{cases}
    \end{equation}
    for $j = 1,2$. Furthermore, the limit of efficiency as the size of $G$ grows ($n\rightarrow \infty$) equals the lower bound. 
\end{theorem}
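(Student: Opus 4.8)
The plan is to follow the same two-part template used for Theorem~\ref{theorem:SI_opt}: first reduce the computation of $\inf_\pi\eta$ to a per-segment optimization over alternating $x,y$ block lengths $(\ell_{x_j},\ell_{y_j})$, and then substitute the correct \emph{per-segment stabilization cost} for a dynamic, informed adversary. The welfare bookkeeping is inherited verbatim from the static case: on a ring, a length-$L_j=\ell_{x_j}+\ell_{y_j}$ segment consisting of one $x$-block and one $y$-block contributes $(\ell_{x_j}-1)(1+\alpha)+(\ell_{y_j}-1)$ to the potential while its two boundary edges contribute $0$, so the efficiency of any periodic block pattern is exactly the expression obtained by substituting the segment data into the RHS of \eqref{eq:SI_opt}. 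The quantity $s_j$ is precisely the per-segment budget slack $s_j=\gamma L_j-c_j$, where $c_j$ is the number of adversarial nodes needed to make a type-$j$ segment part of a stochastically stable configuration, and the constraints ($s_1=0$, or $s_1>0,\,s_2<0$) encode feasibility of the global budget $\sum_j m_j s_j\ge 0$ across $m_j$ copies of each type. Thus the entire theorem reduces to replacing the static cost of Theorem~\ref{theorem:SI_opt} by the dynamic cost $c_j\in\{2,4\}$. Because a dynamic policy is no longer a potential game, the technical shift is that I must analyze stochastic stability through minimum-resistance (Freidlin--Wentzell) trees rather than through a global potential maximizer, where the resistance of a single-agent flip is $r(a\to a')=\max_{\tilde a_i}\tilde U_i(\tilde a_i,a_{-i};S_x(a),S_y(a))-\tilde U_i(a'_i,a_{-i};S_x(a),S_y(a))$, with the influenced sets evaluated at the current state $a$.

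\textbf{Achievability.} For large $n$ I would build the policy segment-by-segment. On each target $y$-block I place an $S_y$ influence on its two boundary $y$-agents; such an agent then has $\tilde U(y)=2$ and $\tilde U(x)=1+\alpha$, so eroding the block costs resistance $1-\alpha>0$, and since the policy is dynamic the guard is re-placed on whichever agent is currently at the boundary, so two relocating guards keep every erosion step of an entire $y$-block costly while re-growing back to $a^*$ is free. This yields $c_j=2$. When $\alpha<\tfrac12$ and the $x$-block is short the danger is the opposite one: a boundary $x$-agent flips to $y$ with resistance only $\alpha$, and a chain of such flips \emph{collapses} the $x$-block (total resistance $(\ell_{x_j}-1)\alpha$), \emph{merging} two $y$-blocks into a longer one, which \emph{raises} welfare and defeats the adversary. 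I therefore additionally place $S_x$ influence on the two boundary $x$-agents (raising the $x$-erosion resistance to $1+2\alpha$), giving $c_j=4$; the threshold $\ell_{x_j}\le 1+\lfloor\frac{1-\alpha}{\alpha}\rfloor$ is exactly $\alpha(\ell_{x_j}-1)\le 1-\alpha$, i.e.\ the regime in which collapsing the unguarded $x$-block is no costlier than a single guarded $y$-erosion step, so that merging is the cheapest escape and must be blocked. I then verify that with these $c_j$ guards the minimum-resistance tree is rooted at $a^*$, certifying $a^*\in\mathrm{LLL}(G,\alpha,\pi)$.

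\textbf{Tightness.} For the matching lower bound on $\inf_\pi\eta$ I would show that \emph{any} dynamic informed policy of budget $\gamma$ induces stochastically stable states whose welfare is at least the optimization value. The core lemma is that a type-$j$ segment cannot belong to any stochastically stable configuration unless the policy commits at least $c_j$ nodes to it: with fewer guards, some transition leaving $a^*$ carries strictly smaller resistance than a competing tree rooted at a higher-welfare merged or eroded configuration, so $a^*$ loses the stochastic-stability comparison and the realized efficiency is at least that of the competitor. Establishing this uniformly over all state-dependent relocations of the at most $\gamma n$ guards --- i.e.\ showing dynamic repositioning cannot stabilize a segment below the $2$/$4$ threshold --- is the main obstacle, since the resistance tree can route around any single blocked transition and the adversary has exponentially many policies. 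I expect to control this with a cut/flow argument: summing resistances along any $a^*\to\vec x$ path and along its reverse, the net favorable resistance the adversary can manufacture per segment is capped by the number of guarded boundary agents, forcing $c_j\ge 2$ always and $c_j\ge 4$ precisely in the short-$x$-block regime identified above.

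Finally, the $n\to\infty$ claim follows by letting the number of segments of each type grow large: the ceiling corrections and constant boundary terms are $O(1)$ per configuration and vanish against the $\Theta(n)$ welfare, so the realized efficiency converges to the per-segment optimization value. This matches the lower bound and completes the characterization, with the informed dynamic adversary strictly improving on the uninformed dynamic floor $\tfrac{1}{1+\alpha}$ of Theorem~\ref{theorem:DU_opt} by exploiting the zero-value boundary edges of short alternating blocks.
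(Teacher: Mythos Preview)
Your high-level scaffolding is sound and matches the paper: reduce to per-segment stabilization costs, derive $c_j\in\{2,4\}$ with the threshold $\ell_{x_j}\le 1+\lfloor(1-\alpha)/\alpha\rfloor$, and reuse Step~(2B) of Theorem~\ref{theorem:SI_opt} to cut down to two segment patterns. The threshold derivation via $\alpha(\ell_{x_j}-1)\le 1-\alpha$ is also the paper's.

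There is, however, a concrete error in your achievability argument. You claim that with two relocating $y$-guards on the current $y$-boundary, ``re-growing back to $a^*$ is free.'' It is not. After one erosion step the flipped agent $i$ is now $x$ with one $y$-neighbor and one $x$-neighbor; the guard has moved to $i{+}1$, so $i$ is unguarded and has $U_i(x)=1+\alpha>1=U_i(y)$, making the regrowth step cost $\alpha>0$. For $\alpha\ge\tfrac12$ this is \emph{worse} than the $1-\alpha$ erosion cost, so purely defensive guards would let every $y$-block collapse and $a^*$ would not be stochastically stable at all. The paper's aggressive policy (Definition~\ref{def: agg policy}) resolves this with an additional \emph{offensive} adversary: a single extra $y$-node placed on the $x$-agent adjacent to the current $y$-subsegment, which makes regrowth genuinely zero-resistance. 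This ``$+1$'' is what drives the entire recurrent-class structure; it vanishes in the per-segment asymptotics (hence does not appear in $s_j$), but without it your policy does not stabilize $a^*$.

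The second gap is that ``verify the minimum-resistance tree is rooted at $a^*$'' hides essentially all of the work. Under the aggressive policy the recurrent classes are not just $a^*$ and $\vec x$; they are all \emph{homogeneous} profiles (each target segment entirely $x$ or entirely $y$) avoiding a forbidden $|X|_y\,Y|_x$ pattern, and there are exponentially many. The paper organizes them by a disagreement level $d(a')$ (number of segments misaligned with $a^*$), shows that the minimum-resistance out-edge from each class decreases $d$ (Lemmas~\ref{lem:resistances}--\ref{lem:head_leave}), and assembles these edges into a tree rooted at $a^*$ (Proposition~\ref{a_stabilizable}). Your proposal does not indicate any mechanism for handling this combinatorial structure. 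Similarly, the necessity side (Proposition~\ref{S_fewer}) is short in the paper---compare $\rho_{a\to a'}$ with $\rho_{a'\to a}$ for the single unguarded segment---and does not need a cut/flow argument; your description of that step as ``the main obstacle'' suggests you have not yet found the direct comparison that makes it routine.
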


Similar to \eqref{eq:SI_opt}, the lower bound of Theorem \ref{theorem:MI_opt} takes the form of an integer programming problem. While the structure of this adversarial attack is not necessarily fundamental, the interesting part of the theorem centers on tightness. That is, the adversary can never inflict more damage than the bound described in Theorem \ref{theorem:MI_opt} and the best adversarial strategy approaches this bound as the size of the ring graph gets larger.

\subsection{Comparison Between Information and Sophistication}
Here, we emphasize the qualitative differences between information and sophistication. The Theorem below asserts that sophistication, i.e. the ability to implement a dynamic policy, is a more desirable attribute for the adversary if its budget is relatively low, while information is more valuable if its budget is high.
\begin{theorem}\label{theorem:comparison}
%
%
Suppose $\alpha \in [0,1)$. For budgets $\gamma \in (0,\alpha)$ (empty interval if $\alpha=0$), we have
    \begin{equation}
        \lim_{n\rightarrow \infty} \inf_{\substack{G \in \mcal{G}_n^1 \\ \pi \in \Pi_{\text{D} }(G,\gamma)}} \eta(G,\alpha,\pi) < \lim_{n\rightarrow \infty}\inf_{\substack{G \in \mcal{G}_n^1 \\ \pi \in \Pi_{\text{I}}(G,\gamma)}} \eta(G,\alpha,\pi).
    \end{equation}
For budgets $\gamma \in (\alpha,1]$, the opposite (strict) inequality holds. They are equal if $\gamma = \alpha$.
\end{theorem}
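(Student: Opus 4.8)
The plan is to reduce both sides of the claimed inequality to explicit expressions and then compare their dependence on $\gamma$ through a single telescoping identity. First I would evaluate the left-hand side. Since we work on ring graphs ($k=1$), the hypothesis $\alpha\in[0,1)$ gives $\alpha<1=1/k$, so Theorem~\ref{theorem:DU_opt} yields, for every $\gamma\in(0,1]$,
$$\lim_{n\to\infty}\inf_{G\in\mcal{G}_n^1,\,\pi\in\Pi_{\text{D}}(G,\gamma)}\eta(G,\alpha,\pi)=\frac{1}{1+\alpha},$$
a constant independent of $\gamma$. By Theorem~\ref{theorem:SI_opt}, the right-hand side equals the optimal value $V(\gamma)$ of the program \eqref{eq:SI_opt}. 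Writing $V(\gamma)=\frac{1}{1+\alpha}\bigl(1+\Delta(\gamma)\bigr)$, where $\Delta(\gamma)$ is the infimum of the fractional correction term in \eqref{eq:SI_opt}, the three assertions reduce to showing $\Delta(\gamma)>0$ for $\gamma<\alpha$, $\Delta(\alpha)=0$, and $\Delta(\gamma)<0$ for $\gamma>\alpha$.

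The engine of the proof is a telescoping identity. Set $c_j:=\alpha\ell_{x_j}-(2+\alpha)$, $L_j:=\ell_{x_j}+\ell_{y_j}$, $t_j:=\frac{[\,2-\alpha(\ell_{x_j}-1)\,]_+}{1+\alpha}\ge 0$, and $r:=s_1/s_2$. Substituting the definition of $s_j$ gives $c_j=s_j-(\gamma-\alpha)L_j+t_j+\rho_j$, where $\rho_j\ge 0$ collects the nonnegative rounding slack from the ceilings in $s_j$. Because the feasibility branch forces $s_1-r\,s_2=0$ identically, plugging this into the numerator $c_1-r\,c_2$ cancels all $s_j$-contributions and leaves
$$\Delta(\gamma)=\inf\left\{(\alpha-\gamma)+\frac{(t_1+\rho_1)-r\,(t_2+\rho_2)}{L_1-r\,L_2}\right\},$$
with the single-block analogue $(\alpha-\gamma)+\frac{t_1+\rho_1}{L_1}$ in the branch $s_1=0,\ \ell_{x_2}=\ell_{y_2}=0$. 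Since $s_1>0,\ s_2<0$ force $r<0$, both the numerator and denominator of the added fraction are nonnegative, so every feasible configuration obeys $\text{(correction)}\ge \alpha-\gamma$.

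This identity settles two of the three regimes at once. For $\gamma<\alpha$ it gives $\Delta(\gamma)\ge\alpha-\gamma>0$, hence $V(\gamma)\ge\frac{1+\alpha-\gamma}{1+\alpha}>\frac{1}{1+\alpha}$, which is the first strict inequality. For $\gamma=\alpha$ it gives $\Delta(\alpha)\ge 0$, and I would attain equality by taking a long-$x$-block configuration with $\ell_x=\lceil(2+\alpha)/\alpha\rceil$ (so $t=0$) and $\ell_y\to\infty$, which drives the residual penalty $\frac{\rho}{L}\to 0$ and forces $\Delta(\alpha)=0$. For $\gamma>\alpha$ the claim is the reverse inequality, so I would exhibit a single feasible configuration whose correction is strictly negative: fixing a short $x$-block and solving the budget-balance constraint for $\ell_y$ produces a stabilizable alternating state with correction $(\alpha-\gamma)+\frac{t}{L}<0$, hence $V(\gamma)<\frac{1}{1+\alpha}$.

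The hard part will be the $\gamma>\alpha$ direction together with integrality. Unlike the other two regimes, here one cannot merely bound the penalty from below; one must construct, for \emph{every} $\gamma\in(\alpha,1]$, a configuration that is genuinely feasible under the ceiling-defined $s_j$ while keeping $\frac{t+\rho}{L}<\gamma-\alpha$. This is exactly where the two-length freedom in \eqref{eq:SI_opt} is indispensable: it lets the adversary pair a cheap long-$x$ surplus block ($s_1>0$) with a short-$x$ deficit block ($s_2<0$) so the budget is spent exactly, and tune the weight $r=s_1/s_2$ and the block lengths to force the penalty below $\gamma-\alpha$ while absorbing the rounding terms $\rho_j$. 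Verifying that this construction stays feasible and strictly beats $\frac{1}{1+\alpha}$ \emph{uniformly} over $(\alpha,1]$ — not only near the endpoints — is the crux, and I would dispatch the residual ceilings by letting the relevant block lengths grow with $n$, so that the per-block rounding contributes vanishingly to the efficiency in the $n\to\infty$ limit guaranteed by Theorem~\ref{theorem:SI_opt}.
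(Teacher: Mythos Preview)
Your telescoping identity is correct and yields $\Delta(\gamma)\ge\alpha-\gamma$ for every feasible point of \eqref{eq:SI_opt}, dispatching $\gamma<\alpha$ in one line. This is genuinely different from, and cleaner than, the paper's treatment of that regime: the paper never manipulates \eqref{eq:SI_opt} algebraically but instead bounds $\eta(a)$ from below via the budget relation $n\ge(\alpha(|L_y|+1)+2)/\gamma$, handling one- and two-pattern profiles separately (and omitting details for the latter). Your identity unifies both cases and makes the role of $\gamma-\alpha$ transparent.

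The gap is in the regimes $\gamma=\alpha$ and $\gamma>\alpha$, where you must \emph{exhibit} a feasible configuration with correction $\le 0$ (resp.\ $<0$). Your single-block constructions require $s_1=0$ exactly, but $s_1=\gamma L_1-\lceil\alpha(\ell_{y_1}+1)\rceil-2-\lceil t_1\rceil$ is generically not attainable: for most $(\alpha,\gamma)$ no integer pair $(\ell_{x_1},\ell_{y_1})$ satisfies it. For instance at $\gamma=\alpha$ with $t_1=0$ the constraint forces $\ell_{x_1}=1+(2+\rho_1^y)/\alpha$, typically non-integer, and letting $\ell_y\to\infty$ does not help because $s_1=0$ pins $\ell_{x_1}$ rather than freeing it. You flag the two-block branch as ``the hard part'' but do not execute it; there one must also verify that a deficit block ($s_2<0$) exists at all when $\gamma$ is near or above $\gamma_{\text{sat}}^{\text{SI}}$.

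The paper bypasses these integrality issues by abandoning \eqref{eq:SI_opt} for the upper-bound direction and constructing explicit profiles on large rings instead. For $\gamma>\alpha$ it lays down a fraction $f=(\gamma-\alpha)/(1-\alpha)$ of the saturation pattern from Theorem~\ref{theorem:sat} alongside a single long $y$-segment, computes the limiting efficiency directly, and observes it is linear in $\gamma$ and strictly below $\tfrac{1}{1+\alpha}$ on all of $(\alpha,1]$. This is what you would need to supply---or translate into an honest two-block feasible sequence in \eqref{eq:SI_opt}---to close your argument.
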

Hence, in the low budget regime $\gamma < \alpha$, the adversary prefers to be uninformed and dynamic over being informed but static. The opposite conclusion holds in the high budget regime $\gamma > \alpha$. This characterization allows us to explicitly identify the importance of information and sophistication in adversarial policies as highlighted in Figure~\ref{fig:plots}\footnote{Figure 1 plots the bounds that the four main results, Theorems \ref{theorem:SU_opt}, \ref{theorem:DU_opt}, \ref{theorem:SI_opt}, and \ref{theorem:MI_opt}, characterize. While the bounds are analytically derived for Theorems \ref{theorem:SU_opt}, \ref{theorem:DU_opt} (uninformed adversaries), the plots for informed adversaries resemble a closely approximated value by solving their respective integer optimization problems with a finite upper bound of 100 on the decision variables.}.

The next result provides a comparison between static and dynamic informed adversaries. It states that given a sufficiently large adversarial budget, an optimal static informed policy can do just as much damage as an optimal dynamic informed policy.
\begin{theorem}\label{theorem:sat}
The fundamental lower bound on performance for static informed policies is
\begin{equation}\label{eq:sat_val}
	\left(\frac{1}{1+\alpha}\right)\frac{\ell^*+\alpha}{\ell^* + 2}
\end{equation}
if and only if it has a budget $\gamma \geq \gamma_{\text{sat}}^{\text{SI}} := \frac{\ell^* + \left\lceil \frac{2 - \alpha}{1+\alpha}\right\rceil}{\ell^* + 2}$, where $\ell^* := \left\lceil \frac{2 + \alpha}{1-\alpha}\right\rceil$. Furthermore, the fundamental lower bound on performance for dynamic informed policies coincides with \eqref{eq:sat_val} for budgets $\gamma \geq \gamma_{\rm sat}^{\text{DI}} := \frac{2 + 2\cdot\mathds{1}\left(\alpha < \frac{1}{2} \right)}{\ell^* + 2}$, where $\mathds{1}(\cdot)$ is the indicator function.
\end{theorem}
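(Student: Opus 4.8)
The plan is to read both assertions as statements about the optimal values of the integer programs appearing in Theorems~\ref{theorem:SI_opt} and~\ref{theorem:MI_opt}, which share the same objective. The organizing observation is that this objective is a length-weighted average of the single-block efficiency
\[
f(\ell_x,\ell_y) := \frac{1}{1+\alpha}\left(1 + \frac{\alpha \ell_x - (2+\alpha)}{\ell_x + \ell_y}\right):
\]
the single-block mode ($s_1=0$, $\ell_{x_2}=\ell_{y_2}=0$) evaluates to $f(\ell_{x_1},\ell_{y_1})$, while the two-block mode ($s_1>0$, $s_2<0$) evaluates to the length-weighted average of $f(\ell_{x_1},\ell_{y_1})$ and $f(\ell_{x_2},\ell_{y_2})$ with positive weights proportional to $Q_1:=\ell_{x_1}+\ell_{y_1}$ and $-\tfrac{s_1}{s_2}Q_2$, where $Q_2:=\ell_{x_2}+\ell_{y_2}$. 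A one-line computation gives $V_{\mathrm{sat}}:=f(2,\ell^*)=\big(\tfrac{1}{1+\alpha}\big)\tfrac{\ell^*+\alpha}{\ell^*+2}$, matching \eqref{eq:sat_val}. Everything then rests on three ingredients: (a) $V_{\mathrm{sat}}$ is the global minimum of $f$ over $\ell_x\ge 2,\ \ell_y\ge\ell^*$; (b) the budget at which the minimal block $(2,\ell^*)$ has zero surplus equals $\gamma_{\mathrm{sat}}^{\mathrm{SI}}$ (resp.\ $\gamma_{\mathrm{sat}}^{\mathrm{DI}}$); and (c) each program's value is nonincreasing in $\gamma$.

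For (a) I would check by elementary monotonicity that $f$ is strictly increasing in $\ell_x$ (its $\ell_x$-derivative has numerator $\alpha\ell_y+2+\alpha>0$) and strictly increasing in $\ell_y$ whenever $\alpha\ell_x<2+\alpha$, which holds for all $\ell_x\ge 2$ when $\alpha<1$; hence the minimum is attained \emph{uniquely} at the corner $(2,\ell^*)$ with value $V_{\mathrm{sat}}$. Since a length-weighted average of two numbers with positive weights lies between them, every two-block value is a convex combination of two single-block values, each $\ge V_{\mathrm{sat}}$. Thus $V_{\mathrm{sat}}$ is a lower bound on both programs for every $\gamma$, which already yields the ``$\ge$'' half of all the claimed equalities.

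For (b) and the ``if'' directions I would evaluate the surplus at the corner. Setting $s_1=0$ there and using the identity $\lceil\alpha(\ell^*+1)\rceil=\ell^*-2$ — which follows from $\tfrac{2+\alpha}{1-\alpha}\le \ell^*<\tfrac{3+\alpha}{1-\alpha}$ — collapses the static requirement to $\gamma(\ell^*+2)=\ell^*+\lceil\tfrac{2-\alpha}{1+\alpha}\rceil$, i.e.\ $\gamma=\gamma_{\mathrm{sat}}^{\mathrm{SI}}$; the same computation with $R^{\mathrm{DI}}\in\{2,4\}$ (selected by whether $\alpha<\tfrac12$, equivalently whether $\ell_x=2\le 1+\lfloor\tfrac{1-\alpha}{\alpha}\rfloor$) gives $\gamma=\gamma_{\mathrm{sat}}^{\mathrm{DI}}$. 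At these thresholds the single-block mode is feasible and attains $V_{\mathrm{sat}}$. To extend this to all larger budgets I would invoke (c): since any policy with budget at most $\gamma$ remains admissible at any larger budget, $\inf_\pi\eta$ is nonincreasing in $\gamma$, and by the tightness statements of Theorems~\ref{theorem:SI_opt} and~\ref{theorem:MI_opt} this limiting efficiency equals the program value; hence for $\gamma\ge\gamma_{\mathrm{sat}}$ the value is squeezed between the universal lower bound $V_{\mathrm{sat}}$ and its threshold value $V_{\mathrm{sat}}$.

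The ``only if'' direction for static policies is the part requiring genuine care, and is where I expect the main obstacle. For $\gamma<\gamma_{\mathrm{sat}}^{\mathrm{SI}}$ the corner block has \emph{negative} surplus, so it cannot tile the ring alone ($s_1=0$ is infeasible); any feasible configuration must therefore place positive length-weight $w$ on a surplus block $(\ell_{x_1},\ell_{y_1})\neq(2,\ell^*)$, whose efficiency $f_1$ satisfies $f_1\ge V_{\mathrm{sat}}+\delta_0$ for $\delta_0:=\min\{f(2,\ell^*+1),f(3,\ell^*)\}-V_{\mathrm{sat}}>0$, a gap depending only on $\alpha$ by the isolation of the corner minimizer. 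Writing the program value as $V_{\mathrm{sat}}+w(f_1-V_{\mathrm{sat}})$, it remains to bound $w$ away from $0$ uniformly in the block sizes. Here I would use $s_1/Q_1=\gamma-R_1/Q_1<\gamma$ together with $|s_2|=(\gamma_{\mathrm{sat}}^{\mathrm{SI}}-\gamma)(\ell^*+2)$ to obtain $w=\big(1+\tfrac{(s_1/Q_1)(\ell^*+2)}{|s_2|}\big)^{-1}\ge (\gamma_{\mathrm{sat}}^{\mathrm{SI}}-\gamma)/\gamma_{\mathrm{sat}}^{\mathrm{SI}}$, giving the strict lower bound $V_{\mathrm{sat}}+\tfrac{\gamma_{\mathrm{sat}}^{\mathrm{SI}}-\gamma}{\gamma_{\mathrm{sat}}^{\mathrm{SI}}}\delta_0>V_{\mathrm{sat}}$ on the program value. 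The delicate bookkeeping — verifying the ceiling identity in general, handling the $R^{\mathrm{DI}}$ case split at $\alpha=\tfrac12$, and confirming that $\delta_0$ and the weight bound hold uniformly across all admissible integer blocks — is the crux; the remainder is monotonicity and finite computation.
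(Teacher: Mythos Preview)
Your approach is essentially the same as the paper's: both identify the corner block $(\ell_x,\ell_y)=(2,\ell^*)$ as the efficiency minimizer and compute the budget needed to stabilize it as the saturation threshold. The paper's proof is a terse paragraph that simply asserts the minimum occurs at the minimal lengths and reads off the required adversary count from \eqref{eq:y_nec}, \eqref{eq:x_nec} (static) and from the aggressive-policy count (dynamic); it does not separately treat the two-block mode or give a quantitative ``only if'' argument, whereas you supply both via the convex-combination observation and the explicit weight bound.

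One small slip to fix: the claim that $\alpha\ell_x<2+\alpha$ ``holds for all $\ell_x\ge 2$ when $\alpha<1$'' is false (it fails already at $\ell_x=4$ for $\alpha\ge 2/3$). Your conclusion is unaffected, since $f$ increasing in $\ell_x$ gives $f(\ell_x,\ell_y)\ge f(2,\ell_y)$, and at $\ell_x=2$ the $\ell_y$-monotonicity \emph{does} hold; just phrase the two-step minimization that way.
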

In other words, there are saturation levels on budget for both types of adversaries (DI and SI), where influencing more than $\gamma_{\text{sat}}^{\text{DI}}$ ($\gamma_{\text{sat}}^{\text{SI}}$) fraction of agents does not offer any additional performance gains. However, a static adversary will not exhibit saturation if $\alpha < \frac{1}{2}$. That is, the static adversary achieves performance level \eqref{eq:sat_val} if and only if it has a full budget $\gamma = 1$. It is interesting to note from the above Theorem that the dynamic informed adversary can maintain the performance level \eqref{eq:sat_val} for a wider range of budgets $\gamma \in [\gamma_{\text{sat}}^{\text{DI}},1] \supseteq [\gamma_{\text{sat}}^{\text{SI}},1]$ than the static informed adversary can. Here, the range is the same (no saturation exhibited for either) if and only if $\alpha = 0$.  Essentially, dynamic policies can inflict the same level of damage with fewer adversaries than a static policy. The proofs of both Theorems in this subsection are given in the Appendix.

\section{Proofs: Performance of static policies} \label{sec:static_analysis}

In this section, we provide proofs for the minimum efficiency a static adversary can induce. We will first prove Theorem \ref{theorem:SI_opt}, the case of a static informed adversary. As discussed, we limit our attention here to ring graphs $G \in \mcal{G}^1$. We then give a proof of Theorem \ref{theorem:SU_opt}, the case of a static uninformed adversary. This result relies on extending an intermediate step from the proof of Theorem \ref{theorem:SI_opt} to $k$-connected ring graphs.

The adversary's objective is to steer the system to a stochastically stable state of minimal efficiency. We will refer to action profiles that can be stabilized through some static policy  as the set of \emph{target profiles} a static uninformed and static informed adversary can induce, respectively. Indeed, we would like to characterize the target profile of minimal efficiency an adversary can achieve over any ring graph, i.e. 
\begin{equation}\label{eq:SI_problem}
    \inf_{G \in \mcal{G}^1, \pi \in \Pi_{\text{I}}(G,\gamma)} \eta(G,\alpha,\pi).
\end{equation}
Our approach is to view any action profile $a$ (and hence any target profile) as composed of alternating $x$ and $y$ segments. A $y$ segment $L_y$ is any subset $\{j,j+1,\ldots,j+|L_y|-1\} \subseteq \mcal{N}$ such that $a_i = y$ $\forall i \in L_y$ and $a_{j-1} = a_{j+|L_y|} = x$ (modulo $n$ arithmetic). Similarly, $L_x$ describes any such segment of $x$ agents.

\subsection{Proof of Theorem \ref{theorem:SI_opt}}

To begin, we start with a general outline of the forthcoming proof, which we break up into three steps. Following the outline, we give proofs for each of the individual steps.

\noindent\textbf{Step 1: Necessary and sufficient budget conditions to stabilize target profiles}

We derive the minimum number of adversarial nodes that is necessary and sufficient to stabilize a given action profile $a$. Indeed, suppose $S$ is an allocation of adversarial nodes. Then $a$ is stochastically stable if and only if for every $y$ segment $L_y$ and $x$ segment $L_x$ contained in $a$,
\begin{align}
    |S_y \cap L_y| &\geq \lceil \alpha(|L_y|+1) \rceil + 2 \label{eq:y_nec}, \\
    |S_x \cap L_x| &\geq \left \lceil  \frac{[2-\alpha(|L_x|-1)]_+}{1+\alpha}  \right \rceil \label{eq:x_nec},
\end{align}
and the spacing from two sequential $y$ adversarial nodes within $L_y$ is no more than $\left\lceil \frac{1}{\alpha} \right\rceil$. We observe that segment lengths must satisfy $|L_y| \geq \left\lceil\frac{2+\alpha}{1-\alpha}\right\rceil$ and $|L_x| \geq 2$.

\noindent\textbf{Step 2: Characterizing minimal efficiency target profiles}

Having established the number of adversarial nodes needed to stabilize target profiles, we identify structural properties of minimal efficiency target profiles that are stabilizable within the budget $\gamma \in (0,1]$. In particular, we show that
\begin{enumerate}
    \item[(2A)] Among adversarial policies that induce maximal damage, there is at least one that utilizes its full budget. Specifically, if the policy $\pi_S \in \Pi_{\text{I}}(G,\gamma)$ with $|S| < \lfloor \gamma \cdot n \rfloor$ stabilizes profile $a$, then one can always use a policy $\pi_{S'}$ with $|S| = \lfloor \gamma \cdot n \rfloor$ that also stabilizes $a$.
    \item[(2B)] The target profile of minimal efficiency contains at most two unique $x$ $y$ segment patterns.
\end{enumerate}

\noindent\textbf{Step 3: Optimization over worst-case target profiles}

We formulate an integer optimization problem whose solution gives \eqref{eq:SI_problem}. The decision variables are the lengths of the two unique $x$ $y$ segment patterns, subject to necessity constraints derived from \eqref{eq:y_nec} and \eqref{eq:x_nec}, as well as constraints given by the structural properties (2A) and (2B) of minimal efficiency target profiles. This formulation yields \eqref{eq:SI_opt}, and thus the proof of Theorem \ref{theorem:SI_opt}. \\

Before getting into the proofs of the claims given in the outline, we first present preliminary analytical tools for characterizing the emergent behavior when an  adversarial policy $\pi_S \in \Pi_{\text{I}}$  interferes with the agents' log-linear learning dynamics. Specifically, we seek to compute the stochastically stable states $\text{LLL}(G,\alpha,\pi_S)$. To do this, we can rely on the fact the graphical coordination game with static adversarial influence has a potential game structure \cite{Monderer1996}. In potential games, the stochastically stable states associated with log-linear learning are the action profiles that maximize the potential function \cite{Blume1993,Marden2012}. One can show that $\phi(a;S) := \frac{W(a)}{2} + \sum_{i\in S_x} V(a_i,x) + \sum_{i\in S_y} V(a_i,y)$. is a potential function for this game. Here, $\phi$ simply measures the number of coordinating links, including those induced from adversaries, weighted by their payoffs (i.e. $x$ or $y$ links).  Hence, for any graph $G$ and static policy $\pi_S$, we have $\text{LLL}(G,\alpha,\pi_S) = \argmax{a\in\mcal{A}} \phi(a; S)$.

%
%
%
%

\noindent\textbf{\underline{Proof of Step 1}}

We present the proof only for $y$ segments, as the arguments for $x$ segments are analogous. Suppose $a$ is stochastically stable, and contains a $y$ segment  $L_y = \{j,j+1,\ldots,j+|L_y|-1\}$. That is, $a_i = y$ for $i\in L_y$, and $a_{j-1} = a_{j+|L_y|} = x$. Consider any deviation $a'$ from $a$ that differs only within the segment $L_y$. Then it holds that $\phi(a';S) \leq \phi(a;S)$. In particular, if $a'$ is the profile where all agents in $L_y$ deviate to $x$, then $(1+\alpha)(|L_y|+1)  \leq |L_y|-1 + |S_y \cap L_y|$ must hold. Rearranging, we obtain $|S_y \cap L_y| \geq \alpha(|L_y| + 1) + 2$. Since $|S_y \cap L_y|$ is a non-negative integer, it must hold that $|S_y \cap L_y| \geq 2 + \lceil \alpha(|L_y|+1) \rceil$.

To prove sufficiency, we need to construct an allocation $S_y \cap L_y$ of $\lceil \alpha(|L_y|+1) \rceil + 2$ $y$ adversarial nodes such that $\phi(a;S) \geq \phi(a';S)$, where $a_i = y \ \forall i \in L_y$ and for any $a'$ deviating from $a$ in agents only in $L_y$. We first assume that $|L_y| \geq \lceil \alpha(|L_y|+1) \rceil + 2$, i.e. the length of the segment itself is greater or equal to the necessary number of adversaries needed. Indeed, let us define the sets $W_1$ and $W_2$ as follows:
\begin{align}
    W_1&=\{i \in L_y:\left\lfloor\alpha(i-j+1)\right\rfloor-\left\lfloor\alpha(i-j)\right\rfloor>0\}, \label{eq:W1} \\
    W_2&=\{j,w,j+|L_y|-1\}, \label{eq:W2}
\end{align}
where $w=\max\{i : i \in L_y\setminus(W_1 \cup \{j+|L_y|-1\}) \}$, i.e. the largest index that is neither in $W_1$ nor is the endpoint $j+|L_k|-1$. Then, set $S_y \cap L_y = W_1\cup W_2$.  An illustration of this influence set is depicted in Figure \ref{fig:placements}. Such a placement ``spreads out" adversaries along $L_y$ at a spacing of $\left\lceil \frac{1}{\alpha} \right\rceil$ nodes, and additionally places adversaries at the endpoints. We will show this placement ensures the sufficiency condition.

Let us denote $a_{L_y}$ as the actions of agents in $L_y$, $y_{L_y}$ as the partial profile where all agents in $L_y$ play $y$, and $x_{L_y}$ as when all agents in $L_y$ play $x$. Let us assume $S_x \cap L_y = \emptyset$. Any profile $a_{L_y} \notin \{y_{L_y},x_{L_y}\}$ belongs to one of three classes: 1) a single isolated $x$ segment $X_1$ within $L_y$, 2) an $x$ segment $X_2$ on the left and/or right edge of $L_y$, and 3) a combination of class 1) and 2).

For class 1 profiles, the potential of $y_{L_y}$ exceeds $a_{L_y}$ if 
\begin{equation}\label{eq:type1_condition}
	|S_y \cap X_1| > \alpha(|X_1|-1) - 2
\end{equation}
By construction of $S_y$, a lower bound on the number of $y$ adversaries influencing $X_1$ is $|S_y \cap X_1| \geq \left \lfloor \alpha|X_1|\right \rfloor$, which satisfies \eqref{eq:type1_condition}. Similarly for class 2 profiles, a sufficient condition is
\begin{equation}\label{eq:type2_condition}
	|S_y\cap X_2|>\alpha |X_2|.
\end{equation}
The number of adversaries on $X_2$ is given by $\left \lfloor \alpha|X_2|\right \rfloor +1$, which clearly satisfies \eqref{eq:type2_condition}. Alternatively, suppose $j+|L_y|-1 \in X_2$. The number of adversaries influencing $X_2$ is at least
\begin{equation}
    \begin{cases}
    |X_2| & \text{if } w\notin X_2,\\
    \lfloor\alpha(|X_2|-1)\rfloor+2 & else,
    \end{cases}
\end{equation}
since when $w \notin X_2$, $X_2 \subset S_y$. When $w\in X_2$, $\{w,j+|L_y|-1\} \in S_y$  in addition to the nodes in $W_1$. Both of these cases satisfy \eqref{eq:type2_condition}. Thus, $S_y$ satisfies the requirements of \eqref{eq:type1_condition} and \eqref{eq:type2_condition}. Consequently, the $S_y$ satisfies the requirement of type 3 profiles as well. By construction,  $|S_y|\leq\left \lfloor \alpha |L_k| \right \rfloor + 3$. If it is a strict inequality, one can simply add additional $y$ adversaries anywhere in the segment to meet the necessary condition \eqref{eq:y_nec}, i.e. the case when $a_{L_y} = x_{L_y}$. \hfill $\blacksquare$


%
\begin{figure}
    \centering
	\includegraphics[scale=.6]{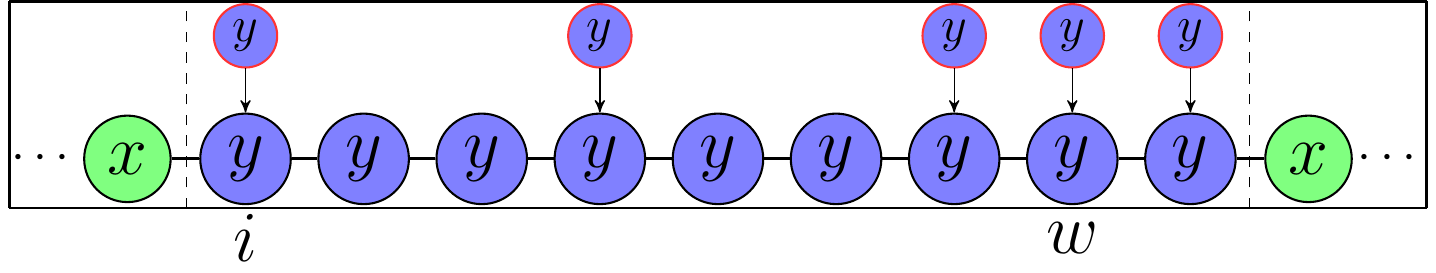}
	\caption{An illustration of the constructed influence set given by \eqref{eq:W1}, \eqref{eq:W2} to stabilize an isolated $y$ segment. The $y$ adversaries belonging to $S_y$ are depicted as the smaller circles attaching to agents (larger circles) in the network. In this example, $\alpha = \frac{1}{4}$ and $|L_y| = 9$. The necessary and sufficient number of adversaries to stabilize the segment is 5.}
	\label{fig:placements}
\end{figure}

\noindent\textbf{\underline{Proof of property (2A)}}

Suppose the minimum efficiency profile $a^*$ on the graph $G \in \mcal{G}^1$ is stabilized by the policy $\pi_S \in \Pi_{\text{I}}(G,\gamma)$, where all  adversaries are not utilized: $|S| < \lfloor \gamma\cdot n \rfloor$. The conditions \eqref{eq:y_nec} and \eqref{eq:x_nec} are met for all $x$ and $y$ segments, respectively. One can always add in remaining available $x$ adversaries ($y$) to the existing  $x$ ($y$) segments while retaining stability of $a^*$. Therefore, there exists a policy $\pi_S$ with $|S| = \lfloor \gamma\cdot n \rfloor$ that also stabilizes $a^*$. \hfill $\blacksquare$

\noindent\textbf{\underline{Proof of property (2B)}}

Before proving this property explicitly, we first define some relevant notations. We can describe a profile $a$ as a sequence of alternating segments $L_x^1L_y^1 L_x^2 L_y^2 \cdots$. For each unique segment pair pattern that appears in $a$, i.e. $|L_x|$ $x$ agents followed by $|L_y|$ $y$ agents, let us define the (column) vector $\boldsymbol{\ell}_x(a)$ whose elements  are the lengths $|L_x|$ among the unique patterns. We define $\boldsymbol{\ell}_y(a)$ similarly for the corresponding lengths $|L_y|$. Let us also define the vector $\boldsymbol{r}(a)$ whose elements are the number of times each unique pattern appears in $a$. We will drop the dependencies on $a$ when the context is clear. We refer to $\boldsymbol{r}$ as the \emph{repetition} vector. The efficiency of $a$ can be rewritten in the following suggestive form:
\begin{equation}\label{eq: gen eff}
    \eta(a) = \frac{{\boldsymbol{r}}^{\top} \left((1+\alpha)\boldsymbol{\ell}_x + \boldsymbol{\ell}_y\right) - (2+\alpha){||\boldsymbol{r}||}_1}{(1+\alpha){\boldsymbol{r}}^{\top}(\boldsymbol{\ell}_y + \boldsymbol{\ell}_x)}.
\end{equation}
We note that the denominator of \eqref{eq: gen eff} is simply the number of links in the ring, $n$, multiplied by $1+\alpha$. This indicates the optimal welfare $\frac{1}{2}W(a^{\text{opt}})$. The numerator of \eqref{eq: gen eff} counts (and weights with associated payoff) the number of coordinating $x$ and $y$ links given the description vectors $\boldsymbol{\ell}_x$, $\boldsymbol{\ell}_y$, and $\boldsymbol{r}$. For a profile $a$ and its associated description vectors $\boldsymbol{\ell}_x$ and $\boldsymbol{\ell}_y$, let us define the vector $\boldsymbol{s}(a)$ of identical length, whose components are given by $s_j(a) := \gamma({\ell}_{x,j} + {\ell}_{y,j})-\lceil \alpha({\ell}_{y,j} + 1)\rceil - 2	- \left\lceil (1+\alpha)^{-1}\left[ 2-\alpha({\ell}_{x,j} - 1) \right]_+ \right\rceil$. The number $s_j$ is the difference between the adversaries available to a particular segment pattern (given budget $\gamma$) whose length is given by ${\ell}_{x,j}$ and ${\ell}_{y,j}$, and the minimum number of adversaries needed to ensure its  stability (given by \eqref{eq:y_nec}, \eqref{eq:x_nec}). We refer to $\boldsymbol{s}$ as the \emph{surplus} vector. The quantity $\boldsymbol{r}^{\top}\boldsymbol{s}$ is the excess budget after using the minimum required number of adversaries to stabilize $a$.  Property (2A) asserts that a target profile of minimum efficiency satisfies $\boldsymbol{r}^{\top}\boldsymbol{s} = 0$. 

%
%

Now, consider an action profile $a^1$ with $\boldsymbol{\ell}_x^1 = (\ell_{x,1},\ell_{x,2},\ell_{x,3})$, $\boldsymbol {\ell}_y^1 = (\ell_{y,1},\ell_{y,2},\ell_{y,3})$ and $\boldsymbol{s} = (s_1,s_2,s_3)$ with $s_1 > 0$ and  $s_2$, $s_3 < 0$. Hence, we can find $\boldsymbol{r}^1$ such that $(\boldsymbol{r}^1)^{\top} \boldsymbol{s}^1=0$. Thus, $a^1$ is a candidate for a minimum efficiency stable state. Furthermore, consider the profiles $a^2$ and $a^3$ (possibly defined on different ring graphs), where
$a^2$ is associated with $\boldsymbol{\ell}_x^2 = (\ell_{x,1},\ell_{x,2})$ and  $\boldsymbol{\ell}_y^2 = (\ell_{y,1},\ell_{y,2})$, and $a^3$ is associated with $\boldsymbol{\ell}_x^3 = (\ell_{x,1},\ell_{x,3})$ and  $\boldsymbol{\ell}_y^3 = (\ell_{y,1},\ell_{y,3})$. One can find repetition vectors $\boldsymbol{r}^2$, $\boldsymbol{r}^3$ that satisfy $(\boldsymbol{r}^2)^{\top} \boldsymbol{s}=0$ and $(\boldsymbol{r}^3)^{\top} \boldsymbol{s}=0$.

Define $g_i = \ell_{y,i} + (1+\alpha)\ell_{x,i} - (2+\alpha)$ and $\ell_i=\ell_{x,i}+\ell_{y,i}$ for each $i=1,2,3$. We can express efficiency of $a^1$ as $\eta(a^1)=\frac{r_2^1(g_2-\frac{s_2}{s_1}g_1) + r_3^1 (g_3-\frac{s_3}{s_1}g_1)}{(1+\alpha)(r_2^1(\ell_2-\frac{s_2}{s_1}\ell_1) + r_3^1 (\ell_3-\frac{s_3}{s_1}\ell_1))}$. One can write the efficiencies of $a^2$, $a^3$ as 
$\eta(a^2)=\frac{g_2-\frac{s_2}{s_1}g_1}{(1+\alpha)(\ell_2-\frac{s_2}{s_1}\ell_1)}$ and $\eta(a^3)=\frac{g_3-\frac{s_3}{s_1}g_1}{(1+\alpha)(\ell_3-\frac{s_3}{s_1}\ell_1)}$.
%
%
%
%
Observe that $\eta(a^1)$ is a mediant sum of weighted values $\eta(a^2)$ and $\eta(a^3)$. Hence, either $\eta(a^2)$ or $\eta(a^3)$ is less than or equal to $\eta(a^1)$. This result can be extended in a similar way to show that for any profile consisting of multiple segment patterns, one can construct another profile of lower efficiency using up to two unique segment patterns from the original action profile. \hfill $\blacksquare$ 

\noindent\textbf{\underline{Proof of Step 3} (Theorem \ref{theorem:SI_opt})}

Using the collection of results we have obtained in Steps 1-3, we can now prove Theorem \ref{theorem:SI_opt}. From property (2B), the search for a minimal efficiency stable state, i.e., one that gives the efficiency \eqref{eq:SI_problem}, reduces to finding four lengths: $\boldsymbol{\ell}_x = (\ell_{x,1},\ell_{x,2})$ and $\boldsymbol{\ell}_y = (\ell_{y,1},\ell_{y,2})$. The form of the objective function in the integer program of \eqref{eq:SI_opt} thus coincides with the expression for $\eta(a^2)$ in property (2B).  Each $\ell_{z,i}$, $z\in{x,y}$ and $i\in\{1,2\}$,  must satisfy the length criterion $\ell_{y,i} \geq \left\lceil \frac{2+\alpha}{1-\alpha} \right\rceil$ and $\ell_{x,i} \geq 2$ ($3$ if $\alpha = 0$). These length conditions are consequences of the stabilizability conditions \eqref{eq:y_nec} and \eqref{eq:x_nec}. Lastly, one can find a repetition vector $\boldsymbol{r}$ that satisfies $\boldsymbol{r}^{\top} \boldsymbol{s}=0$, as long as $s_1 > 0$ and $s_2 < 0$, or $s_1 = 0$ with $\ell_{x,1}, \ell_{x,2} = 0$. \hfill $\blacksquare$

%
%


\subsection{Proof of Theorem \ref{theorem:SU_opt}}

Here, we provide a proof of Theorem \ref{theorem:SU_opt}, which characterizes the minimal efficiency a static uninformed adversary can induce on a $k$-connected ring graph. The arguments rely on an extension of intermediate step 1 from the proof of Theorem \ref{theorem:SI_opt} to $k$-connected ring graphs. 

In particular, the necessary and sufficient condition to stabilize a $y$ segment in a $k$-connected ring graph is \begin{equation}\label{eq:ystable_kring}
		|S_y \cap L_y| \geq \left\lceil \alpha\left( k |L_y| + \frac{k(k+1)}{2} \right) \right\rceil + k(k+1), 
\end{equation}
and the spacing between two sequential $y$ adversaries within $L_y$ is no more than $\left\lceil \frac{1}{k\alpha} \right\rceil$. Note that according to this condition, the segment length must also satisfy $|L_y| \geq \max\left\{1,\lceil \frac{ k(k+1)(1 + \alpha/2) }{1-k\alpha} \rceil \right\}$. A derivation of the condition is as follows. There are $\sum_{j=1}^k (|L_y| - j)$ links between  agents in $L_y$. Assuming $|L_y|$ satisfies the length requirement, there are $2\sum_{j=1}^k j$ links from $L_y$ to outside $L_y$. The potential of $y_{L_y}$ (all agents in $L_y$ play $y$) exceeds that of $x_{L_y}$ (all play $x$) if $|S_y \cap L_y| + \sum_{j=1}^k (|L_y| - j) \geq (1+\alpha)\left[ \sum_{j=1}^k (|L_y| - j) + 2\sum_{j=1}^k j \right]$, which reduces to \eqref{eq:ystable_kring}. One can prove sufficiency in a similar manner as step 1 from the previous section -- by allocating the $y$ adversaries with a spacing of $\left\lceil \frac{1}{k\alpha} \right\rceil$ apart, the potential of $y_{L_y}$ exceeds that of any other $a_{L_y} \neq \{y_{L_y},x_{L_y} \}$.

We are now ready to prove Theorem \ref{theorem:SU_opt}.
    A static and uninformed policy cannot strategically place adversarial nodes in the network. It can only specify the numbers of $x$ and $y$ adversaries. Its baseline performance is given by the minimal damage that can be inflicted over all possible allocations of these adversaries. Hence to characterize \eqref{eq:SU_OPT},  we seek the allocation of adversaries that ensures the best-case efficiency for the network.

    First, we consider the case $\gamma < k\alpha$. The adversarial nodes can be allocated sparsely enough across the entire network such that the condition \eqref{eq:ystable_kring} is violated. Consequently, the all $x$ profile is the unique stochastically stable state. Therefore, no damage can be inflicted on the system in this regime. Note that if $k\alpha > 1$, no damage is possible regardless of the budget.
    
    Now, consider $\gamma > k\alpha$. If $k\alpha < 1$ and $G \in \mcal{G}^k$ is sufficiently large, an allocation of $y$ adversaries according to \eqref{eq:ystable_kring} would ensure conversion of the entire network to $y$, giving an efficiency of $\frac{1}{1+\alpha}$. However, let us consider a  re-allocation of these adversaries that maximally mitigates such damage. The idea is to only allow a minimal fraction $f$ of the network to be converted to $y$, while the rest of the network plays $x$. 
    
    Suppose $y$ adversaries are allocated to every agent in a contiguous segment, whose length is a fraction $f$ of the entire network. Suppose this segment is sufficiently long such that \eqref{eq:ystable_kring} is satisfied. Now, the remaining $\gamma - f$ adversaries  should be allocated to the rest of the network such that the remaining fraction $1-f$ of the network (another contiguous segment) is still stable to $x$. Indeed, an adversarial agent density of up to $k\alpha$ in the remaining network fails to induce any $y$ agents. The smallest $f$ that satisfies these conditions is given by $f = \frac{\gamma - k\alpha}{1 - k\alpha}$. This establishes \eqref{eq:SU_OPT}. Note in this analysis, the adversary exclusively chooses to implement $y$ adversaries. Based on the above arguments, an optimal static uninformed policy never chooses to use $x$ adversaries.

\section{Proofs: Performance of dynamic policies}\label{sec:dynamic_analysis}
In this section, we give proofs for the minimum efficiency dynamic adversaries can induce. Similar to Section \ref{sec:static_analysis}, we will first prove Theorem \ref{theorem:MI_opt}, the case of a dynamic informed adversary.  We then give the proof of Theorem \ref{theorem:DU_opt}.


Due to the time-dependent nature of dynamic policies, we cannot rely on potential game arguments to compute stochastically stable states as we did in Section \ref{sec:static_analysis}. One must instead leverage the theory of regularly perturbed Markov processes and resistance trees. Before delving into the proof of Theorem \ref{theorem:MI_opt}, we provide a brief overview of this theory below. More detailed treatments can be found in \cite{Young1993,Young_2001}.

\subsection{Preliminary: Regularly perturbed Markov processes and resistance trees}

\begin{definition}
	A Markov process with transition matrix $P^\epsilon$ defined over state space $\mcal{A}$ and parameterized by a perturbation $\epsilon \in (0,\bar{\epsilon}]$ for some $\bar{\epsilon}>0$ is a \emph{regular perturbation of the process} $P^0$ if it satisfies:
	\begin{enumerate}
		\item $P^\epsilon$ is aperiodic and irreducible for all $\epsilon \in (0,\bar{\epsilon}]$.
		\item $\lim_{\epsilon \rightarrow 0^+} P^\epsilon(a,a') \rightarrow P^0(a,a')$ for all $a,a' \in \mcal{A}$.
		\item If $P^\epsilon(a,a') > 0$ for some $\epsilon \in (0,\bar{\epsilon}]$ then there exists  $r(a,a') \geq 0$ such that $0 < \lim_{\epsilon \rightarrow 0^+} \frac{P^\epsilon(a,a')}{\epsilon^{r(a,a')}} < \infty$.
		We call $r(a,a')$ the \emph{resistance} of transition $a \rightarrow a'$.
	\end{enumerate}
\end{definition}
The log-linear learning process is a regularly perturbed process with error parameter $\epsilon = e^{-\beta}$. The transition graph of $P^\epsilon$ is a directed graph whose nodes are the action profiles $\mcal{A}$ and the edge $(a,a')$ exists if and only if $P^\epsilon(a,a') > 0$. The weights of such edges are given by the resistances $r(a,a')$. The resistance of a path of length $m$, $\zeta = (z_1\rightarrow z_2 \rightarrow \dots \rightarrow z_m)$, is the sum of resistances along the state transitions: $r(\zeta) := \sum_{k=1}^{m-1} r(z_k,z_{k+1})$. Let us denote the \emph{recurrent classes} of the unperturbed process $P^0$ as $E_1,E_2,\dots,E_N$ with $N\geq 1$ where each class $E_k \subset \mcal{A}$. A recurrent class satisfies the following.
\begin{enumerate}
	\item For all $a \in \mcal{A}$, there is a zero resistance path from $a$ to $E_k$ for some $k\in\{1,\ldots,N\}$.
	\item For all $k \in \{1,\ldots,N\}$, and all $a,a' \in E_k$, there exists a zero resistance path from $a$ to $a'$ and from $a'$ to $a$.
	\item For all $a,a'$ with $a \in E_k$ for some $k \in \{1,\ldots,N\}$ and $a' \notin E_k$, $r(a,a') > 0$.
\end{enumerate}
One can also consider another directed transition graph whose nodes are the $N$ recurrent classes. In this graph, all edges exist. Edge $(E_i,E_j)$ is weighted by $\rho_{ij}$, defined as the minimum resistance among paths in the action profile transition graph starting from $E_i$ and ending in $E_j$: $\rho_{ij} := \min_{a\in E_i, a' \in E_j} \min_{\zeta \in \mcal{P}(a\rightarrow a')} r(\zeta) $. where $\mcal{P}(a\rightarrow a')$ denotes the set of all paths starting at $a$ and ending at $a'$. Let $\mcal{T}_k$ be the set of all spanning trees rooted in the class $E_k$. That is, an element of $T \in \mcal{T}_k$ is a directed graph with $N-1$ edges such that there is a unique path from $E_j$ to $E_k$, for every $j \neq k$. The resistance $R(T)$ of the rooted tree $T$ is the sum of resistances $\rho_{ij}$ on the $N-1$ edges that compose it. Now, define $\psi_k := \min_{T \in \mcal{T}_k} R(T)$ as the \emph{stochastic potential} of recurrent class $E_k$. We will use the following result to identify stochastically stable states.

\begin{lemma}[from \cite{Young1993}]\label{lem:stoch_potential}
	The state $a \in \mcal{A}$ is stochastically stable if and only if $a \in E_k$, where $k \in \argmin{j\in\{1,\ldots,N\}} \psi_j$. 	That is, it belongs to a recurrent class with minimum stochastic potential. It is the unique stochastically stable state if and only if $E_k = \{a\}$ and $\psi_k < \psi_j$, $\forall j \neq k$.
\end{lemma}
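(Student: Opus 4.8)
The plan is to reduce the statement to a purely combinatorial fact about minimum-resistance spanning trees via the Markov chain tree theorem, and then to pass from the full state space $\mcal{A}$ down to the reduced graph on recurrent classes. Since $P^\epsilon$ is irreducible and aperiodic for every $\epsilon \in (0,\bar\epsilon]$ (condition 1), it admits a unique stationary distribution $\mu^\epsilon$, and the Markov chain tree theorem expresses it as
\begin{equation}
    \mu^\epsilon(a) = \frac{\sum_{T \in \mcal{T}(a)} \prod_{(i\to j)\in T} P^\epsilon(i,j)}{\sum_{a'\in\mcal{A}}\sum_{T \in \mcal{T}(a')} \prod_{(i\to j)\in T} P^\epsilon(i,j)},
\end{equation}
where $\mcal{T}(a)$ denotes the set of spanning in-trees (arborescences) directed toward the root $a$. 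The stochastically stable states are by definition the support of $\lim_{\epsilon\to 0^+}\mu^\epsilon$, so the whole statement follows once the $\epsilon\to 0$ asymptotics of this ratio are understood.

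The first step is to extract those asymptotics. By the regular-perturbation condition (3), each present edge satisfies $P^\epsilon(i,j) \asymp \epsilon^{r(i,j)}$, i.e. its ratio to $\epsilon^{r(i,j)}$ converges to a strictly positive constant, so the weight of a tree obeys $\prod_{(i\to j)\in T}P^\epsilon(i,j)\asymp \epsilon^{R(T)}$ with $R(T)=\sum_{(i\to j)\in T} r(i,j)$. Summing over $T$, the numerator for root $a$ behaves like $\epsilon^{\Gamma(a)}$ up to a positive constant, where I set $\Gamma(a):=\min_{T\in\mcal{T}(a)} R(T)$, the minimum rooted-tree resistance. Dividing by the normalizer gives $\mu^\epsilon(a)\asymp \epsilon^{\Gamma(a)-\min_{a'}\Gamma(a')}$, so $\lim_{\epsilon\to 0^+}\mu^\epsilon(a)>0$ exactly when $\Gamma(a)=\min_{a'\in\mcal{A}}\Gamma(a')$. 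Thus the stochastically stable states are precisely the minimizers of $\Gamma$.

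It remains to show that the minimizers of $\Gamma$ are exactly the states lying in the recurrent classes of minimum stochastic potential; this is the reduction step and the main obstacle. The idea is to contract each recurrent class $E_k$ to a single node and work in the reduced graph on $E_1,\dots,E_N$ with inter-class resistances $\rho_{ij}$. Using property (2) of recurrent classes (any two states of $E_k$ are joined by a zero-resistance path) and property (1) (every state reaches some class at zero resistance), I would argue that a minimum in-tree rooted at any $a\in E_k$ can be assembled from a minimum reduced in-tree rooted at $E_k$ (contributing resistance $\psi_k$) together with zero-resistance intra-class edges and zero-resistance edges draining the transient states, and conversely that contracting any in-tree rooted at $a\in E_k$ yields a reduced in-tree rooted at $E_k$ whose resistance lower-bounds $R(T)$. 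This gives $\Gamma(a)=\psi_k$ for every $a\in E_k$, so $\Gamma$ is constant on each class. A separate argument, using property (3) (leaving a class costs positive resistance) together with the zero-resistance exit available at a transient state, shows $\Gamma(a)>\min_k\psi_k$ whenever $a$ is transient. Combining, $\min_{a'}\Gamma(a')=\min_k\psi_k$ and the minimizers are exactly $\bigcup\{E_k: k\in\argmin{j}\psi_j\}$, which is the first assertion.

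The uniqueness criterion then follows from the structure of this minimizer set: the stochastically stable state is unique iff the union is a singleton, which happens iff the minimizing class is itself a singleton $E_k=\{a\}$ and its potential is strictly below every other, $\psi_k<\psi_j$ for all $j\neq k$ (otherwise either a second state of $E_k$ or a state of a tied class would also be stable). I expect the contraction argument of the previous paragraph to be the delicate part: care is needed to verify that the decomposition of a minimum in-tree composes into a single arborescence respecting the root and edge directions after contraction, and that the inequality for transient states is genuinely strict. The asymptotic steps are routine once the tree theorem is invoked.
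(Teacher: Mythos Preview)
The paper does not actually prove this lemma; it is stated with attribution to Young~\cite{Young1993} and used as a black box, so there is no ``paper's own proof'' to compare against. Your proposal is essentially the standard argument from Young's original paper: invoke the Markov chain tree theorem to express $\mu^\epsilon$, use the regular-perturbation condition to extract the $\epsilon^{R(T)}$ asymptotics of tree weights, and then contract recurrent classes to pass from the full state-level quantity $\Gamma(a)$ to the reduced stochastic potential $\psi_k$.

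Your outline is correct, and you have correctly flagged the one genuinely nontrivial step: the contraction argument showing $\Gamma(a)=\psi_k$ for $a\in E_k$ and $\Gamma(a)>\min_k\psi_k$ for transient $a$. Two small cautions if you flesh this out. First, when you build a full in-tree on $\mcal{A}$ from a reduced in-tree on the classes, you must also supply zero-resistance edges that span each $E_j$ internally \emph{and} drain every transient state into some class; you should check that these pieces can be stitched together without creating a cycle or leaving an extra out-edge at some non-root vertex. Second, the strict inequality for transient $a$ relies on the fact that in any in-tree rooted at a transient state, the path from the nearest recurrent class to the root must contain at least one edge leaving that class, and property~(3) forces that edge to have strictly positive resistance; this is easy but worth stating explicitly rather than leaving as ``a separate argument.''
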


\subsection{Proof of Theorem \ref{theorem:MI_opt}}

The logic of the proof follows the same three-step structure as the proof of Theorem \ref{theorem:SI_opt}. The only component that differs are the necessary and sufficient budget conditions to stabilize $x$ and $y$ segments. Indeed, we expect a dynamic informed adversary to need fewer adversaries than its static counterpart.

An outline of the proof is as follows. We show a particularly defined dynamic policy, which we term an \emph{aggressive policy}, is sufficient to stabilize a given target profile. This entails proving that $a$ is the recurrent class of minimum stochastic potential (see Lemma \ref{lem:stoch_potential}). To do so, we characterize the set of recurrent classes and demonstrate that minimum resistance paths leaving each class leads to another that is more ``similar" to the target profile $a$. We then prove necessity -- any other dynamic policy utilizing strictly fewer adversarial nodes than the aggressive policy cannot stabilize $a$. We can then formulate an integer optimization problem similar to Theorem \ref{theorem:SI_opt}, but with different constraints on the number of adversaries needed for each $x$ and $y$ segment. To begin, we formally define the aggressive policy based on profile $a \in \mcal{A}$.


%
\begin{figure*}[t]
	\centering
	\includegraphics[scale=.4]{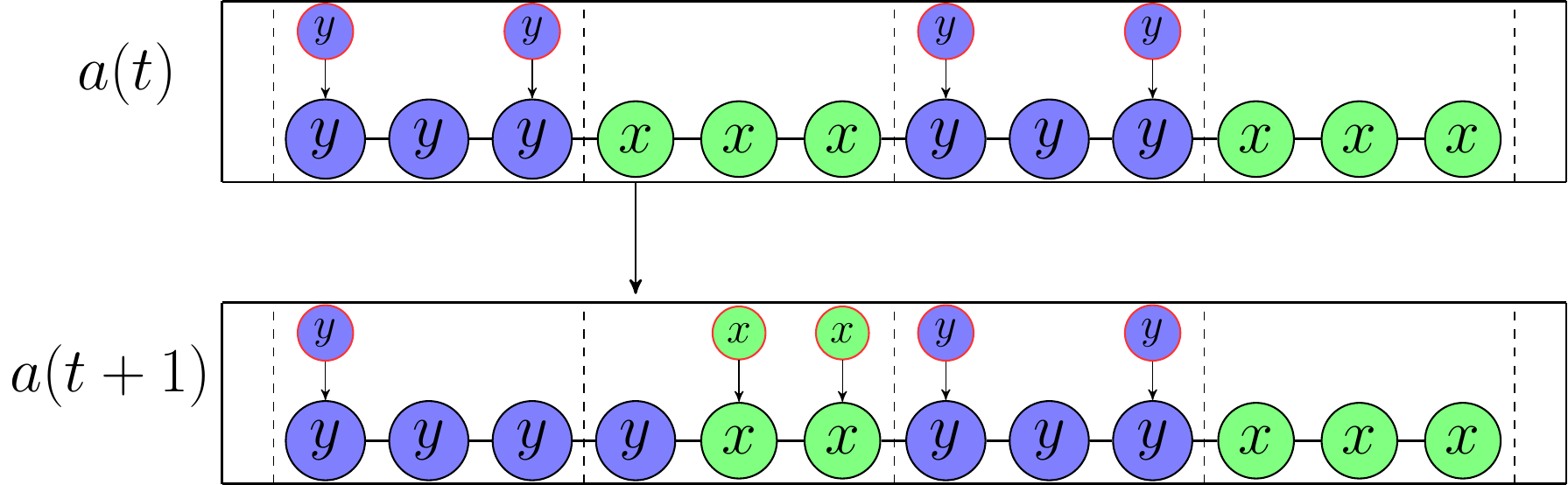} \quad
	\includegraphics[scale=.4]{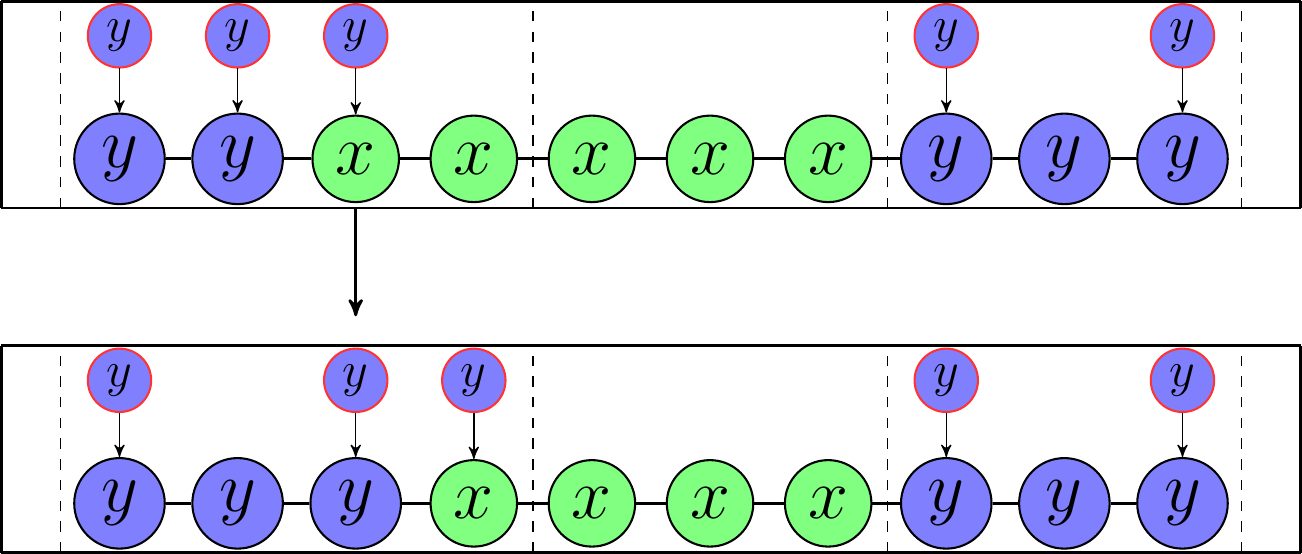}
	\caption{\small  We illustrate here both defensive and offensive strategies in an aggressive policy. (Left) Defensive $y$ strategies are applied the first and third segments from the left. The fourth agent from the left transitioning from $x$ to $y$ at time $t+1$ activates a defensive $x$ strategy in the second. No adversaries are deployed to $x$ segments until only two neighboring agents playing $x$ remain. (Right) A defensive and offensive $y$ strategy are applied simultaneously to the first segment. The offensive strategy attaches a $y$ adversary to the $x$ agent that has a $y$ neighbor.   }
	\label{fig:aggressive}
\end{figure*}
\begin{definition}\label{def: agg policy}
  (Aggressive policy targeting $a$). An \emph{aggressive policy targeting} $a \in \mcal{A}$ is a state-dependent policy with adversarial placements $\{S_x(a(t))$, $S_y(a(t))\}_{t\geq 0}$ satisfying the following properties. 

	\noindent\textbf{1)} \emph{(Defensive $y$ strategy)} For each $y$-segment $L_y$ contained in $a$, suppose $[p,q] = \{p,p+1,\ldots,q\} \subseteq L_y$, with $p\neq q$, is the longest segment of agents within $L_y$ playing $y$ in $a(t)$. Then,
	    \begin{equation}
          	S_y(a(t)) \cap [p,q] =
        	\begin{cases}
        		\{p,q\} & \text{if } a_{p-1}(t)=a_{q+1}(t)=x \\
        		p & \text{if } a_{p-1}(t)=x, \ a_{q+1}(t)=y \\ 
        		q & \text{if } a_{p-1}(t)=y, \ a_{q+1}(t)=x 
    	    \end{cases}
        \end{equation}
    If the length of $[p,q]$ is one $(p=q)$, then $p \notin S_y(a(t))$. 
    
    \noindent\textbf{2)} \emph{(Defensive $x$ strategy)} For each $x$-segment $L_x$ contained in $a$, suppose $[p,q]$, $p\neq q$, is the longest segment of agents within $L_x$ playing $x$ in $a(t)$.
	    \begin{enumerate}[label=(\alph*)]
	        \item   If $\alpha < \frac{1}{2}$ and $|L_x| \leq 1 + \left\lfloor \frac{1-\alpha}{\alpha} \right\rfloor$, then
	        \begin{equation}
            	S_x(t) \cap [p,q] = 
            	\begin{cases}
            		\{p,q\} & \text{if } q-p = 1 \\
            		\varnothing & \text{otherwise}
            	\end{cases}
            \end{equation}
            If $|L_x| \geq 2 + \left\lfloor \frac{1-\alpha}{\alpha} \right\rfloor$, then $S_x(t) \cap L_x = \varnothing$.
            \item Suppose $\alpha \geq \frac{1}{2}$. Then $S_x(t) \cap L_x = \varnothing$.
	    \end{enumerate}

	\noindent\textbf{3)} \emph{(Offensive strategies)} Consider the segment of lowest index that is not aligned, i.e. $a_i(t) \neq y$ for at least one $i \in L_y = [u,v]$, for a $y$-segment. Then the following properties hold for $S_y(t)$. A similar implementation holds if it is an $x$-segment.
        \begin{enumerate}[label=(\alph*)]
            \item  Denote $[p,q] \subset L_y$ as the longest segment of agents within $L_y$ playing $y$ in $a(t)$. Then $S_y(t) \cap L_y$ contains either $p-1$ or $q+1$, but not both.
            \item If $a_i(t) = x$ for all $i \in L_y$, then $S_y(t) \cap L_y$ contains
            \begin{itemize}
                \item $u$ or $v$ (but not both), when $a_{u-1}(t) = a_{v+1}(t)$.
                \item $u$, when $a_{u-1}(t) = y$ and $a_{v+1}(t) = x$.
                \item $v$, when $a_{u-1}(t) = x$ and $a_{v+1}(t) = y$.
            \end{itemize}
        \end{enumerate}
\end{definition}

Properties 1 and 2 describe ``defensive $y$ (resp. $x$)" strategies to maintain $y$ ($x$) segments over time. A defensive $y$ strategy is implemented on all $y$-segments at any given time. In property 2, defensive $x$ strategies are applied only if $\alpha < \frac{1}{2}$, and to segments that are shorter than a threshold length. Furthermore, the strategy does not ``activate" until there are at most two consecutive agents playing $x$ in the segment. Property 3 describes ``offensive" strategies that are intended to convert segments back to their original type $x$ or $y$. Note that the aggressive policy applies an offensive strategy to only a single segment at any time $t$, if needed.  Figure \ref{fig:aggressive} depicts an illustration of allocations of adversarial nodes to segments according to an aggressive policy. 

Let $n_y$ be the number of $y$ segments and $n_x$ the number of $x$ segments of length at most $1 + \left\lfloor \frac{1-\alpha}{\alpha} \right\rfloor$ in profile $a$. Then, the minimum number of adversarial nodes needed to implement an aggressive policy targeting $a$ is
\begin{equation}
	\begin{cases}
		2(n_y + n_x) + 1, \text{ if } \alpha < \frac{1}{2} \\
		2n_y + 1, \text{ if } \alpha \geq \frac{1}{2} \\
	\end{cases}
\end{equation}

Here, the additional $+1$ adversary is needed to implement an offensive strategy. We now establish some basic facts and terminologies. Under log-linear learning, states transition via unilateral agent deviations. If two profiles $a^1$ and $a^2$ differ by agent $i$'s deviation, the resistance is
\begin{equation}
	r(a^1 \rightarrow a^2) = \left[ \tilde{U}_i(a^1_i,a^1_{-i};S(a^1)) - \tilde{U}_i(a^2_i,a^1_{-i};S(a^1)) \right]_+
\end{equation}
where recall $\tilde{U}_i$ is agent $i$'s perceived utility \eqref{eq:perceived_u}. To include more specificity regarding such deviations, we introduce the following notation. Suppose in profile $a^1$,  agent $i$ plays $z \in \{x,y\}$, has $b\in \{0,1,2\}$ neighbors also playing $z$, and is influenced by an adversarial node of type $s \in \{x,y,\varnothing\}$. In $a^2$, agent $i$ unilaterally deviates to $\{x,y\}\setminus z$. Then we write
\begin{equation}
	\begin{aligned}
		\omega(z,b,s) &:= r(a^1 \rightarrow a^2)
	\end{aligned}
\end{equation}
to denote the magnitude of the resistance for this transition. The following result characterizes the minimum required lengths for $x$ and $y$ segments in a target profile.
\begin{lemma}\label{lem:a_lengths}
	If $a\in\mcal{A}$ is stochastically stable under the aggressive policy targeting $a$, then
	\begin{itemize}
		\item all $x$ segments of $a$ are of length 2 or greater. 
		\item all $y$ segments of $a$ are of length $\lceil \frac{2+\alpha}{1-\alpha} \rceil$ or greater. 
	\end{itemize} 
\end{lemma}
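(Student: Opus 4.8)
The plan is to prove the contrapositive using the resistance-tree characterization of Lemma \ref{lem:stoch_potential}: if a segment of $a$ is shorter than the stated threshold, then $a$ fails to lie in a recurrent class of minimum stochastic potential and is therefore not stochastically stable. Throughout I would compute elementary transition resistances with the $\omega(z,b,s)$ notation, reading the adversarial placement at each visited state directly off the aggressive policy of Definition \ref{def: agg policy}. The two bounds are handled separately: the $x$-bound follows from a zero-resistance escape, while the $y$-bound requires comparing the cost of destroying a segment against the cost of restoring it.

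First I would dispatch the $x$-segments. Suppose $a$ contains an $x$-segment of length one, i.e. an isolated agent $i$ with $a_i=x$ whose two neighbors both play $y$. Since $a$ is aligned no offensive strategy is active, and the defensive $x$ strategy attaches an adversary to an $x$-run only when its length is exactly two; hence agent $i$ carries no adversary. Its perceived utility for $x$ is $0$ while deviating to $y$ yields $2$, so $\omega(x,0,\varnothing)=[0-2]_+=0$. Thus there is a zero-resistance edge leaving $a$ that converts the isolated $x$ to $y$, while the reverse flip (two coordinating $y$ links, no $x$ adversary to offset them) has strictly positive resistance. Consequently $a$ leaks with zero resistance to a distinct profile and cannot return for free, so $a$ is transient and not stochastically stable; this forces every $x$-segment to have length at least two.

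The heart of the argument is the $y$-segment bound. Fix a $y$-segment $L_y$ of length $\ell:=|L_y|$ flanked by $x$ on both sides, and let $a'$ be the profile obtained from $a$ by recoloring $L_y$ to $x$ (merging it with the adjacent $x$-segments). For the destruction $a\to a'$, the defensive $y$ strategy keeps adversaries at the two endpoints of the current longest $y$-run, so flipping $y$-agents from one end inward costs $\omega(y,1,y)=[2-(1+\alpha)]_+=1-\alpha$ at each of the first $\ell-1$ flips, whereas the final isolated $y$-agent carries no adversary and flips at $\omega(y,0,\varnothing)=0$; this exhibits a path, hence $\rho(a\to a')\le(\ell-1)(1-\alpha)$. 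For the restoration $a'\to a$, any path must nucleate the first $y$ inside a region that is entirely $x$, so the triggering flip has two $x$-neighbors and, even with an offensive $y$-adversary, costs $\omega(x,2,y)=[2(1+\alpha)-1]_+=1+2\alpha$, after which the run grows for free; thus $\rho(a'\to a)=1+2\alpha$. If $\ell<\lceil\tfrac{2+\alpha}{1-\alpha}\rceil$, equivalently $(\ell-1)(1-\alpha)<1+2\alpha$, then $\rho(a\to a')<\rho(a'\to a)$. I would convert this pairwise comparison into a potential statement by the standard tree-surgery argument: starting from a minimum-resistance spanning tree rooted at $a$ and replacing the edge directed into $a$ along the $a'$-to-$a$ path by the edge $a\to a'$ yields a spanning tree rooted at $a'$ of resistance at most $\psi(a)-\rho(a'\to a)+\rho(a\to a')<\psi(a)$, so $\psi(a')<\psi(a)$ and $a$ is not of minimum stochastic potential. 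Taking the contrapositive gives $|L_y|\ge\lceil\tfrac{2+\alpha}{1-\alpha}\rceil$.

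I expect the main obstacle to be the rigorous bookkeeping of the state-dependent placements along these paths, and in particular establishing $\rho(a'\to a)=1+2\alpha$ as a genuine lower bound over \emph{all} restoration paths: one must rule out the possibility that some cleverer path nucleates the segment more cheaply by borrowing $y$-agents from neighboring segments, which requires arguing that recreating $L_y$ necessarily forces at least one interior $y$-nucleation in the merged $x$-region. A secondary subtlety is that $a'$ must itself be a bona fide recurrent class for the surgery to apply, and that the tree-swap legitimately isolates the pair $(a,a')$ among all recurrent classes; the former follows once we verify the merged $x$-segment admits no further zero-resistance escape, which is exactly the $x$-segment analysis already carried out.
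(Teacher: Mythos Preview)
Your proposal is correct and follows essentially the same route as the paper. For the $x$-segment bound both you and the paper exhibit a zero-resistance escape from an isolated $x$ agent, and for the $y$-segment bound both compare the destruction cost $\rho(a\to a')\le(\ell-1)(1-\alpha)$ against the restoration cost $\rho(a'\to a)=1+2\alpha$ and finish with a tree-surgery swap between $a$ and $a'$. The paper states the swap slightly differently (it asserts the edge $(a',a)$ itself lies in the minimum tree rooted at $a$ and replaces it directly by $(a,a')$), but the content is the same; in both versions the step ``the removed edge has weight at least $\rho(a'\to a)$'' is taken for granted and would benefit from a line of justification.
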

\begin{proof} 
Suppose $(a_{i-1}, a_i, a_{i+1}) = (y,x,y)$ for some agent $i$. Regardless of what adversarial policy is applied, there is a zero resistance path out of $a$. Specifically, $r(a \rightarrow a') = 0$ where $a_i = y$ and $a'_{-i} = a_{-i}$. The least resistant path from $a'$ to $a$ is  $1-\alpha > 0$, possible if and only if $i \in S_x(a')$. Hence, $a$ is not a recurrent class and therefore is not stochastically stable.

		Suppose $a$ has a $y$ segment $L_y$ and $|L_y| \leq \left\lceil \frac{1+2\alpha}{1-\alpha} \right\rceil$. Let $a'$ be the similar profile with $a'_{L_y} = x_{L_y}$. Note that $a'$ is a recurrent class. When the aggressive policy applies an offensive strategy on $L_y$, the minimum resistance path starting from $a'$ and ending in $a$ is given by a border agent's $x\rightarrow y$ transition (having resistance $1+2\alpha$), followed by each subsequent neighbor's $x\rightarrow y$ transition (each having resistance $0$). The resistance of this path is $\rho_{a',a} = 1 + 2\alpha$.

		The minimum resistance path starting from $a$ and ending in $a'$ consists of $|L_y|-1$ transitions of resistance $1-\alpha$. Hence, $\rho_{a, a'} = (1-\alpha)(|L_y|-1) < (1-\alpha) \frac{1+2\alpha}{1-\alpha}  = 1 + 2\alpha$.
		Let $T$ be the minimum resistance tree rooted in $a$, and note that the edge $(a',a)$ is necessarily part of $T$. Consider the tree $T'$ rooted in $a'$ by replacing the edge $(a',a)$ from $T$ with $(a,a')$. $T'$ has  lower stochastic potential than $T$ and therefore $a$ is not stochastically stable.
\end{proof}
Henceforth, we only consider target profiles $a$ with properties given by Lemma \ref{lem:a_lengths}. In the forthcoming analysis, we characterize the set of recurrent classes induced by the aggressive policy. We first define terminology to describe any profile $a'$ relative to $a \in \mcal{A}$. Let $L_y$ be a $y$-segment contained in $a$. We say the segment $L_y$ is \emph{homogeneous in $a'$} if $a'_i = a'_j$ for all $i,j \in L_y$. We say $L_y$ is \emph{heterogeneous} if it is not homogeneous. Similar terminology applies for $x$-segments of $a$. We say the profile $a'$ is homogeneous if every $x$ and $y$-segment is homogeneous in $a'$, and it is heterogeneous if it is not homogeneous. We will denote particular portions of a homogeneous action profile $a'$ with brackets $|_x$ and $|_y$ that separate the segments based on the target profile $a$. For instance, $|X|_x X|_y$ refers to the actions of agents in an action profile for two consecutive segments with all agents playing $x$ in the first as well as the second. The subscripts convey that agents play $x$ in the target profile $a$ in the first segment and $y$ in the second segment. We will often compare two homogeneous action profiles that differ only in one segment, and term the two profiles \emph{similar}.

The following result characterizes the set of all recurrent classes induced by the aggressive policy. In particular, each recurrent class consists of a single homogeneous action profile.
\begin{lemma}\label{lem:AP_rec_class}
	The recurrent classes associated with the aggressive policy targeting $a$ are the homogeneous action profiles that do not contain an instance of $|X|_y Y|_x$.

\end{lemma}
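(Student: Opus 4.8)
The goal is to characterize the recurrent classes of the unperturbed process induced by the aggressive policy targeting $a$: I claim they are exactly the homogeneous profiles avoiding the forbidden pattern $|X|_y Y|_x$. By Definition~\ref{def: agg policy}, recall a profile $a'$ is homogeneous if every $x$- and $y$-segment of $a$ is monochromatic in $a'$; the pattern $|X|_y Y|_x$ denotes a $y$-segment of $a$ that is all-$x$ in $a'$ immediately followed by an $x$-segment of $a$ that is all-$y$ in $a'$. The plan is to prove two containments: (i) every homogeneous profile avoiding $|X|_y Y|_x$ is a recurrent class, and (ii) every other profile is not recurrent, because a zero-resistance path leads away from it. I will use the resistance notation $\omega(z,b,s)$ introduced above, noting $\omega(z,b,s) = 0$ precisely when the deviation is (weakly) utility-improving under the perceived utility $\tilde U_i$.

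\textbf{Step 1: non-recurrence of heterogeneous profiles.}
First I would show any heterogeneous profile $a'$ has a zero-resistance exit. Pick a segment of $a$ that is non-monochromatic in $a'$; at an internal boundary between an $x$-block and a $y$-block inside that segment there is an agent whose deviation strictly increases (or leaves unchanged) its perceived utility, since the defensive/offensive strategies of the aggressive policy attach adversaries only at the endpoints dictated by Definition~\ref{def: agg policy}, leaving interior disagreements unprotected. Concretely, an agent flanked by one same-type and one opposite-type neighbor with no adversarial support satisfies $\omega(z,1,\varnothing)=0$, giving a free transition toward greater homogeneity. Iterating collapses any heterogeneous profile to a homogeneous one at zero resistance, so heterogeneous profiles cannot satisfy recurrence property~(3).

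\textbf{Step 2: non-recurrence of the forbidden pattern.}
Next I would rule out homogeneous profiles containing $|X|_y Y|_x$. Here the key is the offensive strategy (property~3 of Definition~\ref{def: agg policy}): at the single unaligned segment the policy deploys its one offensive adversary so that some border agent can flip at zero resistance toward alignment with $a$. When a $y$-segment is all-$x$ and the following $x$-segment is all-$y$, the shared boundary agent playing $y$ has a $y$-neighbor on one side and an $x$-neighbor (the border of the all-$x$ $y$-segment) on the other; because $\alpha<1$, a $y$ adversary is not placed there by the defensive rules, and the offensive $y$-strategy enables the flip $y\to x$ (or its mirror) with $\omega=0$. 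This produces a zero-resistance path out, so such a profile is not recurrent.

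\textbf{Step 3: recurrence of the remaining profiles, and the main obstacle.}
Finally, for a homogeneous profile $a'$ that avoids $|X|_y Y|_x$, I would verify the three recurrence properties. Properties~(1) and~(2) follow once (i)--(ii) above show every state reaches this family at zero resistance and that any two such profiles are mutually reachable through intermediate single-segment flips. The crux is property~(3): I must show every exit from $a'$ has strictly positive resistance. For each segment, its monochromatic value is locally stabilized either by the defensive adversaries at the endpoints (for $y$-segments, and for short $x$-segments when $\alpha<1/2$) or intrinsically by coordination payoffs (for long $x$-segments and all $x$-segments when $\alpha\geq 1/2$), so flipping any single endpoint or interior agent costs positive resistance; the avoidance of $|X|_y Y|_x$ is exactly what guarantees no boundary configuration admits the free offensive flip exploited in Step~2. \textbf{The hard part} will be a careful case analysis at segment boundaries: one must check, for every combination of a segment's type in $a$, its current monochromatic value in $a'$, and the values of its two neighboring segments, that the resistance $\omega(z,b,s)$ of the cheapest unilateral deviation is strictly positive, using the precise adversary placement from Definition~\ref{def: agg policy} and the length bounds from Lemma~\ref{lem:a_lengths}. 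The forbidden pattern $|X|_y Y|_x$ is precisely the unique boundary configuration that fails this test, which is why it must be excluded.
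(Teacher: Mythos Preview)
Your three-step decomposition matches the paper's approach exactly, but the resistance calculations that drive each step contain errors that would make the argument fail as written.

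In Step~1, the claim that $\omega(z,1,\varnothing)=0$ is false for $z=x$: an $x$-agent with one $x$-neighbor and no adversary has resistance $\omega(x,1,\varnothing)=[(1+\alpha)-1]_+=\alpha>0$. More importantly, your ``interior disagreements are unprotected'' claim fails precisely when a heterogeneous segment contains a single maximal subblock: the defensive strategy in Definition~\ref{def: agg policy} places adversaries at \emph{exactly} the two endpoints of that longest subblock, so the boundary $y$-agents satisfy $\omega(y,1,y)=1-\alpha>0$ and the adjacent $x$-agents satisfy $\omega(x,1,\varnothing)=\alpha>0$. There is no free exit inside that segment. The paper's argument does not attempt to find a free exit in an arbitrary heterogeneous segment; it uses the \emph{offensive} strategy, which is always active on the lowest-index unaligned segment and creates a zero-resistance transition there (e.g.\ $\omega(x,1,y)=0$), after which the offensive rotates to the next segment.

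In Step~2, an offensive $y$-strategy attaches a $y$-adversary and therefore cannot enable a $y\to x$ flip; the sentence ``the offensive $y$-strategy enables the flip $y\to x$'' is backwards. The correct mechanism for the pattern $|X|_y Y|_x$ is that the boundary agent inside the $Y|_x$ segment plays $y$ with one $x$-neighbor (from $X|_y$) and one $y$-neighbor, carries no defensive adversary (it sits in an $x$-segment of $a$ with no $x$-subblock present), and hence $\omega(y,1,\varnothing)=0$; equivalently, as the paper phrases it, the offensive $x$-strategy applied to $Y|_x$ drives the segment to $X|_x$ at zero resistance.

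In Step~3, your invocation of property~(2) is confused: you write that ``any two such profiles are mutually reachable through intermediate single-segment flips,'' but single-segment flips between homogeneous profiles have strictly positive resistance (this is exactly what Lemma~\ref{lem:resistances} computes). The good homogeneous profiles are not one recurrent class; each is a \emph{singleton} recurrent class, so property~(2) is vacuous and the entire content is property~(3), which you correctly identify as the crux.
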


\begin{proof}
	We can disqualify any action profile $a^1$ having at least one heterogeneous segment $L$ from being a recurrent class. The reason is that an offensive strategy induces a zero-resistance path from $a^1$ to $a^2$, where $a^2_{L}$ is homogeneous. Furthermore, the resistance of any path from $a^2$ back to $a^1$ is necessarily non-zero.

    Thus, recurrent classes must be homogeneous profiles. Observe any homogeneous profile containing an instance of $|X|_y Y|_x$ has a zero-resistance path to a profile where that instance is replaced by $|X|_y X|_x$. This is because an aggressive policy applies an offensive $x$ strategy on the $Y|_x$ segment. Any homogeneous profile that does not contain an instance of $|X|_y X|_x$ is necessarily composed of instances of $|X|_x Y|_y X|_x$, $|Y|_y X|_x Y|_y$, $|X|_x X|_y X|_x$, and $|Y|_y Y|_x Y|_y$. Under an aggressive policy, there are no zero-resistance paths out of any of these segment patterns.
\end{proof}
Let us denote $\mcal{A}_{\text{R}} \subset \mcal{A}$ the set of recurrent classes. We can thus focus our attention to homogeneous action profiles described in Lemma \ref{lem:AP_rec_class} as candidates for stochastically stable states, which includes the target profile $a$ itself. For the next steps in the proof, we need the following calculations regarding minimum resistances between similar recurrent classes.

\begin{lemma}\label{lem:resistances}
	Consider the following transitions between two similar recurrent classes $a^1, a^2 \in\mcal{A}_{\text{R}}$. If the transition is
	\begin{enumerate}
		\item  $|Y|_y Y|_x Y|_y \rightarrow |Y|_y X|_x Y|_y$, then $\rho_{a^1,a^2} = 1-\alpha$.
		\item  $|X|_x X|_y X|_x \rightarrow |X|_x Y|_y X|_x$, then $\rho_{a^1,a^2} = 1+2\alpha$.
		\item $|X|_x Y|_y X|_x \rightarrow |X|_x X|_y X|_x$, then $\rho_{a^1,a^2} > 1+2\alpha$.
		\item $|X|_x \rightarrow|Y|_y$, then $\rho_{a^1,a^2} > 1-\alpha$.
	\end{enumerate}
\end{lemma}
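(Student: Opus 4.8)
The plan is to read off every $\rho_{a^1,a^2}$ directly from the resistance formula $r(a^1\to a^2) = [\tilde{U}_i(a^1_i,a^1_{-i};S(a^1)) - \tilde{U}_i(a^2_i,a^1_{-i};S(a^1))]_+$, using the shorthand $\omega(z,b,s)$ for the cost of the single deviating agent. Since every agent on a ring has exactly two neighbors, only a handful of resistance values can ever occur, and I would first tabulate them by substituting the payoff matrix and the adversary bonuses into $\tilde{U}_i$. For instance, an agent playing $y$ with two $y$-neighbors and an attached $x$-adversary has $\omega(y,2,x)=[2-(1+\alpha)]_+=1-\alpha$; an endpoint $y$-agent with one $y$-neighbor and a defensive $y$-adversary has $\omega(y,1,y)=[2-(1+\alpha)]_+=1-\alpha$; and an agent playing $x$ with two $x$-neighbors and a $y$-adversary has $\omega(x,2,y)=[2(1+\alpha)-1]_+=1+2\alpha$. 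The remaining values ($\omega(y,2,\varnothing)=2$, $\omega(x,2,\varnothing)=2+2\alpha$, etc.) I would list in a small table. With these in hand each claim reduces to selecting, among all deviation sequences allowed under the adversary placements prescribed by the aggressive policy (Definition \ref{def: agg policy}), the one of least total resistance.

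For transitions (1) and (2) the changing segment is misaligned and the policy runs an offensive strategy. The path I would exhibit flips the single border agent that the offensive adversary attaches to and then lets the rest cascade. In case (1), the first agent to flip $y\to x$ still has two $y$-neighbors, so it costs $\omega(y,2,x)=1-\alpha$; every subsequent agent then already has an aligned $x$-neighbor, giving resistance $[1-(1+\alpha)]_+=0$, so the total is exactly $1-\alpha$. For the matching lower bound I would note that in \emph{any} path the first segment-flip is forced to occur while both neighbors still play $y$, and $\min_s \omega(y,2,s)=1-\alpha$; hence $\rho_{a^1,a^2}\ge 1-\alpha$ and the two bounds coincide. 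Case (2) is identical with $x,y$ interchanged: the first flip costs $\omega(x,2,y)=1+2\alpha$, the cascade is free, and $\rho_{a^1,a^2}=1+2\alpha$.

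For transitions (3) and (4) the target segment is correctly aligned and must be destroyed against the defensive strategy. Here I would argue the cheapest path peels agents off the endpoints of the shrinking aligned run, since an interior flip involves an agent with two like neighbors and no protecting adversary and is strictly costlier. In case (3) each endpoint $y$-agent carries a defensive $y$-adversary and one $y$-neighbor, costing $\omega(y,1,y)=1-\alpha$ per flip, while the last isolated $y$-agent loses its adversary (by the singleton rule) and flips for free; the total is $(|L_y|-1)(1-\alpha)$. Invoking $|L_y|\ge \lceil\tfrac{2+\alpha}{1-\alpha}\rceil$ from Lemma \ref{lem:a_lengths} gives $(|L_y|-1)(1-\alpha)>1+2\alpha$, the asserted bound. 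Case (4) is the analogous computation for an $x$-segment of length $|L_x|\ge 2$, combining the defensive $x$-strategy (active only for $\alpha<\tfrac12$) with the length constraint to conclude that the destruction cost strictly exceeds the offensive rebuild cost $1-\alpha$.

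The main obstacle is the optimality half of each claim: showing the exhibited peeling-and-cascading paths are of genuinely minimum resistance, not merely convenient. This means ruling out alternative sequences, in particular paths that fragment a segment into several runs before clearing it and paths that try to exploit transient adversary repositioning. I expect to dispatch this with an exchange/monotonicity argument, namely that any flip of an agent with two like neighbors costs at least as much as an endpoint flip, so reordering to always act on a current run-endpoint never increases total resistance. A secondary subtlety is securing the \emph{strict} inequalities in (3) and (4) at boundary parameters (e.g. when $\tfrac{2+\alpha}{1-\alpha}$ is an integer, or $\alpha=\tfrac12$ with a minimal-length $x$-segment), where the length bound alone yields only equality; these I would close using the refined length requirements that stochastic stability already forces through Lemma \ref{lem:a_lengths}.
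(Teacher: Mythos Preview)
Your proposal is correct and follows essentially the paper's route: tabulate the $\omega(z,b,s)$ values, exhibit the offensive-adversary-initiated cascade for (1)--(2), and for (3)--(4) compute the endpoint-peeling cost and invoke the length bounds of Lemma~\ref{lem:a_lengths}. The only minor difference is that the paper dispatches optimality in (3)--(4) by a short case split (either the path peels from the endpoints or it contains at least one interior $\omega(\cdot,2,\varnothing)$ step) rather than your exchange argument, and the strict-inequality boundary concerns you flag are glossed over there as well.
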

\begin{proof}
		\noindent 1) When an aggressive policy implements an offensive $x$ strategy on the middle segment $L_x$, a path from $a^1$ to $a^2$ consists of $|L_x|$ unilateral switches from $y$ to $x$. The first switch requires $\omega(y,2,x) = 1-\alpha$ resistance. The rest switch with resistance $\omega(y,1,x) = 0$. The resistance of any other path necessarily is greater than $1-\alpha$

		\noindent 2)  The least resistant path occurs when an offensive $y$ strategy is applied to the middle segment $L_y$, and the agents along the segment sequentially switch. This path requires one deviation of resistance $\omega(x,2,y) = 1+2\alpha$, and the other $|L_y|-1$ deviations of resistance $\omega(x,1,y) = 0$.

		\noindent 3)  A defensive $y$ strategy is applied to the middle segment $L_y$. The path $\zeta$ in which each agent sequentially deviates requires $|L_y|-1$ deviations of type $\omega(y,1,y) = 1-\alpha$.  By Lemma \ref{lem:a_lengths}, $|L_y| > \lceil \frac{1+2\alpha}{1-\alpha} \rceil \Rightarrow r(\zeta) > 1 + 2\alpha$. Another path requires at least one deviation of an agent in the middle of the segment, which has resistance $\omega(y,2,\varnothing) = 2$. Hence, $\rho_{a^1,a^2} > 1+2\alpha$.

		\noindent 4) A transition of this form either
		
		\begin{itemize}
			\item has at least one deviation of type $\omega(x,2,\varnothing) = 2(1+\alpha) > 1-\alpha$, e.g. an agent with two $y$ neighbors switches.
				
			\item has at least one deviation of type $\omega(x,1,x) = 1+2\alpha> 1-\alpha$, e.g. a defensive $x$ strategy is applied.
				
			\item only has deviations of type $\omega(x,1,\varnothing) = \alpha$, i.e. no defensive $x$ strategy applied. This is the case if $\alpha \geq \frac{1}{2}$, in which $\alpha > 1-\alpha$. The other case is if $\alpha \leq \frac{1}{2}$ and $|L_x| \geq 2 + \left\lfloor \frac{1-\alpha}{\alpha} \right\rfloor$. It takes $|L_x| - 1$ transitions of type $\omega(x,1,\varnothing) = \alpha$, for which $\alpha(|L_x|-1) > 1-\alpha$. 
		\end{itemize}
\end{proof}

Every $a' \in \mcal{A}_{\text{R}}$ can be assigned a level of ``disagreement" $d(a')$ corresponding to the number of homogeneous segments that differ relative to their counterparts in target profile $a$:
\begin{equation}
	d(a') := |\{L_z | a'_{L_z} \neq a_{L_z}, z \in \{x,y\} \}|.
\end{equation}
The next result demonstrates that disagreement decreases along minimum resistance paths between recurrent classes. 

\begin{lemma}\label{lem:subgraphs}
	Consider the directed graph $\Sigma= (\mcal{A}_{\text{R}},\mcal{E})$ in which the edges $\mcal{E}$ are formed by connecting recurrent classes through the minimum resistance edge leaving each class.  Then $\Sigma$ is composed of a  collection of disconnected subgraphs $\Sigma_u = (\mcal{A}_u,\mcal{E}_u)$, each one corresponding to a particular recurrent class $u$. Each subgraph $\Sigma_u$ has the following properties.
	\begin{itemize}
		\item The class $u$ belongs to $\Sigma_u$, and for every node $v \in \mcal{A}_u$, $v \neq u$, there is a unique path from $v$ to $u$. 
		\item There exists a class $v \in \mcal{A}_u$ s.t. $(u,v), (v,u) \in \mcal{E}_u$.
		\item $u = \argmin{v \in \mcal{A}_u} d(v)$.
	\end{itemize}
	We refer to the class $u$ as the \emph{head} of the subgraph $\Sigma_u$.
\end{lemma}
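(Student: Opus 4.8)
The plan is to observe that choosing, from each recurrent class, its unique minimum-resistance outgoing edge turns $\Sigma$ into a \emph{functional graph}: every node then has out-degree exactly one, so each weakly connected component consists of a single directed cycle together with in-trees directed toward that cycle. Before using this structure I would fix a tie-breaking rule (e.g.\ act on the lowest-index disagreeing segment) so the outgoing edge is well defined, and verify that the minimum-resistance out-edge of any class always lands on a \emph{similar} class in the sense of Lemma~\ref{lem:AP_rec_class}: reaching a class that differs in two or more segments requires passing through an intermediate recurrent class at no smaller total cost, so the cheapest out-move alters exactly one segment and the resistances are precisely those tabulated in Lemma~\ref{lem:resistances}. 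Since similar classes differ in one segment, every edge of $\Sigma$ changes disagreement by exactly one unit, $d(a^2)=d(a^1)\pm 1$; I will call an edge a \emph{correction} when it decreases $d$ and a \emph{de-correction} when it increases $d$.

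The analytic core is the following local claim, which I would prove by the resistance computations of Lemma~\ref{lem:resistances} extended over all neighbor configurations of the flipped segment: if the minimum-resistance edge out of $w$ is a de-correction, flipping a previously correct segment $L$ and landing at $v$, then the minimum-resistance edge out of $v$ is the reverse correction, flipping $L$ back and returning to $w$. The mechanism is that Lemma~\ref{lem:resistances} already shows the correction of a segment is strictly cheaper than the de-correction of the \emph{same} segment (compare cases $1$ versus $4$, and $2$ versus $3$), so the return move $v\to w$ is strictly cheaper than the move $w\to v$. Because $w\to v$ was $w$'s cheapest move, every correction of another disagreeing segment and every de-correction available already at $w$ costs at least as much as $w\to v$, hence strictly more than the return, and the configurations of those segments are unchanged in $v$. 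The delicate point is the two segments adjacent to $L$, whose resistances can shift when $L$ is flipped; ruling out that such a shift creates a cheaper competitor at $v$ requires invoking the forbidden-pattern restriction of Lemma~\ref{lem:AP_rec_class} together with the minimum-length bounds of Lemma~\ref{lem:a_lengths}, and this is the step I expect to be the main obstacle.

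Granting the local claim, the global structure follows cleanly. I would show every directed cycle of $\Sigma$ has length two: along a cycle $d$ moves up and down by one, so it attains a minimum at some node $w^{\star}$; the incoming cycle edge at $w^{\star}$ is forced to be a correction into $w^{\star}$ and the outgoing cycle edge a de-correction out of $w^{\star}$, so the local claim forces the successor of $w^{\star}$ to point directly back to $w^{\star}$, collapsing the cycle to the $2$-cycle $\{w^{\star},v\}$ with $(w^{\star},v),(v,w^{\star})\in\mcal{E}_u$. Setting $u:=w^{\star}$ gives the second bullet. Since $\Sigma$ is functional, following outgoing edges from any node terminates in this unique cycle; taking $\mcal{A}_u$ to be the classes whose out-trajectory enters it (and $\mcal{E}_u$ the induced edges) makes the $\Sigma_u$ disjoint and covering $\mcal{A}_{\text{R}}$, and yields for every $v\neq u$ a unique directed path from $v$ to $u$, which is the first bullet.

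Finally, for the third bullet I would establish $u=\argmin{v\in\mcal{A}_u} d(v)$. Along every outgoing edge other than the single de-correction $u\to v$, the value $d$ strictly decreases, so each node of $\mcal{A}_u$ distinct from $u$ either lies on a descending trajectory and therefore has $d$ strictly larger than its successor's, or it is the cycle partner $v$ with $d(v)=d(u)+1$. In all cases $d(u)<d(v')$ for every $v'\in\mcal{A}_u\setminus\{u\}$, so $u$ is the unique disagreement-minimizer and hence the head of $\Sigma_u$, completing the proof.
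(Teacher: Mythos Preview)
Your functional-graph framework is attractive, but there is a genuine gap in the step where you assert that the minimum-resistance out-edge always lands on a \emph{similar} class (one segment flipped) and hence changes $d$ by exactly $\pm 1$. Your justification is that reaching a class that differs in two or more segments ``requires passing through an intermediate recurrent class at no smaller total cost.'' This fails precisely because of the forbidden pattern in Lemma~\ref{lem:AP_rec_class}: flipping a single correct $x$-segment $X|_x$ adjacent to a disagreeing $y$-segment $|X|_y$ produces the configuration $|X|_y Y|_x$, which is \emph{not} a recurrent class. From that non-recurrent state the aggressive policy's offensive $y$ strategy drives the $|X|_y$ segment to $|Y|_y$ at zero resistance, so the cheapest move out of a class containing $|X|_y X|_x Y|_y$ can land on the two-segment-different class $|Y|_y Y|_x Y|_y$ with $d$ unchanged. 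The paper calls this a ``type~2'' transition; its resistance can be strictly below the $1+2\alpha$ cost of the direct type~1 correction of $|X|_y$, so it is a genuine candidate for the minimum out-edge. Once such $d$-preserving edges exist, your parity argument for length-two cycles and your monotonicity argument for the third bullet both break down.

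The paper's proof sidesteps this by abandoning the one-segment/$\pm 1$ dichotomy and instead enumerating five concrete transition types (0--4) according to the local pattern being altered. It then argues case-by-case: types~0 and~1 decrease $d$; type~2 keeps $d$ constant but necessarily lands in $\mcal{A}_{\text{R}}^0:=\{a'\in\mcal{A}_{\text{R}}:a'\text{ contains }|Y|_y Y|_x Y|_y\}$, whose unique cheapest out-edge is type~0 and does decrease $d$; type~3 is never minimal for $a^1\neq a$; and a type~4 de-correction from $a^1\notin\mcal{A}_{\text{R}}^0$ creates the \emph{only} $|Y|_x$ instance in $a^2$, so the returning type~0 edge is forced back to $a^1$. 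Your ``local claim'' is essentially the paper's type~4 analysis, but you are missing the type~2 branch entirely. To repair your approach you would need to prove either that type~2 is never the strict minimum out-edge (which the paper does not claim) or to weaken your $\pm 1$ statement to ``$d$ is non-increasing except at the head, and strictly decreases within two steps,'' and then redo the cycle and minimality arguments accordingly.
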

\begin{proof}
By Lemma \ref{lem:AP_rec_class}, each recurrent class is a homogeneous action profile not containing an instance of $|X|_y Y|_x$. Let \begin{equation}
	\mcal{A}_{\text{R}}^0 := \{ a' \in \mcal{A}_{\text{R}} : a' \text{ has an instance of $|Y|_y Y|_x Y|_y$ } \}.
\end{equation}
Suppose $a^1\in \mcal{A}_{\text{R}}^0$. By Lemma \ref{lem:resistances}, the edge $(a^1,a^2) \in \mcal{E}$ has resistance $1-\alpha$, since $a^2$ is similar to $a^1$ where an instance of $|Y|_y Y|_x Y|_y$ in $a^1$ is replaced with $|Y|_y X|_x Y|_y$. Thus, any $(a^1,a^2) \in \mcal{E}$ with $a^1 \in \mcal{A}_{\text{R}}^0$ satisfies $d(a^2) = d(a^1) - 1$. We term this type of edge a ``type 0" transition.

Now, consider any class $a^1 \in \mcal{A}_{\text{R}} \setminus \mcal{A}_{\text{R}}^0$. The minimum resistance edge leaving $a^1$ to some $a^2 \in \mcal{A}_{\text{R}}$ is one of the following types (with resistances due to Lemma \ref{lem:resistances}).
\begin{enumerate}
	\item An instance $|X|_x X|_y X|_x$ replaced by $|X|_x Y|_y X|_x$, with resistance $r_1(a^1) = 1+2\alpha$. Then $d(a^2) = d(a^1)-1$.
	\item An instance  $|X|_y X|_x Y|_y$  replaced by $|Y|_y Y|_x Y|_y$, with resistance $r_2(a^1) > 1-\alpha$. Then $d(a^2) = d(a^1)$.
	\item An instance $|X|_x Y|_y X|_x$ replaced by $|X|_x X|_y X|_x$, with resistance $r_3(a^1) > 1+2\alpha$. Then $d(a^2) = d(a^1)+1$.
	\item An instance $|Y|_y X|_x Y|_y$ replaced by $|Y|_y Y|_x Y|_y$, with resistance $r_4(a^1) > 1-\alpha$. Then $d(a^2) = d(a^1)+1$.
\end{enumerate}
For $a^1 = a$, the minimum resistance path can only be of type 3 or 4.   Thus, the minimum resistance path out of $a$ must increase disagreement. Subsequently, $(a,a^2) \in \mcal{E}$ and $(a^2,a) \in \mcal{E}$. To see this, the case $(a,a^2)$ being  type 4 follows from previous arguments. For the case $(a,a^2)$ of type 3, it necessarily holds that $r_3(a^2), r_4(a^2) > 1+2\alpha$. Hence, the edge $(a^2,a)$ exists and is of type 1. 

For classes $a^1 \neq a$, it cannot be of type 3. If it is of type 1 or 2, the edge leads either to a class with lower disagreement or a class in $\mcal{A}_{\text{R}}^0$, which in turn leads to a class with lower disagreement. If $a^1 \neq a$ has a type 4 edge, $r_4(a^1) < 1+2\alpha$ and the path leads to a class  $a^2 \in \mcal{A}_{\text{R}}^0$ (with higher disagreement). Subsequently, the minimum resistance edge from $a^2$ leads back to $a^1$ with a type 0 edge. 

We can thus deduce for each class $u \neq a$ that has a type 4 edge, there is a corresponding subgraph $\Sigma_u   \subset \Sigma$ that has the following properties. All edges in $\mcal{E}_u$ except the edge leaving $u$ lead to a class that either has lower disagreement or has an edge connecting to another class with lower disagreement. Consequently, $u$ has the lowest disagreement of all classes in $\Sigma_u$, i.e. $u = \argmin{a' \in \mcal{A}_u} d(a')$.
\end{proof}

An illustration of a subgraph $\Sigma_u$ is shown in Figure \ref{fig:subgraph} (left). Note that the rooted tree on a subgraph $\Sigma_u$ with minimal stochastic potential is given by $\Sigma_u$ without the edge $(u,v)$. The next result asserts that the minimal resistance edge that leaves a subgraph leads to another subgraph whose head node has lower disagreement.
\begin{figure}
	\centering
	\includegraphics[scale=.42]{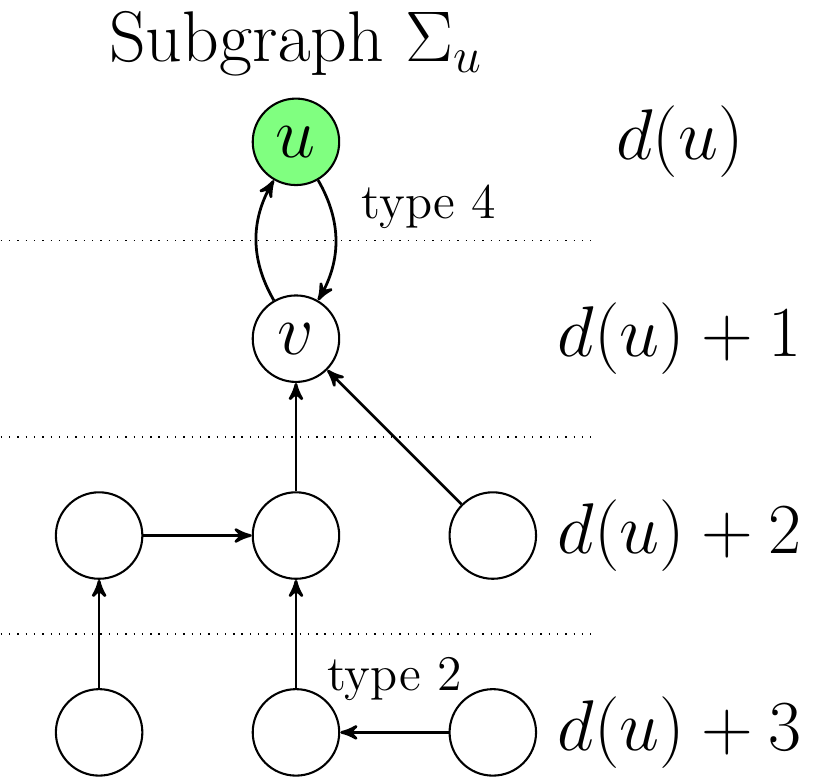}
	\includegraphics[scale=.15]{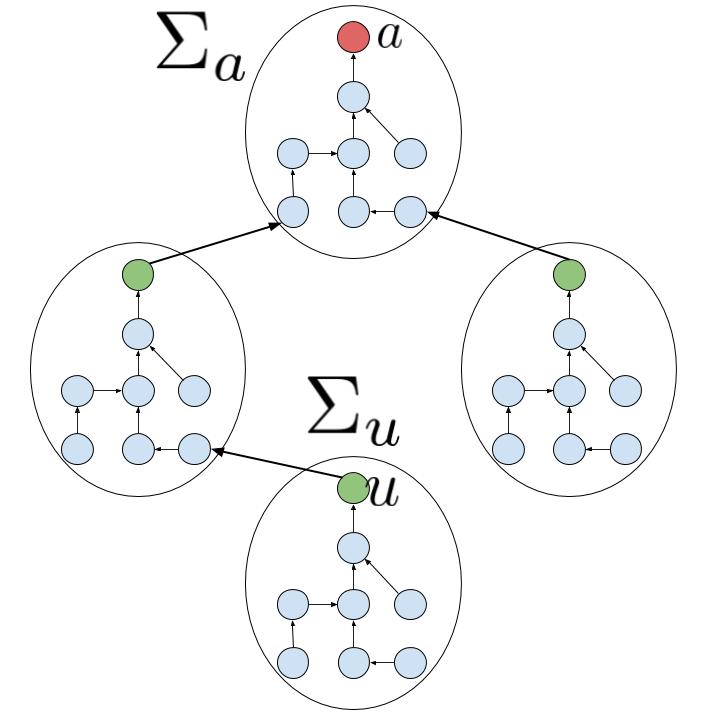}
	\caption{\small (Left) The graph formed by connecting recurrent classes through the minimum resistance edge leaving each class, is composed of disconnected subgraphs $\Sigma_u$ with the structure illustrated above (Lemma \ref{lem:subgraphs}). All paths in $\Sigma_u$ lead to a ``head node'' $u$ that has minimal disagreement among all other nodes. Disagreement decreases along any path. (Right) The resistance tree rooted at $a$ of minimum stochastic potential. A minimum resistance edge leaving each subgraph exits via the head node (Lemma \ref{lem:head_leave}).}
	\label{fig:subgraph}
\end{figure}

\begin{lemma}\label{lem:head_leave}
	Consider the subgraph $\Sigma_u=(\mcal{A}_u,\mcal{E}_u)$ where $u\neq a$. Then there is an edge $(u,v)$ starting at the head $u$ leading to a class $v$ of another subgraph $\Sigma_{u'}$ satisfying
\begin{equation}
	(u,v) \in \argmin{a^1 \in \mcal{A}_u, a^2 \notin \mcal{A}_u}\rho_{a^1,a^2}.
\end{equation}
Furthermore. $d(u') < d(u)$.
\end{lemma}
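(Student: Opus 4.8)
The plan is to exhibit an explicit escape edge at the head $u$ and then argue it attains the minimum. First I would recall the structure established in Lemmas \ref{lem:AP_rec_class}--\ref{lem:subgraphs}: every recurrent class is homogeneous with all misaligned segments isolated, so a misaligned $x$-segment always sits in the pattern $|Y|_y Y|_x Y|_y$ (equivalently, the class lies in $\mcal{A}_{\text{R}}^0$) and a misaligned $y$-segment always sits in $|X|_x X|_y X|_x$. Since the head $u \neq a$ is the unique disagreement-minimizer of $\mcal{A}_u$ and its minimum-resistance edge increases disagreement (forming the $2$-cycle), $u$ cannot lie in $\mcal{A}_{\text{R}}^0$; hence every misaligned segment of $u$ is a $y$-segment in context $|X|_x X|_y X|_x$, and $d(u) \geq 1$.

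\emph{Candidate escape.} Realigning any one such misaligned $y$-segment is a type-$1$ transition (Lemma \ref{lem:resistances}), of resistance $1+2\alpha$, that decreases disagreement to $d(u)-1$. Because $u$ minimizes $d$ over $\mcal{A}_u$, the resulting class has disagreement strictly below every member of $\mcal{A}_u$ and therefore lies outside $\mcal{A}_u$. This produces an edge $(u,v)$ leaving the subgraph with $\rho_{u,v} = 1+2\alpha$ and $d(v) = d(u)-1$.

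\emph{Minimality.} I would then show that no node of $\mcal{A}_u$ escapes with resistance below $1+2\alpha$. First I reduce to single boundary-crossing transitions between similar classes, since $\rho$ between non-adjacent classes is additive, and the only single-segment resistances cheaper than $1+2\alpha$ are type-$0$ ($1-\alpha$) and possibly type-$2$ and type-$4$ (each $>1-\alpha$). The decisive observation is that only the head can make a disagreement-decreasing move that exits $\mcal{A}_u$: for any $w \neq u$ a decreasing move lands at disagreement $d(w)-1 \geq d(u)$, and I would argue such a move remains inside $\mcal{A}_u$ (a containment property of the minimum-resistance drain toward $u$), so the cheap type-$0$ moves never leave the subgraph. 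The remaining non-decreasing exits are of type-$2$ and type-$4$, which land in $\mcal{A}_{\text{R}}^0$; here I would compare $r_2$ and $r_4$ against $1+2\alpha$ in the two regimes $\alpha < \tfrac12$ and $\alpha \geq \tfrac12$ using Lemma \ref{lem:resistances} to conclude they never give a strictly cheaper way out.

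\emph{Conclusion.} The minimum escape is thus attained by the head's type-$1$ realignment; its target $v$ satisfies $d(v) = d(u)-1$ and lies in some subgraph $\Sigma_{u'}$ whose head obeys $d(u') \leq d(v) = d(u)-1 < d(u)$, which gives both assertions. I expect the main obstacle to be the containment step: proving that every cheap disagreement-decreasing (type-$0$) transition from a node of $\mcal{A}_u$ — in particular from a class carrying several isolated misaligned $x$-segments, where realigning a ``wrong'' segment could a priori cross into a neighboring basin — remains within $\mcal{A}_u$, so that no escape undercuts the head's resistance $1+2\alpha$. Everything else reduces to the finite resistance bookkeeping of Lemma \ref{lem:resistances}.
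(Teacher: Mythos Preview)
Your plan has a genuine gap in the minimality step. You want the escape resistance to equal $1+2\alpha$, and for this you claim that type-$2$ transitions ``never give a strictly cheaper way out'' than the head's type-$1$ edge. But Lemma~\ref{lem:resistances} only gives $r_2 > 1-\alpha$, and in fact $r_2 < 1+2\alpha$ is generic. For instance, take $\alpha \geq \tfrac12$ so no defensive $x$ strategy is ever applied: converting $|X|_y X|_x Y|_y$ to $|Y|_y Y|_x Y|_y$ costs $|L_x|$ successive flips of type $\omega(x,1,\varnothing)=\alpha$ (starting from the $Y|_y$ boundary), followed by a zero-resistance offensive cascade through the $|X|_y$ segment; for $|L_x|=2$ this gives $r_2 = 2\alpha < 1+2\alpha$. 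So a type-$2$ escape can undercut your candidate, and your minimality argument collapses.

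The paper avoids this by \emph{not} trying to pin the minimum at $1+2\alpha$. Instead it argues that the minimum escape edge must be of type~$1$ or type~$2$ (type~$3$ is too expensive, type~$4$ is the internal $2$-cycle), and then shows separately that \emph{whichever} of these is cheapest is available from the head $u$: for type~$1$ this is immediate, and for type~$2$ the point is that $u$ contains every instance of the pattern $|X|_y X|_x Y|_y$ that appears in any node of $\mcal{A}_u$ (since $u$ has the most $|Y|_y$ segments). This sidesteps the need to compare $r_2$ against $1+2\alpha$ entirely. Your containment worry about type-$0$ moves is a legitimate subtlety, but it is secondary; even granting containment, the type-$2$ bound you need is simply false, so the approach as written cannot succeed.
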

\begin{proof}
We first observe that any edge $e_u \in \argmin{a^1 \in \mcal{A}_u, a^2 \notin \mcal{A}_u}\rho_{a^1,a^2}$ must be of type 1 or 2.  This is because the edge of type 1 out of $u$ has resistance $1+2\alpha$, so $e_u$ must incur a resistance no greater than $1+2\alpha$. This disqualifies $e_u$ to be of type 3. Type 4 is also disqualified because the only occurrence of a type 4 edge is between $u$ and $v$ for some $v\in\mcal{A}_u$. Hence, if $e_u$ is type 1, a minimal resistance path leaving $\Sigma_u$ leaves from the head $u$.

The other possibility is $e_u$ is of type 2. The head node $u$ has the most instances of $|Y|_y$ among all other nodes in $\mcal{A}_u$. This follows from the fact that all edges in $\mcal{E}_u$ except the one leaving $u$ are type 0, 1 or 2. Consequently, $u$ has the most instances of $|X|_y X|_x Y|_y$ from which a type 2 transition leaves $\Sigma_u$. Moreover, such instances appearing in any other class in $\mcal{A}_u$ also appears in $u$. Therefore, if $e_u$ is type 2, a minimal resistance path leaving $\Sigma_u$ leaves from the head $u$.

Whether $e_u$ is type 1 or 2, it leads to a class of $\Sigma_{u'}$ that either has lower disagreement or has an edge leading to another class with lower disagreement. Therefore, $d(u') < d(u)$.
\end{proof}

These results give us enough structure about the resistance trees to deduce that $a$ is the unique stochastically stable state.
 \begin{proposition}\label{a_stabilizable}
 	The profile $a$ is the unique stochastically stable state under an aggressive policy targeting $a$.
 \end{proposition}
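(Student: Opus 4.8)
The plan is to combine the structural results about the subgraph decomposition (Lemmas \ref{lem:subgraphs} and \ref{lem:head_leave}) to exhibit a minimum stochastic potential tree rooted at $a$, and then to argue no other recurrent class can match it. By Lemma \ref{lem:stoch_potential}, it suffices to show that $a \in \mcal{A}_{\text{R}}$ is the unique recurrent class minimizing the stochastic potential $\psi$. First I would recall that $a$ itself is a recurrent class (it is homogeneous and contains no instance of $|X|_y Y|_x$ by Lemma \ref{lem:AP_rec_class} and Lemma \ref{lem:a_lengths}), and that $a$ has disagreement $d(a) = 0$, the minimum possible value over $\mcal{A}_{\text{R}}$.

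Next I would construct an explicit spanning tree $T^*$ rooted at $a$ and show it has minimal resistance. The idea is to stitch together the subgraph structure: within each subgraph $\Sigma_u$ with head $u \neq a$, Lemma \ref{lem:subgraphs} already provides a tree directing every node toward the head $u$ (namely $\Sigma_u$ with the back-edge $(u,v)$ removed). To connect the heads, Lemma \ref{lem:head_leave} guarantees that from each head $u \neq a$ there is a minimum-resistance outgoing edge $(u,v)$ that both leaves $\Sigma_u$ and leads into a subgraph $\Sigma_{u'}$ with strictly smaller head-disagreement $d(u') < d(u)$. Concatenating these inter-subgraph edges produces, for every head, a directed path of strictly decreasing disagreement that must terminate at the unique subgraph whose head is $a$ (since $d(a)=0$ is the minimum). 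Together the intra-subgraph trees and the inter-subgraph head edges form a spanning tree $T^*$ rooted at $a$.

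The crux is then the optimality and uniqueness argument: I would show that $T^*$ attains $\min_{T \in \mcal{T}_a} R(T)$ and that for every other recurrent class $w \neq a$, $\psi_w > \psi_a$. The key observation is that $T^*$ uses, out of every class except $a$, precisely the minimum-resistance outgoing edge guaranteed by Lemmas \ref{lem:subgraphs} and \ref{lem:head_leave}; since a rooted tree selects exactly one outgoing edge per non-root node, no spanning tree rooted at $a$ can have smaller total resistance, so $\psi_a = R(T^*)$. For uniqueness, I would take any $w \neq a$ and compare $\psi_w$ to $\psi_a$ via the standard tree-surgery exchange argument (as used at the end of the proof of Lemma \ref{lem:a_lengths}): an optimal tree rooted at $w$ must, along the path from $a$ to $w$, traverse at least one edge that \emph{increases} disagreement, and such an edge is strictly more resistant than its disagreement-decreasing reverse (compare items 3,4 versus 1,2 of Lemma \ref{lem:resistances}); swapping reroutes the root to $a$ while strictly lowering resistance, giving $\psi_a < \psi_w$.

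The main obstacle I anticipate is the bookkeeping that ensures the concatenation of head-to-head edges actually yields a \emph{tree} and not a structure with cycles or with a node left unrooted — i.e., verifying that following minimum-resistance exits never revisits a subgraph and always strictly decreases $d$, so the head-level graph is itself a forest directed toward $a$. Lemma \ref{lem:head_leave} supplies the strict decrease $d(u') < d(u)$, which rules out cycles among heads, but I would need to check carefully that every non-root recurrent class has a well-defined unique outgoing edge in $T^*$ and that the disagreement-$0$ class is unique (which holds because $d(a')=0$ forces $a'=a$). Once acyclicity and the single-outgoing-edge property are confirmed, $T^*$ is automatically a spanning tree, and the resistance comparison closes the proof.
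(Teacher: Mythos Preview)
Your proposal is correct and follows essentially the same approach as the paper: build the candidate tree by taking the intra-subgraph trees from Lemma~\ref{lem:subgraphs} and connecting heads via the minimum-resistance exit edges of Lemma~\ref{lem:head_leave}, then argue minimality because every non-root node contributes its least-resistant outgoing edge, with the strictly decreasing disagreement forcing the root to be $a$. Your explicit tree-surgery argument for uniqueness is a bit more detailed than the paper's terse justification, but the underlying logic is the same.
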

\begin{proof}
	We need to show the rooted tree with minimal stochastic potential is rooted in $a$ (Lemma \ref{lem:stoch_potential}). Consider each subgraph $\Sigma_u$ detailed in Lemma \ref{lem:subgraphs}. By removing the edge $(u,v)$, we end up with a tree $T_u$, restricted over $\mcal{A}_u$, with minimal stochastic potential that is rooted in $u$.  Let $T$ be the rooted tree over all recurrent classes $\mcal{A}_{\text{R}}$ constructed by connecting trees $T_u$ through the minimum resistance edges leaving each head node $u \neq a$ (Lemma \ref{lem:head_leave}). Since all of these edges connect to different rooted trees with strictly lower disagreement, $T$ must be rooted in $a$, the class with minimal disagreement. 
	
	The tree $T$ has the minimum stochastic potential of all possible rooted trees because each sub-tree is of minimal potential, and edges connecting sub-trees are the minimal resistance edges leaving each sub-tree. 
\end{proof}
The structure of the minimum potential rooted tree $T$ is illustrated in Figure \ref{fig:subgraph} (right). Proposition \ref{a_stabilizable} is a sufficiency result -- the aggressive policy is a dynamic policy that stabilizes the profile $a$. Our next result asserts necessity -- the number of adversarial nodes employed by the aggressive policy is the minimum required budget to stabilize $a$.

\begin{proposition}\label{S_fewer}
	A  policy using fewer adversarial nodes than the aggressive policy targeting $a\in\mcal{A}$ cannot stabilize $a$.
\end{proposition}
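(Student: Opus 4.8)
The plan is to show that each of the three summands in the aggressive policy's budget is forced, so that removing even a single node breaks stabilization of $a$. Writing $B = 2n_y + 2n_x\,\mathds{1}(\alpha<\tfrac12) + 1$ for the number of nodes the aggressive policy deploys, I will argue that any policy with fewer than $B$ nodes either fails to make $a$ a recurrent class or fails to give $a$ strictly minimum stochastic potential, so that by Lemma \ref{lem:stoch_potential} the profile $a$ is not the unique stochastically stable state.

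First I would establish the $2n_y$ term through recurrence. At the target profile $a$, each $y$-segment has length at least two (Lemma \ref{lem:a_lengths}) and hence two distinct boundary agents, each with one $x$-neighbor and one $y$-neighbor. An undefended such agent has $\tilde U_i(y)=1 < 1+\alpha = \tilde U_i(x)$, so its $y\to x$ switch has resistance $\omega(y,1,\varnothing)=0$, while the reverse switch is costly. Thus any undefended boundary agent yields a zero-resistance edge out of $a$ whose reversal has positive resistance, and $a$ cannot be a recurrent class. Since the $2n_y$ boundary agents are pairwise distinct, the policy is forced to deploy $2n_y$ $y$-adversaries at state $a$, giving budget $\ge 2n_y$.

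Next I would obtain the $2n_x$ term (for $\alpha<\tfrac12$) by a resistance comparison against the competitor $\bar a$ obtained from $a$ by flipping every short $x$-segment to $y$. Along the undefended erosion of one such segment $L_x$, each interior flip costs only $\omega(x,1,\varnothing)=\alpha$ and the terminal flip costs $0$, so the segment converts at total resistance $\alpha(|L_x|-1)\le 1-\alpha$ whenever $|L_x|\le 1+\lfloor(1-\alpha)/\alpha\rfloor$, while the offensive rebuild costs at least $1-\alpha$. Defending the two-agent core with two $x$-adversaries raises the blocked flips to $\omega(x,1,x)=1+2\alpha$, strictly reversing the comparison (cf. Lemma \ref{lem:resistances}, item 4). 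Because a path may erode the short $x$-segments one after another, a policy must hold all $n_x$ two-agent cores simultaneously to keep every exit more expensive than its rebuild; pulling the defenders off any one core reopens a cheap exit toward $\bar a$ and, via the tree accounting of Lemmas \ref{lem:subgraphs}--\ref{lem:head_leave}, yields $\psi_{\bar a}\le\psi_a$. Finally, the offensive $+1$ is forced because the re-entry edges of the winning tree (the type~1 and type~2 transitions of Lemma \ref{lem:resistances}) each require one adversary applied to a not-yet-aligned segment while every aligned $y$-segment remains defended; reallocating a defender to play this role reopens a zero-resistance $y$-erosion, so the offensive node cannot be drawn from the defensive pool.

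The main obstacle is the simultaneity, namely proving the three requirements are additive rather than time-shareable by a clever dynamic policy. I would resolve it by exhibiting a single reachable configuration at which the obligations coincide: all $y$-segments aligned (demanding $2n_y$ $y$-adversaries), all short $x$-segments driven to their two-agent cores (demanding $2n_x$ $x$-adversaries), and one further segment left unaligned so that an offensive node (the $+1$) is required to furnish a finite-resistance re-entry into $a$. A budget of $B-1$ then leaves at least one obligation unmet by a pigeonhole argument, producing either a zero-resistance escape from $a$ or a competitor whose stochastic potential is no larger than $\psi_a$; either outcome contradicts unique stochastic stability of $a$ by Lemma \ref{lem:stoch_potential}. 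Showing that this configuration is genuinely forced -- that admissible single-agent perturbations can drive the process there against any policy -- is the technical heart, and it parallels the resistance-tree construction already used for the sufficiency direction in Proposition \ref{a_stabilizable}.
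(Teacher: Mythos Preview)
Your proposal is substantially more elaborate than the paper's argument, and in one respect overshoots: the paper does not attempt to prove that the offensive $+1$ node is necessary. Its proof ignores the offensive node entirely and argues only that if $\pi'$ has fewer adversaries, then some $y$-segment or some short $x$-segment of $a$ lacks a defensive strategy; for that \emph{single} segment it exhibits a path $a\to a'$ (to the similar profile with that segment flipped) whose resistance is strictly less than the minimum-resistance return $a'\to a$, and concludes by an edge swap in the minimum tree that $a$ cannot have minimum stochastic potential. Indeed, when the proposition is invoked in the proof of Theorem~\ref{theorem:MI_opt}, the per-segment budget that enters the $s_j$ constraints is $2$ or $4$, with no $+1$ term. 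So your third summand is not part of what is proved or needed downstream.

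Your simultaneity concern is legitimate---the paper's phrase ``a defensive strategy is not implemented under $\pi'$'' is informal for dynamic $\pi'$---but your proposed resolution has a flaw. At the configuration you describe (all $y$-segments aligned, all short $x$-segments eroded to two-agent cores), each $y$-boundary agent adjacent to an eroded side of an $x$-segment now has two $y$-neighbours and no longer requires a defender: $\tilde U_i(y)=2>1+\alpha=\tilde U_i(x)$. Hence the $2n_y$ obligation is \emph{not} binding at that state, and your pigeonhole count does not reach $B$. More generally, the defensive-$y$ and defensive-$x$ demands are never simultaneously maximal at a single state, which is exactly why the paper argues segment-by-segment---showing that missing defence on \emph{any one} segment already yields $\rho_{a,a'}<\rho_{a',a}$---rather than trying to force all demands at once. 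The cleaner route is the paper's: identify the one segment $\pi'$ leaves undefended along the relevant erosion path, compute the two resistances directly ($\alpha(|L_x|-1)$ versus $\ge 1-\alpha$ for an $x$-segment; $0$ versus $\ge 1+2\alpha$ for a $y$-segment), and swap the corresponding edge in a minimum tree rooted at $a$.
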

\begin{proof}
	Consider a policy $\pi'$ that uses fewer adversaries than the aggressive policy $\pi$.  Then there exists either an $x$ or $y$-segment of $a$ in which a defensive strategy is not implemented under $\pi'$ but is under $\pi$. Suppose it is an $x$ segment, and suppose $a'$ is a similar profile to $a$ in which an instance of $|X|_x$ is replaced by $|Y|_x$ on which $\pi$ applies a defensive $x$ strategy. This is the case only if $\alpha < \frac{1}{2}$ and $|L_x| \leq 1 + \left\lfloor \frac{1-\alpha}{\alpha} \right\rfloor$. Under $\pi'$, there is a path from $a$ to $a'$ using $|L_x|-1$ deviations of type $\omega(x,1,0) = \alpha$. The resistance of this path is less than $1-\alpha$. A minimum resistance path from $a'$ back to $a$ consists of at least one transition of type $\omega(y,2,x) = 1-\alpha$ or $\omega(y,2,\varnothing) = 2$. Hence, $r(a',a) \geq 1-\alpha$. Now suppose a defensive strategy fails on a $y$ segment. Suppose $a'$ is similar to $a$ in which an instance of $|Y|_y$ is replaced with $|X|_y$. A minimal path has zero resistance, since it uses only deviations of type $\omega(y,1,0) = 0$. In a similar manner, any path from $a'$ back to $a$ has resistance $r(a',a) \geq 1 + 2\alpha > 0$.
	
	In both cases, the resistance to get back to $a$ from $a'$ is greater than going from $a$ to $a'$. Therefore, the target profile $a$ cannot be stochastically stable under $\pi'$.
\end{proof}
We are now in a position to prove Theorem \ref{theorem:MI_opt}.
\begin{proof}[Proof of Theorem \ref{theorem:MI_opt}]
	We have just established that an aggressive policy targeting $a$ is the dynamic policy that stabilizes $a$ with the fewest number of adversaries. By Lemma \ref{lem:a_lengths}, a target profile must have $x$ segments of length 2 or greater and $y$ segments of length $\left\lceil \frac{2+\alpha}{1-\alpha} \right\rceil$ or greater. To implement the aggressive policy, there needs to be two adversaries for each $y$ segment of any length, and two adversaries for each $x$ segment of length no greater than $1 + \left\lfloor \frac{1-\alpha}{\alpha} \right\rfloor$ when $\alpha < \frac{1}{2}$. When $\alpha \geq \frac{1}{2}$, no adversaries are needed on any $x$ segment. Propositions \ref{a_stabilizable} and \ref{S_fewer} assert this amount of adversarial nodes are necessary and sufficient to stabilize the target profile.  By Step (2B) from Theorem \ref{theorem:SI_opt}, the minimum efficiency target profile has at most two segment patterns. We thus obtain an integer optimization problem similar to \eqref{eq:SI_opt}, except with the above constraints taken into account instead.
\end{proof}

\subsection{Proof of Theorem \ref{theorem:DU_opt}: Dynamic uninformed adversary}

In this subsection, we derive the minimum efficiency a dynamic uninformed adversary can induce on $k$-connected ring graphs. Similar to the static uninformed case (Theorem \ref{theorem:SU_opt}), the dynamic uninformed adversary  effectively can only select how many $x$ and $y$ adversaries to allocate at each time step, and cannot place them in a strategic manner. The major difference here is it can also randomize among influence sets by selecting different distributions at each time step. The idea of the proof is by being able to probabilistically attach an adversary to each agent in the network, the all $y$ profile can be stabilized independently of the budget $\gamma$. 

	Let us define $\Pi^* \subset \Pi_{\text{D}}$ as a set of dynamic uninformed policies that have the following properties. Suppose $\pi \in \Pi^*$. Then
	\begin{enumerate}[label=(\alph*)]
		\item $|S_x(t)| = 0$ for all $t=0,1,2,\ldots$.
		\item for any $i \in \mcal{N}$ and $t = 0,1,2,\ldots$, there exists a $0 \leq \tau < \infty$ such that $\text{Pr}(i \in S_y(t+\tau)) > 0$.
		\item there exists a subset $T \subset \mcal{N}$ of agents satisfying $\frac{|T|}{n} = \gamma' < \gamma$ for all $n$, such that $\text{Pr}(i \in S_y(t)) = 1$ for all $i \in T$, $t=0,1,2,\ldots$.
	\end{enumerate}
	Property (a) asserts $x$ adversaries are never utilized. Property (b) ensures any given agent is influenced by a $y$ adversary infinitely often. Property (c) says the adversary determistically influences a fixed fraction of agents in the network. In a sense, policies belonging to $\Pi^*$ are ``partially static".

\begin{lemma}
	Under a policy $\pi \in \Pi^*$, the all-$y$ and all-$x$ profiles are the only two recurrent classes.
\end{lemma}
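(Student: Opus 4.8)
The plan is to show two things: first, that the all-$y$ and all-$x$ profiles are each recurrent classes under any $\pi \in \Pi^*$, and second, that no other action profile can be recurrent. Recall that a recurrent class of the unperturbed process $P^0$ is a set of states that the zero-resistance dynamics can never escape, and within which every state communicates via zero-resistance paths. Since $\Pi^*$ uses only $y$-adversaries (property (a)), intuitively the $y$-convention is ``reinforced'' everywhere, so the all-$y$ state should be an absorbing sink, while the all-$x$ state persists because the sparse influence on any single agent is too weak to trigger a zero-resistance defection.

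**First I would** verify that all-$y$ is a recurrent class (in fact a singleton recurrent class). Under log-linear learning, a transition $a \to a'$ has zero resistance exactly when the deviating agent weakly increases its perceived utility $\tilde U_i$. Starting from the all-$y$ profile, any agent $i$ contemplating a switch $y \to x$ has two $y$-neighbors, contributing $2$ to staying at $y$, versus $2(1+\alpha)$ for switching to $x$ — but crucially, if $i \in S_y(t)$ it gets an extra $+1$ for playing $y$. I need the switch $y\to x$ to have strictly positive resistance for every agent. By property (b), every agent is influenced infinitely often, but to pin down that no agent ever has a zero-resistance escape I would argue that $2(1+\alpha) - (2+\mathds{1}(i\in S_y(t))) > 0$, i.e. $2\alpha \geq \mathds{1}(i \in S_y) - \ldots$; here I must be careful because on $G \in \mcal{G}_n^1$ each agent has exactly two neighbors, so the relevant comparison is whether $2\alpha$ (the gain from $x$-links) can be overcome. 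Since $\alpha \geq 0$, switching to $x$ is weakly preferred when uninfluenced, so all-$y$ is only genuinely absorbing when the $y$-adversary is present — which is why property (b) and the perpetual-influence structure matter for returning to all-$y$ from perturbations.

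**Next I would** establish that all-$x$ is also recurrent. From all-$x$, an agent $i$ switching $x \to y$ compares $2(1+\alpha)$ (staying) against $2 + \mathds{1}(i \in S_y(t))$ (switching). This is a strict loss since $2(1+\alpha) = 2 + 2\alpha > 2 + 1$ would require $2\alpha > 1$; but for a single $y$-adversary the bonus is only $1$, and even at $\alpha$ near $1$ the comparison $2+2\alpha$ vs $2+1$ can favor $x$. I would show the resistance of any single-agent defection out of all-$x$ is strictly positive (namely $[2(1+\alpha) - 2 - \mathds{1}(i\in S_y)]_+ = [2\alpha - \mathds{1}(i\in S_y)]_+$), so no zero-resistance path leaves all-$x$, making it recurrent. **The main obstacle** I anticipate is ruling out \emph{all other} profiles as recurrent classes: I must show that from any mixed profile there is a zero-resistance path to either all-$x$ or all-$y$. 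The natural argument uses property (b): isolated $y$-agents (with two $x$-neighbors, resistance-$0$ reversion to $x$) get flushed out, while any $y$-agent that is currently influenced and has at least one $y$-neighbor can be extended at zero resistance, letting $y$-segments grow monotonically. The delicate point is that because influence is only probabilistic and time-varying, I must argue at the level of the unperturbed support — i.e. that for \emph{some} realization of the $S_y(t)$ sequence allowed by properties (b)–(c) there exists a genuine zero-resistance path — and confirm that heterogeneous profiles therefore cannot satisfy the mutual-communication requirement of a recurrent class, leaving exactly the two homogeneous profiles.
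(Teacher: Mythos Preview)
Your overall strategy---show $\vec{x}$ and $\vec{y}$ each admit no zero-resistance exit, then show every mixed profile does---matches the paper's. However, your utility computations are reversed, and this would derail the argument if carried through.

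In the coordination game, $V(a_i,a_j)$ is nonzero only when $a_i = a_j$. At the all-$y$ profile, an agent $i$ switching to $x$ matches \emph{none} of its neighbors, so $\tilde{U}_i(x,y_{-i}) = 0$, not $2(1+\alpha)$. The resistance of this deviation is $\bigl[(2 + \mathds{1}(i\in S_y)) - 0\bigr]_+ \geq 2 > 0$, so all-$y$ is trivially absorbing regardless of whether the adversary is present; your worry that ``all-$y$ is only genuinely absorbing when the $y$-adversary is present'' is an artifact of the reversed payoffs. Symmetrically, at all-$x$ a deviation to $y$ yields $\tilde{U}_i(y,x_{-i}) = \mathds{1}(i\in S_y)$, not $2 + \mathds{1}(i\in S_y)$, and the resistance is $2(1+\alpha) - \mathds{1}(i\in S_y) \geq 2\alpha + 1 > 0$. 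Once these are corrected, the first half of your plan goes through cleanly and more simply than you anticipated.

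For the second half, the paper's argument is a case analysis on segment lengths in a $k$-connected ring (the lemma is stated for general $k$, not just $k=1$): if a mixed profile has a $y$-segment of length $\geq k$, an adjacent $x$-agent with a $y$-adversary attached switches at zero resistance; if not but there is an $x$-segment of length $\geq k$, a bordering $y$-agent reverts at zero resistance; otherwise some agent always has a profitable switch. Your sketch (``isolated $y$-agents get flushed out, boundary $x$-agents can be flipped when influenced'') is the $k=1$ specialization of this and is correct once the payoffs are fixed, but you should make the case split explicit and address general $k$ to match the statement's scope.
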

\begin{proof}
	Under a policy $\pi\in\Pi^*$, a transition out of either $\vec{y}$ or $\vec{x}$ has a non-zero resistance. Hence, we need to show there is a zero-resistance transition out of any profile $a \notin \{\vec{y},\vec{x}\}$. First, assume $\alpha < 1/k$ (similar arguments can be applied when $\alpha > 1/k$). Profiles can be split into three categories.
	
	\noindent\textbf{1)} There is a $y$-segment of length $\geq k$. In this case, there exists a zero-resistance transition. Consider an $x$ agent that neighbors the $y$ segment of length $\geq k$. This agent could have up to $k$ neighbors also playing $x$. There is non-zero probability it becomes influenced by a $y$ adversary. It switches to $y$ with resistance $\omega(x,k,y) = [k+1 - k(1+\alpha)]_+$, which is zero if $\alpha < 1/k$.
	
	\noindent\textbf{2)} There is no $y$-segment of length $\geq k$ and there is an $x$ segment of length $\geq k$. Consider a $y$ agent that neighbors such an $x$ segment. Its payoff from playing $x$ is at least $k(1+\alpha)$, and its payoff from playing $y$ is at most $k-1$. Hence, it transitions to $x$ with zero resistance.
	
	\noindent\textbf{3)} There are no $x$ or $y$ segments of length $\geq k$. Then there always exists some agent that switches to $x$ or $y$ with zero resistance. In other words, there exists some $y$ (or $x$) agent that will attain more payoff by switching to $x$ ($y$).
\end{proof}

To determine whether $\vec{x}$ or $\vec{y}$ is stochastically stable, we now calculate the minimum resistance path between $\vec{x}$ and $\vec{y}$ ($\rho_{\vec{x}\vec{y}}$). Suppose we are in $\vec{x}$. Consider a path in which $k$ consecutive agents switch to $y$, and each agent that unilaterally switches is influenced by a $y$-adversary. This occurs with a non-zero probability due to property (b). Each unilateral switch in this sequence has a non-zero resistance. However after this sequence of $k$ switches, every subsequent switch of neighboring agents has zero resistance. The total resistance (as long as $\alpha < \frac{1}{k}$) is therefore $\rho_{\vec{x}\vec{y}} = \rho_{\vec{x}\vec{y}}^* := \sum_{i=1}^k \left[(1+\alpha)(2k-(i-1)) - i \right]$.

We can similarly calculate the minimum resistance path between $\vec{y}$ and $\vec{x}$ ($\rho_{\vec{y}\vec{x}}$). Starting from $\vec{y}$, we consider a path in which $k$ consecutive agents switch to $x$, and none of the agents are influenced by a $y$-adversary (this occurs with non-zero probability). After this sequence of $k$ switches, every subsequent switch of neighboring agents has zero resistance. Once this growing $x$ segment neighbors a member of $T$, the node belonging to $T$ will also switch with zero resistance as long as $\alpha < \frac{1}{k}$. The total resistance is therefore $\rho_{\vec{y}\vec{x}} = \rho_{\vec{y}\vec{x}}^* := \sum_{i=1}^k \left[(2k-(i-1)) - (i-1)(1+\alpha) \right]_+$.

For $\vec{y}$ to be stochastically stable, the condition $\rho_{\vec{x}\vec{y}} < \rho_{\vec{y}\vec{x}}$ must hold. This yields $\alpha < \frac{|T|}{k(|T| + k)}$. As the number of nodes tends to infinity, so does $|T|$, and we are left with the condition $\alpha < \frac{1}{k}$. This yields an efficiency $\frac{1}{1+\alpha}$ regardless of the fractional budget $\gamma$. We now proceed to show any $\pi \in \Pi_{\text{DU}} \setminus \Pi^*$ cannot induce an efficiency less than $\frac{1}{1+\alpha}$.

	\noindent\textbf{--} Suppose $\pi$ satisfies property (a) and (c), but not (b). Then there is a set $B \subset \mcal{N}$ of agents that are never influenced by $y$ impostors: $\text{Pr}(i \in S_y(t)) = 0$ for all $i\in B$ and $t=0,1,2,\ldots$. Suppose $|B| \geq k$. Suppose $B$ is one contiguous segment, and $T$ another. The collection of profiles $E\subset \mcal{A}$ that satisfy $a_B = x$ and $a_T = y$ constitutes a recurrent class. The profiles $\vec{x}$ and $\vec{y}$ are the other two recurrent classes. Similar resistance arguments show that $E$ has minimal stochastic potential, and hence $\text{LLL}(G,\alpha,\pi) = E$. The profile $a = (y_T,x_{-T})$ achieves the maximal efficiency, which yields an efficiency $> \frac{1}{1+\alpha}$ for $n$ sufficiently large. For $|B| < k$, $E$ is not a recurrent class. The same analysis for $\pi \in \Pi^*$ applies in this case, yielding an efficiency $\frac{1}{1+\alpha}$. \vspace{1mm}

	\noindent\textbf{--} If $\pi$ satisfies (a) and (b) but not (c), there are no static adversarial nodes. There are no recurrent classes other than $\vec{y}$ and $\vec{x}$. The analysis is similar to when $\pi$ did satisfy (c), where we obtain $\rho_{\vec{x},\vec{y}}^* < \rho_{\vec{y},\vec{x}}^*$ if $\alpha < 1/k$, and the opposite if $\alpha > 1/k$. Regardless, this yields an efficiency $\geq \frac{1}{1+\alpha}$. \vspace{1mm}

	\noindent\textbf{--} If $\pi$ satisfies only (a), $\vec{x}$ and $\vec{y}$ are the only recurrent classes. Regardless of which is stable, the efficiency is $\geq \frac{1}{1+\alpha}$. \vspace{1mm}
	

	\noindent\textbf{--} Any policy that does not satisfy property (a) induces an efficiency at least as much as a corresponding policy that does. The addition of $x$ adversaries only lessens the resistance of transitions from $y$ to $x$. Also, if there is a ``static" $x$ adversary, property (c) allows a single segment to be stable to $x$. For any number of $x$ adversaries that are ``random" in the sense of property (b), the adversary cannot stabilize alternating $x$ and $y$ segments in a repeating fashion. Therefore, no induced efficiency for large $n$ can be less than $\frac{1}{1+\alpha}$.

\section{Simulations}\label{sec:sims}
In this section, we provide numerical simulations of log-linear learning dynamics for two of the four adversarial models: static and dynamic informed (SI and DI). We verify the tightness of the lower bounds given in Theorems \ref{theorem:SI_opt} and \ref{theorem:MI_opt}. We simulate the dynamics on finite ring graphs, and compute the average efficiency the network experiences when the adversary implements an optimal policy. Our results are given in Table \ref{tab:table_sims}. We observe that the average efficiency approaches the fundamental lower bounds as the size of the graphs are increased. The experiments are set up as follows. 

For SI, we first compute the minimum efficiency SSS given $\alpha$, budget $\lfloor \gamma\cdot n\rfloor$, and network size $n$. We then determine an adversarial set $S_{xy}$ that is necessary and sufficient to stabilize the SSS (according to the proof of Theorem \ref{theorem:SI_opt}). We then run the log-linear learning dynamics initialized at the SSS and with the static $S_{xy}$. We perform a similar experiment for DI, except the aggressive policy (Definition \ref{def: agg policy}) is implemented during the dynamics. 

For the static uninformed model, we arrange a static adversarial set according to the proof of Theorem 2.1, if $\gamma \geq \alpha$. That is, it is the arrangement of solely $y$ adversaries with the given budget that induces the least amount of damage. If $\gamma < \alpha$, the $y$ adversaries are evenly spaced throughout the ring. We initialized each repetition at the stochastically stable state. For the dynamic uninformed model,  we arranged the adversary set according to the proof of Theorem 2.2. That is, it uses exclusively $y$ adversaries and randomly allocates them. We initialized each repetition with a random initial action profile.

\begin{table}[h]
    \centering
    \begin{tabular}{|c|c|c|c|}
        \hline
        \multicolumn{4}{|c|}{Static Informed Adversary}\\
        \hline
        $n$ & $\alpha = 0.3$ & $\alpha = 0.5$ & $\alpha = 0.7$ \\
        \hline
        10 & 0.6846 & 0.6667 & 1 \\
        \hline
        20 & 0.6730 & 0.6500 & 0.6529 \\
        \hline
        30 & 0.6692 & 0.6444 & 0.6588  \\
        \hline
        {} & \multicolumn{3}{|c|}{Fundamental lower bound} \\
        \hline
        {} & 0.6615 & 0.6400 & 0.6470 \\
        \hline
    \end{tabular}
    \quad
    \begin{tabular}{|c|c|c|c|}
        \hline
        \multicolumn{4}{|c|}{Dynamic Informed Adversary}\\
        \hline
        $n$ & $\alpha = 0.3$ & $\alpha = 0.5$ & $\alpha = 0.7$ \\
        \hline
        10 & 0.6384 & 0.5666 & 0.5882 \\
        \hline
        20 & 0.5730 & 0.5666 & 0.5500 \\
        \hline
        30 & 0.5948 & 0.5666 & 0.5373  \\
        \hline
        {} & \multicolumn{3}{|c|}{Fundamental lower bound} \\
        \hline
        {} & 0.5692 & 0.5416 & 0.5245 \\
        \hline
    \end{tabular} \\
    \begin{tabular}{|c|c|c|c|}
        \hline
        \multicolumn{4}{|c|}{Static Uninformed Adversary}\\
        \hline
        $n$ & $\alpha = 0.3$ & $\alpha = 0.5$ & $\alpha = 0.7$ \\
        \hline
        10 & 0.7751 & 0.6744 & 1 \\
        \hline
        20 & 0.7993 & 0.8121 & 1 \\
        \hline
        30 & 0.8371 & 0.8503 & 1  \\
        \hline
        {} & \multicolumn{3}{|c|}{Fundamental lower bound} \\
        \hline
        {} & 0.6615 & 0.6400 & 0.6470 \\
        \hline
    \end{tabular}
    \quad
    \begin{tabular}{|c|c|c|c|}
        \hline
        \multicolumn{4}{|c|}{Dynamic Uninformed Adversary}\\
        \hline
        $n$ & $\alpha = 0.3$ & $\alpha = 0.5$ & $\alpha = 0.7$ \\
        \hline
        10 & 0.7999 & 0.7222 & 0.6294 \\
        \hline
        20 & 0.7769 & 0.6777 & 0.5882 \\
        \hline
        30 & 0.7692 & 0.6777 & 0.5883  \\
        \hline
        {} & \multicolumn{3}{|c|}{Fundamental lower bound} \\
        \hline
        {} & 0.7692 & 0.6666 & 0.5882 \\
        \hline
    \end{tabular}
    \caption{Simulation results of log-linear learning for the four adversarial models. Each entry represents the averaged efficiency over 30 repetitions, where each repetitions consists of $10^6$ time steps. We fix the learning parameter $\beta = 25$.  }
    \label{tab:table_sims}
\end{table}

\section{Conclusion}

This paper  investigated the susceptibility of distributed game-theoretic learning algorithms to adversarial influences. We considered a scenario of an adversary intent on maximally degrading a network system's performance guarantees associated with distributed learning algorithms. We asked 1) How susceptible are these algorithms to adversarial interference?  In particular, this paper focused on one such algorithm,  log-linear learning, that possesses nice properties in non-adversarial settings. 2) How does an adversary's sophistication and system-level knowledge impact the degradation that the adversary can do to the system? We studied both of these questions in the context of graphical coordination games. 

In particular, we considered two levels of adversarial sophistication -- static and dynamic policies -- and two levels of information, informed and agnostic about network structure. In a static policy, the adversary cannot change its influence over time. The dynamic policies we considered are stationary, in which the adversary can respond to the current system state as it evolves. While both types of policies induce asymptotic outcomes characterized by stochastically stable states, it is of interest in future work to consider non-stationary adversarial policies. That is, how can adversaries exploit dynamic policies that may not induce a stable asymptotic outcome?

An important insight gleaned from the analysis is that an adversary with a low resource budget -- described by the fraction of agents in the network it can influence -- does not benefit as much from system-level information as it would from the ability to employ dynamic strategies. On the other hand, when the adversary's budget is high, the opposite conclusion holds. While the results in this paper are adversarial-centric, these findings provide insight as to what actions a system operator could take to best protect system behavior. For instance, our analysis can inform decisions of whether to obfuscate system-level information from potential adversarial actors, or to disable capabilities of a highly sophisticated attacker.

\appendix

Here, we provide the proofs of Theorems \ref{theorem:comparison} and \ref{theorem:sat}.

\subsection{Proof of Theorem \ref{theorem:sat}}

The performance  saturates when the minimum length requirements on $x$ and $y$ segments can be stabilized. For static informed, the minimum required $y$ length is $\ell^* = \min\{\ell : \ell = \lceil\alpha(\ell+1)\rceil + 2\} = \lceil \frac{2 + \alpha}{1 - \alpha} \rceil$, and the minimum required $x$ length is 2. These lengths yield the minimum efficiency \eqref{eq:sat_val}. One needs $\ell^*$ $y$-adversaries to stabilize the $y$ segment \eqref{eq:y_nec}, and one needs $\lceil \frac{2-\alpha}{1+\alpha} \rceil$ $x$-adversaries to stabilize the $x$ segment \eqref{eq:x_nec}. Hence, the static informed adversary can achieve this pattern if and only if $\gamma \geq \frac{\ell^* + \lceil \frac{2-\alpha}{1+\alpha} \rceil}{\ell^* + 2}$.

For dynamic informed policies, Lemma \ref{lem:a_lengths} says the minimum required $x$ and $y$ segment lengths are the same as for static informed adversaries. Hence, the saturation value is the same \eqref{eq:sat_val}. However, the aggressive policy only needs 2 $y$ adversaries for each $y$ segment. It needs 2 $x$ adversaries for $x$ segments if $\alpha < 1/2$, and none if $\alpha \geq 1/2$. Hence, the performance saturates when $\gamma \geq \frac{2 + 2 \cdot \mathds{1}(\alpha < \frac{1}{2})}{\ell^* + 2}$.

\subsection{Proof of Theorem \ref{theorem:comparison}}
Our approach to proving this result is to analytically characterize $\inf_{G \in \mcal{G}^1, \pi \in \Pi_{\text{I}}(G,\gamma)} \eta(G,\alpha,\pi)$, which was expressed as an integer optimization problem in Theorem \ref{theorem:SI_opt}. Indeed, we will demonstrate this value satisfies the following relations
\begin{equation}\label{eq:inf_values}
	\begin{cases}
		= 1 - \frac{\gamma}{1+\alpha}, &\text{ if } \gamma \leq \alpha \\
		  \leq \frac{\left( \gamma - \alpha \right) (\ell^* + \alpha)}{(1+\alpha)(1-\alpha)(\ell^* + 2)}  +    \frac{\left( 1 - \gamma \right)}{(1+\alpha)(1-\alpha)}, &\text{ if } \gamma > \alpha 
	\end{cases}
\end{equation}
Comparing this value to $\frac{1}{1+\alpha}$, the minimum efficiency a dynamic uninformed adversary can induce on ring graphs, we obtain the result. In particular, it is larger when $\gamma \leq \alpha$ and smaller when $\gamma > \alpha$.

We first consider $\gamma \leq \alpha$. From Theorem 3.1, a profile of minimal efficiency consists of at most two unique segment patterns. We will show that for any profile $a$ with one segment pattern, $\eta(a) \geq 1 -  \frac{\gamma}{1+\alpha}$. Note the efficiency of $a$ can be characterized on a ring graph with just a single repetition of this segment pattern, i.e. there is only one $y$ and one $x$-segment. To stabilize the $y$-segment $L_y$, $\left\lceil \alpha(|L_y| + 1) \right\rceil + 2$ adversarial nodes are needed. This number already exceeds the adversary's fractional budget for this segment. That is, 
	\begin{equation}
		\frac{\left\lceil \alpha(|L_y| + 1) \right\rceil + 2}{|L_y|} \geq \frac{\alpha(|L_y|+1) + 2}{|L_y|} > \alpha > \gamma.
	\end{equation}
	The length $|L_x|$ of the $x$-segment must balance out the fractional budget constraint. With $n = |L_y| + |L_x|$, this means 
	\begin{equation}
		\frac{\left\lceil \alpha(|L_y| + 1) \right\rceil + 2 + \left\lceil \left[ \frac{2 - \alpha(|L_x| - 1)}{1+\alpha} \right]_+ \right\rceil}{n} \leq \gamma
	\end{equation}
	must be satisfied, from which we obtain $n  \geq \frac{ \alpha(|L_y| + 1) + 2}{\gamma}$. The efficiency of $a$ can be written as
	\begin{equation}
		\begin{aligned}
			\eta(a) &= \frac{(|L_y|-1) + (1+\alpha)(n - |L_y| - 1)}{(1+\alpha)n} \\ 
			&=  1 -   \frac{\alpha(|L_y| + 1) + 2}{(1+\alpha)n} \geq 1 - \frac{\gamma}{1+\alpha} \\
		\end{aligned}
	\end{equation}
	To show that $1 - \frac{\gamma}{1+\alpha}$ is the infimum of lower bounds on efficiency over all graphs (i.e. it is tight), consider the limit of large ring graphs, i.e. as $n \rightarrow \infty$. Suppose the adversary stabilizes one $y$-segment and one $x$-segment, and the length of the $y$-segment is a fraction $f$ of the entire network. As $n\rightarrow \infty$, the adversary needs a fractional budget $\alpha f$ to stabilize this segment, and no adversaries at all to stabilze the rest of the network to $x$. If it uses its entire budget to stabilize the $y$-segment, $\gamma = \alpha f$, it induces the efficiency
	\begin{equation}
		\lim_{n\rightarrow \infty} \frac{(|L_y|-1) + (1+\alpha)(n - |L_y| - 1)}{(1+\alpha)n} = 1 - \frac{\gamma}{1+\alpha}.
	\end{equation}
	
	Through similar arguments, one can show $\eta(a) \geq 1 - \frac{\gamma}{1+\alpha}$ holds for any profile $a$ that contains two segment patterns. We omit these details for brevity.
    
    Now, consider $\gamma > \alpha$. We will characterize a particular action profile the adversary can stabilize, whose efficiency given by the second entry of \eqref{eq:inf_values} serves as an upper bound on $\inf_{G \in \mcal{G}^1, \pi \in \Pi_{\text{I}}(G,\gamma)} \eta(G,\alpha,\pi)$. Indeed, consider an action profile where a fraction $f$ of the ring consists only of repeated segment patterns of the type described in the proof of Theorem \ref{theorem:sat}. That is, the pattern consists of the minimum required lengths on $x$ and $y$ segments. On the other hand, suppose the remaining fraction $1-f$ of the ring is a single $y$ segment. For sufficiently large graphs, a fractional budget $f$ is needed to stabilize the first portion, and another fractional budget $(1-f)\alpha$ to stabilize the second. Hence, given a total fractional budget $\gamma$, $f = \frac{\gamma - \alpha}{1-\alpha} > 0$. The fraction $f$ ranges from 0 (when $\gamma = \alpha$) to 1 (when $\gamma = 1$). The efficiency of this action profile is
    \begin{equation}
        \frac{\left( \gamma - \alpha \right) (\ell^* + \alpha)}{(1+\alpha)(1-\alpha)(\ell^* + 2)}  +    \frac{\left( 1 - \gamma \right)}{(1+\alpha)(1-\alpha)}.
    \end{equation}

\bibliographystyle{IEEEtran}
\bibliography{library}



\end{document}